\numberwithin{equation}{section}
\newtheorem{thm}{Theorem}[section]
\newtheorem{prop}[thm]{Proposition}
\newtheorem{cor}[thm]{Corollary}
\newtheorem{lem}[thm]{Lemma}
\newcommand{\eqa}{\begin{eqnarray}}
\newcommand{\eeqa}{\end{eqnarray}}
\newcommand{\beq}{\begin{equation}}
\newcommand{\eeq}{\end{equation}}
\newcommand{\nn}{\nonumber}
\newcommand{\p}{\partial}
\def \la {\langle}
\def \ra{\rangle}
 \def\res{\mathop{\text{\rm Res}}}
\def \dsum{\displaystyle\sum}
\newcommand\Z{\mathbb{Z}}
\newcommand\om{\omega}
\newcommand\dt{\delta}
\def \la {\langle} \def \ra{\rangle}
\begin{document}

\title[]
{Infinite-dimensional Frobenius Manifolds and extensions of Genus-Zero Whitham Hierarchies }

\author[]{Shilin Ma}

\address[]{Shilin Ma,  School of Mathematics and Statistics,  Xi'an Jiaotong University, Xi'an 710049,  P. R.
	China,}
\email{mashilin@xjtu.edu.cn}
\author[]{Chao-Zhong Wu}

\address[]{Chao-Zhong Wu,  School of Mathematics, Sun Yat-Sen University, Guangzhou 510275, P. R. China,}
\email{wuchaozhong@sysu.edu.cn}

\author[]{Dafeng Zuo}

\address[]{Dafeng Zuo,  School of Mathematical Sciences, University of Science and Technology of China, Hefei 230026, P. R. China,}
\email{dfzuo@ustc.edu.cn}

\date{\today}

\begin{abstract}
In this paper we construct a class of infinite-dimensional Frobenius
manifolds in the spaces of pairs of meromorphic functions defined on certain regions of the Riemann sphere. For such Frobenius manifolds, we obtain their principal hierarchies and show them to be extensions of the genus-zero Whitham hierarchies.

\vskip 2ex

\end{abstract}

\maketitle 


\section{Introduction}

\subsection{Background}	

The concept of Frobenius manifold, introduced by Dubrovin \cite{dub1998}, provides a geometric framework for the Witten-Dijkgraaf-Verlinde-Verlinde (WDVV) equations in two-dimensional topological field theory. Briefly, a Frobenius manifold structure consists of Frobenius algebras on the tangent spaces that depend smoothly on the base points, a flat unit vector field with respect to the Levi-Civita connection induced by a flat metric, a symmetric $3$-tensor, a locally existing potential, and a specific Euler vector field to characterize quasi-homogeneous properties. This notion has found significant applications in various branches of mathematical physics, such as Gromov-Witten theory, singularity theory, and integrable systems; see, for example, \cite{manin1999frobenius,hertling2002frobenius,dubrovin2001normal,givental2005simple} and references therein. In particular, a (finite-dimensional) Frobenius manifold is naturally associated with a Hamiltonian integrable hierarchy, referred to as the principal hierarchy \cite{dub1998}. This principal hierarchy, composed of a system of $1+1$ dispersionless equations, can be uniquely deformed into the Dubrovin-Zhang hierarchy under certain axioms \cite{dubrovin2001normal}.

In recent years, studying the relationship between integrable hierarchies of $2+1$ equations and infinite-dimensional Frobenius manifolds has become an intriguing topic. The first example of an infinite-dimensional Frobenius manifold was constructed by Carlet, Dubrovin, and Mertens \cite{carlet2011infinite} on a space of pairs of certain meromorphic functions defined on a neighborhood of the unit circle.
Following this approach, additional examples of infinite-dimensional Frobenius manifolds have been discovered, underlying integrable hierarchies of $2+1$ equations, such as the two-component Kadomtsev-Petviashvili (KP) hierarchy of Type B \cite{wu2012class}, the Toda lattice hierarchy \cite{wu2014infinite}, and an extended KP hierarchy \cite{ma2021infinite}. The principal hierarchies of these infinite-dimensional Frobenius manifolds were investigated in \cite{carlet2015principal,wu2012class}.
It is worth noting that infinite-dimensional Frobenius manifolds can also be constructed via alternative methods, such as using Schwartz functions \cite{raimondo2012frobenius} or classical $r$-matrices \cite{szablikowski2015classical}.

In this paper, we study a class of infinite-dimensional Frobenius manifolds and their principal hierarchies associated with the genus-zero Whitham hierarchies, following the framework introduced in \cite{carlet2011infinite,carlet2015principal}. Recall that Krichever \cite{krichever1988method,krichever1994tau} introduced a universal Whitham hierarchy on the moduli space of Riemann surfaces of arbitrary genus with $m+1$ marked points. He also defined the tau function for this hierarchy and explored its applications in topological gravity. The universal Whitham hierarchies have been applied to Seiberg-Witten theory \cite{gorsky1995integrability,itoyama1996integrability,itoyama1997prepotential,krichever1997integrable,gorsky1998rg}, Laplacian growth \cite{mineev2001whitham,krichever2004laplacian,zabrodin2005whitham,abanov2009multi}, and conformal mapping \cite{wiegmann2000conformal,martinez2006genus}. In the genus-zero case, consider the Riemann sphere with the infinity point and $m$ movable points $\varphi_i$ marked,  where $ i=1,2,\dots m$. This setup corresponds to the following formal Laurent series in $z$:

\begin{equation}\label{whihier}
	\lambda_{i}(z)=\left\{
\begin{array}{cl}
  z+\sum_{p \geq 1} v_{0,p}z^{-p},  & i=0; \\
 \sum_{p \geq-1} {v}_{i,p}(z-\varphi_{i})^p,\quad & i=1,\cdots,m,
\end{array}
\right.
\end{equation}
where $v_{i,p}$ and $\varphi_i$ are unknown functions. The genus-zero Whitham hierarchy is composed of a system of evolutionary equations as follows
\begin{align}
	& 
\frac{\partial \lambda_{i}(z)}{\partial s^{0,k}}=\{\left(\lambda_{0}(z)^k\right)_{\infty,\,\ge 0}, \lambda_{i}(z)\}, \label{whi1}\\
	& 
\frac{\partial \lambda_{i}(z)}{\partial s^{j,k}}=\{-(\lambda_{j}(z)^k)_{\varphi_{j},\,< 0}, \lambda_{i}(z)\},\label{whi2}
\\
& 
\frac{\partial \lambda_{i}(z)}{\partial s^{j,0}}=\{\log (z-\varphi_{j}), \lambda_{i}(z)\}, \label{whi3}
\end{align}
with $i\in\{0,1,\dots,m\}$, $j\in\{1,\cdots,m\}$ and $k\in\mathbb{Z}_{>0}$. Here we use the truncations of Laurent series:
\[
\left(\sum_{p} f_p z^p \right)_{\infty,\,\ge0}=\sum_{p\ge0} f_p z^p, \quad
\left(\sum_{p} f_p (z-\varphi_i)^p \right)_{\varphi_i,\,<0}=\sum_{p<0} f_p (z-\varphi_i)^p,
\]
and the Poisson bracket $\{\ ,\ \}$ means
\begin{equation}\label{bracket}
	\{f, g\}:=\frac{\partial f}{\partial z} \frac{\partial g}{\partial x}-\frac{\partial g}{\partial z} \frac{\partial f}{\partial x}.
\end{equation}

In particular, when  $m = 0$, this hierarchy reduces to the dispersionless KP hierarchy.  Whenever $m = 1$, we constructed a series of infinite-dimensional Frobenius manifolds in \cite{ma2021infinite}, where the principal hierarchies remain unexplored. Here, we generalize the construction in \cite{ma2021infinite} to arbitrary  $m \geq 1$ and further investigate the principal hierarchies. This generalization is highly nontrivial for several reasons.

Firstly, unlike previous works \cite{carlet2011infinite,wu2012class,wu2014infinite,ma2021infinite}, which considered spaces of pairs of meromorphic functions on a neighborhood of the unit circle, our current framework must account for meromorphic functions defined near $m$  distinct circles on the Riemann sphere. This introduces complexities in truncating these functions appropriately.

Secondly, while the bi-Hamiltonian structure of the Whitham hierarchy \eqref{whi1}-\eqref{whi3} is established for  $m = 1$, it remains unknown for  $m \geq 2$ (cf. \cite{wu2016extension}). To clarify the relationship between this hierarchy and our infinite-dimensional Frobenius manifolds, we must derive the principal hierarchies for the latter.

Thirdly, when formulating the principal hierarchy, we encounter challenges due to the nontrivial  $R$-matrix of the infinite-dimensional Frobenius manifold, particularly in resonant cases.

Our motivation also stems from the pursuit of a dispersive version of the genus-zero Whitham hierarchies. Szablikowski and Blaszak \cite{szablikowski2008dispersionful} made early progress in this direction using Lax equations for formal pseudo-differential operators. For  $m = 1$, a dispersive version was proposed in \cite{geng2024kp}, where tau functions linked to Baker-Akhiezer functions of the full hierarchy were defined. Inspired by Dubrovin and Zhang \cite{dubrovin2001normal}, we anticipate that deforming the principal hierarchy of the infinite-dimensional Frobenius manifolds constructed here may yield novel dispersive hierarchies.

\subsection{Main results}
\label{sec-M}

Let \( D_1, \ldots, D_m \) be pairwise non-intersecting closed disks in the Riemann sphere \( \mathbb{P}^{1}=\mathbb{C}\cup\{\infty\} \) ,none of which contain $\infty$. Denote by $\gamma_i=\partial D_i$ the  boundary of $D_i$, and let $\gamma=\bigcup_{i=1}^m \gamma_i$. The $m$-circle $\gamma$ divides the Riemann sphere into two open sets: the interior open set \( \mathbf{D}^{int}=\bigcup_{i=1}^m (D_i\setminus\gamma_i) \) and the exterior domain \( \mathbf{D}^{ext}=\mathbb{P}^1\setminus \bigcup_{i=1}^m D_i  \).

Let \( \mathcal{H} \) be the space of germs of functions that are holomorphic in some neighborhood of \( \gamma \). For any \( f(z) \in \mathcal{H} \), its positive part and negative part are defined as follows:
\begin{align*}
&f(z)_+ := \frac{1}{2\pi \mathrm{i}} \sum_{s=1}^{m} \oint_{\gamma_s} \frac{f(p)}{p-z} \, dp, \quad z \in {\mathbf{D}}^{int};  \\
&f(z)_- :=\left\{
\begin{array}{cl} \displaystyle
   -\frac{1}{2\pi \mathrm{i}} \sum_{s=1}^{m} \oint_{\gamma_s} \frac{f(p)}{p-z} \, dp, & \quad z \in \mathbf{D}^{ext}\setminus\{\infty\}; \\
  0, & z=\infty.
\end{array}
 \right.
\end{align*}
Both \( f(z)_+ \) and \( f(z)_- \) can be analytically extended beyond the boundary $\gamma$, such that in a neighborhood of $\gamma$ one has
\[ f(z) = f(z)_+ + f(z)_- .
\]
For example, the set $\mathcal{H}$ contains the characteristic functions \( \mathbf{1}_{\gamma_i}(z) \) given by
\begin{equation}\label{1fun}
\mathbf{1}_{\gamma_i}(z)|_{\gamma_j} = \delta_{ij}, \quad  i, j \in\{ 1, \ldots, m\}.
\end{equation}
It is a subtle fact that, given $f(z)\in\mathcal{H}$, although  $f(z)\mathbf{1}_{\gamma_{j}}(z)$ is zero on
$\gamma_{j'}$ for $j'\ne j$, the projections $(f(z)\mathbf{1}_{\gamma_{j}}(z))_{\pm}$ may be not.



 Let \( n_0, \ldots, n_m \) be positive integers, and  \( d_1, \ldots, d_m \)  be non-zero integers. Define $\mathcal{M}=\mathcal{M}_{n_{0},n_{1},\cdots,n_{m}}^{d_{1},d_{2},\cdots,d_{m}}$ to be a subset of \( \mathcal{H} \times \mathcal{H} \), consisting of pairs \( (a(z), \hat{a}(z)) \), where \( a(z) \) extends to a meromorphic function on \( \mathbf{D}^{ext}  \) with a single pole \( \{ \infty \} \), and \( \hat{a}(z) \) extends to a meromorphic function on \( \mathbf{D}^{int} \) with poles \( \varphi_i\in (D_i\setminus\gamma_i)  \) for \( i = 1, \ldots, m \).
 More exactly, suppose that  \( a(z) \) and \( \hat{a}(z) \) have expansions:
\begin{align*}
a(z)=z^{n_{0}}+a_{n_{0}-2}z^{n_{0}-2}+a_{n_{0}-3}z^{n_{0}-3}+\cdots,&\quad z\to \infty; \\	
\hat{a}(z)=\hat{a}_{j,-n_{j}}(z-\varphi_{j})^{-n_{j}}+\hat{a}_{j,-n_{j}+1} (z-\varphi_{j})^{-n_{j}+1}+\cdots,& \quad z\to \varphi_{j},
\end{align*}
with $j\in\{1,2,\cdots,m\}$,
and that the following conditions are satisfied:
\begin{enumerate}
	\item[(C1)] $\hat{a}_{j,-n_{j}}\ne 0$ for all $j\in\{1,2,\cdots,m\}$;
	\item[(C2)] Letting
\begin{equation}\label{zeta0}
 \zeta(z) = a(z) - \hat{a}(z),
\end{equation}
for each \( j\in\{1,2,\cdots,m\} \) there exists a holomorphic function \( w_j(z) \) defined in a neighborhood of \( \gamma_j \) such that
	\[
	\zeta(z)|_{\gamma_j} = w_j(z)^{d_j}|_{\gamma_j}, \quad w'(z)|_{\gamma_j} \neq 0,
	\]
	and \( w_j(z) \) maps \( \gamma_j \) to a path around the origin with the winding number \( 1 \).
\end{enumerate}
Such a subset can be viewed as an infinite-dimensional manifold, with coordinates given by $\{a_{j}\}_{j\le n_{0}-2}\cup\{\hat{a}_{1,j}\}_{j\ge-n_{1}}\cup\cdots \cup \{\hat{a}_{m,j}\}_{j\ge-n_{m}}\cup \{\varphi_{j}\}_{j=1}^{m}$. We then have 

\begin{thm}\label{mainthm1}
On $\mathcal{M}$ there exists an infinite-dimensional Frobenius manifold structure \((\mathcal{M}, \circ, e, \eta, E)\) of charge $d=1-\frac{2}{n_{0}}$, where the flat metric \(\eta\), the invariant multiplication  \(\circ\) of vector fields with respect to $\eta$, the unity vector field \(e\) and the Euler vector field \(E\) are given by \eqref{whimetric}, \eqref{whimul}, \eqref{whiid} and \eqref{E} below, respectively.
\end{thm}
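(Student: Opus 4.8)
The structure maps $\eta$, $\circ$, $e$, $E$ being prescribed by \eqref{whimetric}, \eqref{whimul}, \eqref{whiid} and \eqref{E}, the plan is to verify, one axiom at a time, that $(\mathcal{M},\circ,e,\eta,E)$ is a Frobenius manifold: that each tangent space is a commutative, associative, unital Frobenius algebra for $\circ$ and $\eta$; that $\eta$ is a flat, symmetric, nondegenerate metric; that the unit field $e$ is $\eta$-flat; that the symmetric $3$-tensor $c(X,Y,Z)=\eta(X\circ Y,Z)$ admits a local potential; and that $E$ is an Euler field of charge $d=1-2/n_0$. The organizing object throughout is the superpotential $\zeta(z)=a(z)-\hat a(z)$ of \eqref{zeta0}, together with the local uniformizers $w_j$ of condition (C2). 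I would represent a tangent vector at $(a,\hat a)$ by the pair of admissible variations $(\partial a,\partial\hat a)$, equivalently by the germ $\partial\zeta\in\mathcal{H}$ together with the variations $\partial\varphi_j$ of the marked points, and record the natural frame $\{\partial/\partial a_k\}\cup\{\partial/\partial\hat a_{j,k}\}\cup\{\partial/\partial\varphi_j\}$ dual to the coefficient coordinates; all structure constants are then contour integrals around $\gamma=\bigcup_j\gamma_j$.

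At the algebra level, commutativity of $\circ$ is immediate from the symmetry of \eqref{whimul}, the unit identity $e\circ X=X$ follows by substituting \eqref{whiid}, and the invariance $\eta(X\circ Y,Z)=\eta(X,Y\circ Z)$ holds because both sides collapse to the same totally symmetric contour integral of $\partial_1\zeta\,\partial_2\zeta\,\partial_3\zeta$ weighted by $1/\zeta'$. The substantive point is associativity. The strategy is to realize $\circ$ as a residue-type convolution of the germs $\partial_i\zeta$ against $1/\zeta'$, which would reduce associativity to that of ordinary pointwise multiplication of germs in $\mathcal{H}$, \emph{provided} the product may be commuted past the projections $(\cdot)_{\pm}$; this is exactly where the multi-circle difficulty enters, and is discussed below.

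Flatness of $\eta$ I would prove by producing explicit flat coordinates in which the Gram matrix is constant. Following the genus-zero Hurwitz template, these are the coordinates built from the periods and residues of densities of $\zeta^{k/n}$-type against $\zeta'\,dz$ near $\infty$ and near each $\varphi_j$, adapted to the $m$ circles; once $\eta$ is block-constant the Levi-Civita connection vanishes, so flatness is automatic and $e$, being a constant coordinate field, is covariantly constant. The existence of a potential, i.e.\ the total symmetry of $\nabla c$, I would then verify by differentiating the residue formula for $c$ and using that $\partial$ commutes with $(\cdot)_{\pm}$ up to boundary contributions that cancel after symmetrization, the infinite-dimensional analogue of the flat-coordinate identity $\partial_a c_{bc}=\partial_b c_{ac}$. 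Finally, for quasi-homogeneity I would check that $E$ acts on $\zeta$ as a scaling derivation, that $\nabla\nabla E=0$, and that $\mathrm{Lie}_E\eta=(2-d)\eta$ with $\mathrm{Lie}_E c$ scaling by the matching weight; the value $d=1-2/n_0$ is pinned by the leading behaviour $a(z)=z^{n_0}+\cdots$ at $\infty$, which fixes the conformal weight $1/n_0$ of $z$ and hence of the distinguished coordinate.

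The main obstacle, as flagged in the introduction, is the global and coupled nature of the projections $(\cdot)_{\pm}$ when $m\ge 2$: since $f_{\pm}$ is a \emph{sum} of contour integrals over all of $\gamma$, multiplying by $\mathbf{1}_{\gamma_j}$ does not localize a germ to a single circle after projection, and the per-circle convolution picture underlying both the associativity argument and the flat-coordinate computation fails to split as a naive direct product. The heart of the proof is therefore to control the cross terms $(f\,\mathbf{1}_{\gamma_j})_{\pm}$ on the other circles $\gamma_{j'}$ and to show they nonetheless assemble into a well-defined associative product with a nondegenerate constant metric. I expect the resonant cases—where the exponents $n_0$, $n_j$, $d_j$ force coincidences among the conformal weights, so that logarithmic $\log(z-\varphi_j)$ terms appear and the naive residue formulas degenerate—to demand separate and delicate treatment, precisely the nontrivial $R$-matrix phenomenon anticipated in the introduction.
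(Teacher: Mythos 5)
There is a genuine gap, and it appears at two places.

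First, your underlying picture of the geometry is incomplete: you propose to encode a tangent vector by the germ $\partial\zeta$ together with the variations $\partial\varphi_j$, and to reduce invariance of the product to ``the same totally symmetric contour integral of $\partial_1\zeta\,\partial_2\zeta\,\partial_3\zeta$ weighted by $1/\zeta'$''. But by \eqref{zetaell2a} the pair $(a,\hat a)$ is equivalent to $(\zeta,\ell)$ with $\ell=a_+ + \hat a_-$, and the splitting $\zeta=a-\hat a$ is non-unique precisely up to rational functions of the form \eqref{ell}; so $\partial\ell$ is an independent block of variations that $\partial\zeta$ and $\partial\varphi_j$ do not capture. Accordingly, the metric \eqref{whimetric} and the $3$-tensor of Proposition~\ref{3tensorflat} are \emph{not} single $\gamma$-integrals of $\zeta$-variations: they contain residue terms at $\infty$ and at each $\varphi_j$ built from $\partial\ell/\ell'$, and the flat coordinates split into the $\mathbf{t}$-block coming from $w_j=\zeta^{1/d_j}$ \emph{and} the $\mathbf{h}$-block coming from $\chi_j=\ell^{1/n_j}$ (your ``$\zeta^{k/n}$-type densities near $\infty$ and $\varphi_j$'' cannot be right as stated, since $\zeta$ is only defined near $\gamma$). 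Any verification of flatness, invariance, or existence of a potential that omits the $\ell$-sector omits most of the structure.

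Second, the crux — associativity and indeed well-definedness of $\circ$ — is flagged but not solved. You correctly observe that for $m\ge2$ multiplication does not commute with the projections $(\cdot)_\pm$ and that cross terms $(f\mathbf{1}_{\gamma_j})_\pm$ leak onto the other circles, but your proposal ends by asserting these ``nonetheless assemble'' into an associative product, which is exactly what must be proved. The paper's mechanism, absent from your outline, is the cotangent-space machinery: the surjection $\eta:\mathcal{H}\times\mathcal{H}\to T_{\vec a}\mathcal{M}$ of \eqref{etaom} with the explicit right-inverse of Lemma~\ref{whiinveta}, the identification $T^*_{\vec a}\mathcal{M}=(\mathcal{H}\times\mathcal{H})/\ker\eta$ in \eqref{cotspace}, the product \eqref{multiHH} on $\mathcal{H}\times\mathcal{H}$ whose associativity is a direct computation (Lemma~\ref{copro}), and the trace form \eqref{trprod}, which simultaneously shows the product descends to the quotient (Lemma~\ref{cotanmult}) and yields Frobenius invariance via $\langle\vec\xi_1\circ\vec\xi_2,\vec\xi_3\rangle_\eta=\mathrm{tr}\bigl(\eta^{-1}(\vec\xi_1)\circ\eta^{-1}(\vec\xi_2)\circ\eta^{-1}(\vec\xi_3)\bigr)$ (Proposition~\ref{tanprod}). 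Note moreover that the multiplication \eqref{whimul} you take as given is itself \emph{defined} through $\eta^{-1}$, so without this machinery the object whose axioms you verify is not yet defined. Two smaller discrepancies: the paper does not argue local existence of a potential through symmetry of $\nabla c$, but constructs $\mathcal{F}$ explicitly in \eqref{potential} (Theorem~\ref{potent}), with the $U_i$, $V_i$, $W$ pieces again reflecting the $\zeta$/$\ell$ split; and the charge is pinned not merely by the weight of $z$ but by the homogeneity $\operatorname{Lie}_E\mathcal{F}=(2+\tfrac{2}{n_0})\mathcal{F}+\text{quadratic}$ of Proposition~\ref{hom}, i.e.\ $3-d=2+\tfrac{2}{n_0}$.
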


For \(\mathcal{M}\), we can choose its flat coordinates with respect to the metric \(\eta\) (see Section~\ref{whimetricsec} below):
\[
\mathbf{u}=\{t_{i,s} | 1 \leq i \leq m, \ s \in \mathbb{Z}\} \cup \{h_{0,j} | 1 \leq j \leq n_0 - 1\} \cup \{h_{k,r} | 1 \leq k \leq m, \ 0 \leq r \leq n_k\}.
\]

\begin{thm}\label{mainthm3}
	The principal hierarchy for the infinite-dimensional Frobenius manifold
	$\mathcal{M}$ takes the form: for $t_{i,s},h_{0,j},h_{k,r}\in \mathbf{u} $,
	\begin{align*}
		\frac{\partial}{\partial T^{t_{i,s},\, p}}(a(z),\hat{a}(z)) =&\frac{1}{p!}\frac{d_{i}}{s+d_{i}} \left(-\left\{(\zeta^{\frac{s}{d_{i}}}\left(a^{p}-\hat{a}^{p}\right)\mathbf{1}_{\gamma_{i}})_{-},a\right\}, \left\{(\zeta^{\frac{s}{d_{i}}}\left(a^{p}-\hat{a}^{p}\right)\mathbf{1}_{\gamma_{i}})_{+},\hat{a}\right\}\right);
\\
\frac{\partial}{\partial T^{h_{0,j},\, p} }(a(z),\hat{a}(z))  	= &	\frac{\Gamma\left(1-\frac{j}{n_{0}}\right)}{\Gamma\left(2+p-\frac{j} {n_{0}}\right)}\left(\left\{(a^{1+p-\frac{j}{n_{0}}})_{\infty,\,\ge 0},a\right\},\left\{(a^{1+p-\frac{j}{n_{0}}})_{\infty,\,\ge 0},\hat{a}\right\}\right);
\\
\frac{\partial}{\partial T^{h_{k,r},\, p}}(a(z),\hat{a}(z))
			=& 	\frac{\Gamma\left(1 - \frac{r}{n_{k}}\right)}{\Gamma\left(2 + p - \frac{r}{n_{k}}\right)}\left( \left\{-( \hat{a}^{1 + p - \frac{r}{n_{k}}})_{\varphi_{k}, \leq -1}, a\right\}, \left\{-( \hat{a}^{1 + p - \frac{r}{n_{k}}})_{\varphi_{k}, \leq -1}, \hat{a}\right\} \right)
		\end{align*}
with $s\ne -d_{i},r\ne n_{k}$,	and
		\begin{align*}
&\frac{\partial}{\partial T^{t_{i,-d_{i}},\, p}}(a(z),\hat{a}(z)) \\
=&{d_{i}}\bigg(-\left\{\left(\frac{a^{p}}{p!}(\log\frac{a^{\frac{1}{n_{0}}}}{z-\varphi_{i}} -\frac{c_{p}}{n_{0}})\right)_{\infty,\,\ge 0},a\right\}+\left\{\left(\frac{a^{p}}{p!}\log\frac{z-\varphi_{i}}{\zeta^{\frac{1}{d_{i}}}} \mathbf{1}_{\gamma_{i}}\right)_{-},a\right\}\\
			&+\left\{\sum_{j\ne i}\left(\frac{a^{p}}{p!}\log(z-\varphi_{i})\mathbf{1}_{\gamma_{j}}\right)_{-},a\right\} -\left\{\frac{a^{p}}{p!}\log(z-\varphi_{i}),a\right\},
\\		&-\left\{\left(\frac{a^{p}}{p!}(\log\frac{a^{\frac{1}{n_{0}}}}{z-\varphi_{i}}-\frac{c_{p}}{n_{0}})\right)_{\infty,\,\ge 0},\hat{a}\right\}-\left\{\left(\frac{a^{p}}{p!}\log\frac{z-\varphi_{i}}{\zeta^{\frac{1}{d_{i}}}}\mathbf{1}_{\gamma_{i}}\right)_{+},\hat{a}\right\}\\
			&-\left\{\sum_{j\ne i}\left(\frac{a^{p}}{p!}\log(z-\varphi_{i})\mathbf{1}_{\gamma_{j}}\right)_{+},\hat{a}\right\}\bigg),
\\
&\frac{\partial}{\partial T^{h_{k,n_{k}},\, p}}(a(z),\hat{a}(z)) \\
=&n_{k}\bigg(-\left\{\left(\frac{\hat{a}^{p}}{p!}(\log(z-\varphi_{k})\hat{a}^{\frac{1}{n_{k}}}-\frac{c_{p}}{n_{k}})\right)_{\varphi_{k},\,\le -1},a\right\}+\left\{\left(\frac{a^{p}}{p!}(\log\frac{a^{\frac{1}{n_{0}}}}{z-\varphi_{k}}-\frac{c_{p}}{n_{0}})\right)_{\infty,\,\ge 0},a\right\}\\
			&+\left\{\left(\frac{\hat{a}^{p}}{p!}\log\frac{z-\varphi_{k}}{\zeta^{\frac{1}{d_{k}}}}\mathbf{1}_{\gamma_{k}}\right)_{-},a\right\}-\left\{\left(\frac{a^{p}}{p!}\log\frac{z-\varphi_{k}}{\zeta^{\frac{1}{d_{k}}}}\mathbf{1}_{\gamma_{k}}\right)_{-},a\right\}\\
			&+\left\{\frac{a^{p}}{p!}\log(z-\varphi_{k}),a\right\}-\sum_{j\ne k}\left\{\left(\frac{a^{p}}{p!}\log(z-\varphi_{k})\mathbf{1}_{\gamma_{j}}\right)_{-},a\right\},\\
			&-\left\{\left(\frac{\hat{a}^{p}}{p!}(\log(z-\varphi_{k})\hat{a}^{\frac{1}{n_{k}}}-\frac{c_{p}}{n_{k}})\right)_{\varphi_{k},\,\le -1},\hat{a}\right\}+\left\{\left(\frac{a^{p}}{p!}(\log\frac{a^{\frac{1}{n_{0}}}}{z-\varphi_{k}}-\frac{c_{p}}{n_{0}})\right)_{\infty,\,\ge 0},\hat{a}\right\}\\
			&-\left\{\left(\frac{\hat{a}^{p}}{p!}\log\frac{z-\varphi_{k}}{\zeta^{\frac{1}{d_{k}}}}\mathbf{1}_{\gamma_{k}}\right)_{+},\hat{a}\right\}+\left\{\left(\frac{a^{p}}{p!}\log\frac{z-\varphi_{k}}{\zeta^{\frac{1}{d_{k}}}}\mathbf{1}_{\gamma_{k}}\right)_{+},\hat{a}\right\}\\
			&+\left\{\frac{\hat{a}^{p}}{p!}\log(z-\varphi_{k})\mathbf{1}_{\gamma_{k}},\hat{a}\right\}+\sum_{j\ne k}\left\{\left(\frac{a^{p}}{p!}\log(z-\varphi_{k})\mathbf{1}_{\gamma_{j}}\right)_{+},\hat{a}\right\}\bigg).
	\end{align*}
Here recall that $\mathbf{1}_{\gamma_{i}}$ and $\zeta(z)$ are given in \eqref{1fun} and \eqref{zeta0} respectively.
\end{thm}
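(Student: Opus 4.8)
The plan is to follow the standard recipe for the principal hierarchy of a Frobenius manifold, adapted to the present residue-calculus setting along the lines of \cite{carlet2015principal}. Recall that the principal hierarchy is generated by a family of Hamiltonian densities $\theta_{\alpha,p}$, indexed by a flat coordinate $\alpha\in\mathbf{u}$ and a descendant level $p\geq 0$, which are the Taylor coefficients of the deformed flat coordinates and therefore satisfy the recursion $\partial_\beta\partial_\gamma\theta_{\alpha,p}=c^\epsilon_{\beta\gamma}\,\partial_\epsilon\theta_{\alpha,p-1}$, with $\theta_{\alpha,0}$ obtained by lowering the index of the coordinate $\alpha$ via the metric $\eta$ of \eqref{whimetric}. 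The flows are then $\partial_{T^{\alpha,p}}v=\partial_x\big(\nabla\theta_{\alpha,p+1}\big)$; using the recursion together with the explicit multiplication \eqref{whimul} and unit \eqref{whiid}, these acquire the Lax form $\partial_{T^{\alpha,p}}(a,\hat a)=\big(\{A^-_{\alpha,p},a\},\{A^+_{\alpha,p},\hat a\}\big)$ for suitable truncated series $A^{\pm}_{\alpha,p}$. The work is thus to exhibit explicit densities, verify the recursion, and read off the truncations.

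For the densities themselves I would take the natural superpotential ansatz suggested by Dubrovin's Hurwitz-type formulas: for the $t_{i,s}$ family, integrands built from primitives of $\zeta^{s/d_i}$ localized on $\gamma_i$ through $\mathbf{1}_{\gamma_i}$ and weighted by the descendant factor $(a^p-\hat a^p)/p!$; for the $h_{0,j}$ family, powers $a^{1+p-j/n_0}$; and for the $h_{k,r}$ family, powers $\hat a^{1+p-r/n_k}$. The recursion can then be checked directly from the explicit $\eta$, $\circ$, $e$ of Theorem~\ref{mainthm1}: each application of $\partial_\beta\partial_\gamma$ against the structure constants lowers the exponent by one and introduces the running factor, so that iterating from $p=0$ reproduces exactly the Gamma-function coefficients $\Gamma(1-j/n_0)/\Gamma(2+p-j/n_0)$ and $\Gamma(1-r/n_k)/\Gamma(2+p-r/n_k)$, the coefficient $d_i/(s+d_i)$ from the primitive of $\zeta^{s/d_i}$, and the normalization $1/p!$. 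The base cases $p=0$ should reproduce the flat coordinates and serve as a consistency check.

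To extract the flows I would compute $\nabla\theta_{\alpha,p+1}$ with the residue pairing defined by $\eta$ and then apply $\partial_x$ (equivalently, multiply by $\partial_x(a,\hat a)$ in the Frobenius algebra). The resulting integrand splits into its parts holomorphic on $\mathbf{D}^{int}$ and on $\mathbf{D}^{ext}$ by the Plemelj--Sokhotski decomposition across the circles $\gamma_j$; this is precisely where the projections $(\cdot)_\pm$, $(\cdot)_{\infty,\,\geq0}$, $(\cdot)_{\varphi_k,\,\leq-1}$ and the characteristic functions $\mathbf{1}_{\gamma_j}$ enter. Since $a$ extends holomorphically over $\mathbf{D}^{ext}$ while $\hat a$ extends over $\mathbf{D}^{int}$, the $a$-component naturally pairs with the $(\cdot)_-$ piece and the $\hat a$-component with the $(\cdot)_+$ piece; matching these two components separately against the Poisson-bracket expressions in the statement settles the generic cases $s\neq-d_i$, $r\neq n_k$.

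The main obstacle is the resonant cases $s=-d_i$ and $r=n_k$. Here the exponent $s/d_i$ equals $-1$ (respectively $1+p-r/n_k$ becomes a non-negative integer), the coefficient $\Gamma(1-r/n_k)=\Gamma(0)$ develops a pole, and the integral representation above degenerates. These cases are controlled by the nontrivial $R$-matrix of the Frobenius manifold: resonances among the eigenvalues of the grading operator $\mu$ force logarithmic terms such as $\log\zeta^{1/d_i}$, $\log(z-\varphi_i)$ and $\log\frac{a^{1/n_0}}{z-\varphi_i}$ into the deformed flat coordinates, while the constants $c_p$ emerge as the finite (harmonic-type) parts obtained by differentiating the Gamma factors at their poles. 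I would derive these logarithmic densities as the regularized limit of the generic family and then carefully track how $\mathbf{1}_{\gamma_j}$ distributes the logarithmic singularity among the different circles. This bookkeeping of the log terms across the $m$ circles, together with verifying the $(\cdot)_\pm$ projections and checking that the flows remain tangent to $\mathcal{M}$ (that is, preserve the prescribed expansions and conditions (C1), (C2)), is the genuinely delicate part of the argument.
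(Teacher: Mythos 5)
Your proposal is correct and follows essentially the same route as the paper: one introduces explicit Hamiltonian densities $\theta_{u,p}$ (powers of $a$, $\hat a$, $\zeta$ with the Gamma-factor normalizations, plus logarithmically regularized densities in the resonant cases $s=-d_i$, $r=n_k$ tied to the nontrivial $R$-matrix), verifies the recursion/quasi-homogeneity axioms for them, and then obtains the flows by applying the first Hamiltonian operator $\mathcal{P}_1(\vec\omega)=\eta\bigl(\nabla_{\partial/\partial x}\vec\omega\bigr)$, written in Poisson-bracket form, to the differentials $d\theta_{u,p+1}$, which is exactly how the paper extracts the truncations $(\cdot)_\pm$, $(\cdot)_{\infty,\ge 0}$, $(\cdot)_{\varphi_k,\le -1}$. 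The only cosmetic difference is that you motivate the logarithmic densities as regularized limits of the generic family, whereas the paper writes them down explicitly and checks the axioms by analytic continuation on a subset of $\mathcal{M}$; the substance of the argument is the same.
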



\begin{cor}\label{prinwhi}
Suppose that the Laurent series \(\lambda_i(z)\) are given in \eqref{whihier}, and 
\begin{equation}\label{assum}
a(z) = \lambda_0(z)^{n_0} \hbox{ as } z\to\infty; \qquad \hat{a}(z)= \lambda_l(z)^{n_l} \hbox{ as } z\to \varphi_{l},\quad l = 1, \ldots, m.
\end{equation}
Then the genus-zero Whitham hierarchy \eqref{whi1}--\eqref{whi3} is a subhierarchy of the principal hierarchy for the infinite-dimensional Frobenius manifold $\mathcal{M}$.
\end{cor}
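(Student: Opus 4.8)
The plan is to prove containment by matching, under the substitution \eqref{assum}, every flow of \eqref{whi1}--\eqref{whi3} with a principal flow from Theorem~\ref{mainthm3} up to a nonzero constant rescaling of the time. The starting point is a lifting principle for the bracket \eqref{bracket}: since $\{\,\cdot\,,\,\cdot\,\}$ is a derivation in its second slot, $\{P,\lambda^n\}=n\lambda^{n-1}\{P,\lambda\}$ for any generator $P=P(z)$. Consequently the Whitham equation $\partial_s\lambda_0=\{P,\lambda_0\}$ lifts to $\partial_s a=\{P,a\}$ with $a=\lambda_0^{n_0}$, and on a neighborhood of each $\varphi_l$ the equation $\partial_s\lambda_l=\{P,\lambda_l\}$ lifts to $\partial_s\hat a=\{P,\hat a\}$ with $\hat a=\lambda_l^{n_l}$; since $\hat a=\lambda_l^{n_l}$ holds on all of $D_l$ by analytic continuation, the lifted equation holds throughout $\mathbf D^{int}$. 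It therefore suffices to compare generators.

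Next I would rewrite the Whitham generators in the $(a,\hat a)$ variables using $\lambda_0=a^{1/n_0}$ near $\infty$ and $\lambda_l=\hat a^{1/n_l}$ near $\varphi_l$, so that $(\lambda_0^k)_{\infty,\ge 0}=(a^{k/n_0})_{\infty,\ge 0}$ and $(\lambda_j^k)_{\varphi_j,<0}=(\hat a^{k/n_j})_{\varphi_j,\le -1}$. For \eqref{whi1} with $n_0\nmid k$, writing $k=n_0(1+p)-j$ with $1\le j\le n_0-1$ and $p\ge 0$ identifies $\partial_{s^{0,k}}$ with $\partial_{T^{h_{0,j},p}}$, since both flows act through the single polynomial generator $(a^{1+p-j/n_0})_{\infty,\ge 0}$ on both components. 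For \eqref{whi2}, writing $k=n_j(1+p)-r$ with $0\le r\le n_j-1$ and $p\ge 0$ (which exhausts every $k\ge 1$) identifies $\partial_{s^{j,k}}$ with $\partial_{T^{h_{j,r},p}}$, the common generator being $-(\hat a^{1+p-r/n_j})_{\varphi_j,\le -1}$. In both families the Gamma-function prefactors of Theorem~\ref{mainthm3} are nonzero constants, so they only rescale the Whitham times. This settles \eqref{whi1} for $n_0\nmid k$ and all of \eqref{whi2}.

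The delicate remaining flows are the logarithmic ones \eqref{whi3} and the resonant powers \eqref{whi1} with $n_0\mid k$. The latter cannot come from the $h_{0,j}$ family, since matching $(\lambda_0^{n_0\ell})_{\infty,\ge 0}$ would force $j$ to be a multiple of $n_0$, outside the admissible range $1\le j\le n_0-1$; and no non-resonant flow carries a logarithmic generator. Both must instead be extracted from the resonant families $\partial_{T^{t_{i,-d_i},p}}$ and $\partial_{T^{h_{k,n_k},p}}$, whose generators are assembled from polynomial truncations $(\,\cdot\,)_{\infty,\ge 0}$, principal parts $(\,\cdot\,)_{\varphi_k,\le -1}$, the two-sided projections $(\,\cdot\,)_\pm$, and the characteristic functions $\mathbf 1_{\gamma_j}$. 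I would verify that at $p=0$ a suitable combination of $\partial_{T^{t_{j,-d_j},0}}$ and $\partial_{T^{h_{j,n_j},0}}$ has its logarithmic pieces recombine into the single generator $\log(z-\varphi_j)$, reproducing \eqref{whi3}, while for $p\ge 1$ the polynomial pieces --- e.g. $(\tfrac{a^p}{p!}\log\tfrac{a^{1/n_0}}{z-\varphi_k})_{\infty,\ge 0}$ together with $-\tfrac{c_p}{n_0}(\tfrac{a^p}{p!})_{\infty,\ge 0}$ --- assemble into $(a^\ell)_{\infty,\ge 0}=(\lambda_0^{n_0\ell})_{\infty,\ge 0}$, recovering \eqref{whi1} with $k=n_0\ell$.

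The computational tools for this last step are the decomposition $f=f_++f_-$, elementary vanishing facts such as $\bigl(\log\tfrac{a^{1/n_0}}{z-\varphi_i}\bigr)_{\infty,\ge 0}=0$ (this quantity is $O(z^{-1})$ at $\infty$ because $a^{1/n_0}=z+O(z^{-1})$), and the requirement that each component flow preserve the analytic type of $a$ on $\mathbf D^{ext}$ and of $\hat a$ on $\mathbf D^{int}$. I expect the main obstacle to be exactly this resonant bookkeeping: tracking the many $\mathbf 1_{\gamma_j}$ contributions across the $m$ distinct circles, where --- as emphasized after \eqref{1fun} --- the projections $(f\mathbf 1_{\gamma_j})_\pm$ need not vanish off $\gamma_j$. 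This multi-circle cancellation has no counterpart in the single-circle constructions of \cite{carlet2011infinite,wu2012class,wu2014infinite,ma2021infinite} and is where the genuine work lies; once it is carried out, the identifications above exhibit \eqref{whi1}--\eqref{whi3} as a subhierarchy of the principal hierarchy.
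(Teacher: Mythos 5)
Your lifting of \eqref{whi1}--\eqref{whi3} to the pair $(a,\hat a)$ and your identifications of the non-resonant exterior flows and of all interior flows coincide with the paper's proof, so that part of the argument is right. The genuine gap is the resonant exterior flows $\partial_{s^{0,(p+1)n_0}}$, with polynomial generator $(a^{p+1})_{\infty,\ge 0}$. From ``they cannot come from the $h_{0,j}$ family'' you conclude they ``must be extracted from the resonant families $\partial_{T^{t_{i,-d_i},p}}$, $\partial_{T^{h_{k,n_k},p}}$''; this is a non sequitur, because it ignores linear combinations of \emph{non-resonant} flows taken across all $m$ circles, and that is exactly what the paper uses:
\[
\frac{\partial}{\partial s^{0,(p+1)n_{0}}}
=(p+1)!\sum_{k=1}^{m}\left(\frac{\partial}{\partial T^{t_{k,0},p}}+\frac{\partial}{\partial T^{h_{k,0},p}}\right),
\]
verified from $\sum_{k=1}^{m}\mathbf{1}_{\gamma_k}=1$, the partial-fraction identity $(\hat a^{p+1})_-=\sum_{k=1}^{m}(\hat a^{p+1})_{\varphi_k,\le -1}$, the equality $(a^{p+1})_+=(a^{p+1})_{\infty,\ge 0}$, and $\{a^{p+1},a\}=\{\hat a^{p+1},\hat a\}=0$. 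Note that this is the only place where the $t$-coordinates enter the proof, and your proposal never invokes the $t_{i,s}$ flows at all. Your proposed mechanism cannot be repaired: for $p\ge 1$ any combination $\sum_i\bigl(\alpha_i\,\partial_{T^{t_{i,-d_i},p}}+\beta_i\,\partial_{T^{h_{i,n_i},p}}\bigr)$ contains in its $a$-component the term $\bigl\{\bigl(\bigl[(\alpha_i d_i-\beta_i n_i)\tfrac{a^{p}}{p!}+\beta_i n_i\tfrac{\hat a^{p}}{p!}\bigr]\log\tfrac{z-\varphi_i}{\zeta^{1/d_i}}\,\mathbf{1}_{\gamma_i}\bigr)_{-},a\bigr\}$, which depends genuinely on $\log\zeta^{1/d_i}$ and, since $a^{p}$ and $\hat a^{p}$ are (generically) independent functions near $\gamma_i$ for $p\ge1$, vanishes only when $\alpha_i=\beta_i=0$; hence no such combination has the generator $(a^{p+1})_{\infty,\ge 0}$. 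The specific assembly you suggest also fails concretely: for $n_0=1$, $p=1$ one computes $\bigl(a(\log\tfrac{a}{z-\varphi_k}-c_1)\bigr)_{\infty,\ge 0}=\varphi_k-z$, which is not $(a^{\ell})_{\infty,\ge 0}$ for any $\ell$.

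By contrast, your plan for the logarithmic flows \eqref{whi3} is correct in substance but should be sharpened: no genuine combination is needed. At $p=0$ the two $\zeta$-dependent terms inside $\partial_{T^{h_{k,n_k},0}}$ cancel each other, $\bigl(\log\tfrac{a^{1/n_0}}{z-\varphi_k}\bigr)_{\infty,\ge 0}=0$, $\bigl(\log[(z-\varphi_k)\hat a^{1/n_k}]\bigr)_{\varphi_k,\le -1}=0$, and $\log(z-\varphi_k)\mathbf{1}_{\gamma_k}+\sum_{j\ne k}\bigl(\log(z-\varphi_k)\mathbf{1}_{\gamma_j}\bigr)_{+}$ reassembles to $\log(z-\varphi_k)$ on $\mathbf{D}^{int}$, so $\partial_{s^{k,0}}=\tfrac{1}{n_k}\partial_{T^{h_{k,n_k},0}}$ on its own (in the paper's list this flow is printed as $T^{h_{k,0},0}$, but it must be the resonant one, $T^{h_{k,0},0}$ having already been matched to $s^{k,n_k}$). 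Admixing any nonzero multiple of $\partial_{T^{t_{k,-d_k},0}}$ would reintroduce the uncancelled terms $\bigl(\log\tfrac{z-\varphi_k}{\zeta^{1/d_k}}\mathbf{1}_{\gamma_k}\bigr)_{\pm}$, so your ``suitable combination'' is forced to be the trivial one. With the resonant-power step replaced by the multi-circle sum of non-resonant flows above, your outline becomes the paper's proof.
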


This paper is organized as follows. In Section 2, we construct the infinite-dimensional Frobenius manifold structure \((\mathcal{M}, \circ, e, \eta, E)\) and prove Theorem~\ref{mainthm1}. In Section 3 we  derive a Poisson pencil on the loop space of \(\mathcal{M}\), then prove Theorem \ref{mainthm3} and Corollary~\ref{prinwhi}; furthermore, we  illustrate how to derive finite-dimensional Frobenius manifolds with rational superpotentials  from $\mathcal{M}$.  Section 4 concludes this paper.

\section{Infinite-dimensional Frobenius manifold structure on $\mathcal{M}$}

In this section, we assign an infinite-dimensional Frobenius manifold structure on
 $\mathcal{M}$. We will firstly introduce a flat metric on $\mathcal{M}$ and choose a set of flat coordinates. Then we consider a map, which is induced by the flat metric, from the cotangent space to the tangent space at each point. With the help of this map, we define multiplication of tangent vector fields, find out the unit vector, write down the symmetric $3$-tensor, and then deduce the potential of the infinite-dimensional Frobenius manifold. Finally, we find the Euler vector field, and study the intersection form on $\mathcal{M}$.

\subsection{Flat metric}\label{whimetricsec}
Recall the space $\mathcal{M}$ consisting of pairs of functions $(a(z),\hat{a}(z))$ satisfying conditions (C1) and (C2) in the previous section. Let
\begin{equation}\label{zetaell}
	\zeta(z)=a(z)-\hat{a}(z),\quad \ell(z)=a(z)_{+}+\hat{a}(z)_{-}.
\end{equation}
Note that $\zeta(z)$ is defined in a neighborhood of $\gamma$, while $\ell(z)$ can be analytically continued to $\mathbb{C}\setminus\{\varphi_1,\dots,\varphi_m\}$. It is easy to see
\begin{equation}\label{zetaell2a}
	a(z)=\ell(z)+\zeta(z)_-, \quad \hat{a}(z)=\ell(z)-\zeta(z)_+.
\end{equation}
Hence the pairs $(a(z),\hat{a}(z))$ and $(\zeta(z), \ell(z))$ are determined by each other.

Given $\vec{a}=(a(z),\hat{a}(z))\in\mathcal{M}$, we identify each vector
$X$ in the tangent space $T_{\vec{a}}\mathcal{M}$ with the derivative  $(\p_{X}
a(z), \p_{X}\hat{a}(z))$ along $X$.
Let us introduce a bilinear form on $T_{\vec{a}}\mathcal{M}$ as
\begin{equation}\label{whimetric}
	\langle X_1,X_2\rangle_{\eta}=
	-\frac{1}{2\pi\mathrm{i}}\dsum_{j=1}^{m}\oint_{\gamma_{j}}\frac{\p_{1}\zeta(z) \cdot\p_{2}\zeta(z)}{\zeta'(z)}dz
	-\left(\res_{z=\infty}+\dsum_{j=1}^{m}\res_{z=\varphi_{j}}\right)\frac{\p_{1}\ell(z)\cdot \p_{2}\ell(z)}{\ell'(z)}dz,
\end{equation}
where
$\p_{\nu}=\p_{X_{\nu}}$ for $\nu=1,2$. Clearly this bilinear form is symmetric and nondegenerate, hence it can be regarded as a metric on $\mathcal{M}$.

We proceed to select a system of flat coordinates with respect to the metric \eqref{whimetric}, with a similar method as in \cite{ma2021infinite}.
According to condition (C2) in the definition of \(\mathcal{M}\), for each \( j \in\{ 1, 2, \ldots, m\} \) there exists a holomorphic function \( w_j(z) \) defined in a neighborhood of \( \gamma_j \) such that
\[
\zeta(z)|_{\gamma_j} = w_j(z)^{d_j}|_{\gamma_j}, \quad w'(z)|_{\gamma_j} \neq 0,
\]
and that \( w_j(z) \) maps \( \gamma_j \) to a path $\tilde{\gamma}_j$ encircling the origin with  winding number \( 1 \). The inverse function $z(w_{j})$ of $w_j(z)$ can be expanded as
\begin{equation}\label{texp}
	z(w_{j})=\sum_{s \in \mathbb{Z}} t_{j,s} w_{j}^s, \quad w_{j} \in \tilde{\gamma}_j,
\end{equation}
where the coefficients $t_{j,s}$ are given by
\[
t_{j,s}=\frac{1}{2 \pi \mathrm{i}} \oint_{\tilde{\gamma}_j} z(w_{j}) w_{j}^{-s-1} d w_{j}, \quad  j\in\{1,2,\cdots,m\},\ s \in \mathbb{Z}.
\]
There exists a family of functions \( \chi_j(z) \), defined on some punctured neighborhood around \( \infty \) for \( j = 0 \) and around \( \varphi_j \) for \( j \in\{ 1,2, \ldots, m\} \), such that
$$
\begin{aligned}
	& \chi_{0}(z)=\ell(z)^{\frac{1}{n_{0}}}=z+b_{0,1} z^{-1}+b_{0,2} z^{-2}+\cdots, \quad z \rightarrow \infty, \\
	& \chi_{j}(z)=\ell(z)^{\frac{1}{n_{j}}}=\hat{a}_{j,-n_{j}}^{\frac{1}{n}}(z-\varphi_{j})^{-1}+b_{j,0}+b_{j,1}(z-\varphi_{j})+\cdots, \quad z \rightarrow \varphi_{j} .
\end{aligned}
$$
The inverse functions  $z(\chi_{j})$ of $\chi_j(z)$ can be expanded into Laurent series of the form:
\begin{align}
	& z(\chi_{0})=\chi_{0}-h_{0,1} \chi_{0}^{-1}-h_{0,2} \chi_{0}^{-2}-\cdots-h_{0,n_{0}-1} \chi_{0}^{-n_{0}+1}+\cdots, \quad z \rightarrow \infty ,\label{hexp}\\
	& z(\chi_{j})=h_{j,0}+h_{j,1} \chi_{j}^{-1}+h_{j,2} \chi_{j}^{-2}+\cdots+h_{j,n_{j}} \chi_{j}^{-n_{j}}+\cdots, \quad z \rightarrow \varphi_{j} .\label{hhexp}
\end{align}
Denote $\mathbf{u}=\mathbf{t}\cup\mathbf{h}$, with
	\begin{align}
		\mathbf{t}&=\{t_{i,s}|1\le i\le m,\ s\in \mathbb{Z}\}, \label{flatt}\\
		\mathbf{h}&=\{h_{0,j}|1\le j\le n_{0}-1\}\cup\{h_{k,r}|1\le k\le m,\ 0\le r\le n_{k}\}. \label{flath}
	\end{align}

\begin{prop} \label{etaflat}
The metric \eqref{whimetric} on the manifold \(\mathcal{M}\) is flat, with a system of  flat coordinates given by $\mathbf{u}$.
More exactly, it holds that
	\begin{align}
		&\left\langle\frac{\partial}{\partial t_{i,s}}, \frac{\partial}{\partial t_{i,s'}}\right\rangle_\eta=-d_{i} \delta_{-d_{i}, s+s'}, \quad i\in\left\{1,\cdots,m\right\},\ s,s'\in\mathbb{Z}; \label{etaflat1}\\
		&\left\langle\frac{\partial}{\partial h_{0,j}}, \frac{\partial}{\partial h_{0,j'}}\right\rangle_\eta=n_{0} \delta_{n_{0}, j+j'},\quad  j,j'\in\left\{1,\cdots,n_{0}-1\right\};\\
		&\left\langle\frac{\partial}{\partial h_{k,r}}, \frac{\partial}{\partial h_{k,r'}}\right\rangle_\eta=n_{k} \delta_{n_{k}, r+r'},\quad k\in\left\{1,\cdots,m\right\},\ r,r'\in\left\{0,\cdots,n_{k}\right\}, \label{etaflat3}
	\end{align}
and any other pairing between these vectors vanishes.
\end{prop}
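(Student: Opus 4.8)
The plan is to compute the metric \eqref{whimetric} directly on the candidate flat coordinate vector fields and verify the pairings \eqref{etaflat1}--\eqref{etaflat3}, exploiting the fact that the metric splits into a $\zeta$-contribution (integrals over the $\gamma_j$) and an $\ell$-contribution (residues at $\infty$ and the $\varphi_j$). The key observation is that the coordinates are adapted to this splitting: the $t_{i,s}$ govern the behaviour of $\zeta$ near $\gamma_i$ through the conformal parameter $w_i$ with $\zeta=w_i^{d_i}$, whereas the $h$-coordinates govern $\ell$ near its singular points through the parameters $\chi_0,\dots,\chi_m$ with $\ell=\chi_j^{n_j}$. I would therefore first change variables in each piece of \eqref{whimetric} to these natural parameters. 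In the $j$-th boundary integral, substituting $z=z(w_j)$ and using $\zeta=w_j^{d_j}$ turns $\zeta'(z)\,dz$ into $d_j w_j^{d_j-1}\,dw_j$, so that $\tfrac{\p_1\zeta\,\p_2\zeta}{\zeta'}\,dz$ becomes a rational expression in $w_j$; similarly, near $\infty$ and near each $\varphi_j$, substituting $z=z(\chi_j)$ and $\ell=\chi_j^{n_j}$ converts the residues into residues in the $\chi_j$ variable.

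Next I would express $\p_X\zeta$ and $\p_X\ell$ in these parameters for $X$ ranging over the coordinate fields. The crucial point is that holding a coordinate fixed must be interpreted \emph{along the curve} $\gamma_j$ (resp.\ along the expansion at the singular point), so one differentiates the inverse-function expansions \eqref{texp}, \eqref{hexp}, \eqref{hhexp} at fixed $w_j$ (resp.\ fixed $\chi_j$). Differentiating \eqref{texp} gives $\p_{t_{i,s}} z = w_i^{s}$ at fixed $w_i$, whence $\p_{t_{i,s}}\zeta = -\zeta'(z)\,\p_{t_{i,s}} z = -d_i w_i^{d_i-1+s}$ after accounting for the sign from differentiating an inverse relation; likewise the $h$-derivatives of $\ell$ are read off from \eqref{hexp}--\eqref{hhexp}. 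Substituting these into the transformed integrals, the pairing $\langle \p_{t_{i,s}},\p_{t_{i,s'}}\rangle_\eta$ reduces to $-\tfrac{1}{2\pi\mathrm i}\oint_{\tilde\gamma_i} d_i\, w_i^{s+s'+d_i-1}\,dw_i$, which by the residue theorem equals $-d_i\,\delta_{s+s',-d_i}$; the $h$-pairings follow from the analogous residue computations in $\chi_j$, yielding the factors $n_0$ and $n_k$ and the Kronecker deltas in \eqref{etaflat3}. The orthogonality of $\mathbf t$-fields with distinct indices $i\ne i'$ is automatic since $\p_{t_{i,s}}\zeta$ is supported near $\gamma_i$ and the integrand vanishes on the other boundary circles; the orthogonality between $\mathbf t$-fields and $\mathbf h$-fields follows because the former act only through the $\zeta$-part and the latter only through the $\ell$-part, and I would check that the cross terms in each half of the metric drop out.

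The main obstacle I anticipate is bookkeeping the orthogonality and the mixed contributions rigorously rather than the core residue identities. Specifically, a coordinate field may a priori perturb \emph{both} $\zeta$ and $\ell$, since $a$ and $\hat a$ are recovered from $(\zeta,\ell)$ via \eqref{zetaell2a}, so I must argue carefully that varying a $t$-coordinate fixes $\ell$ (and varying an $h$-coordinate fixes $\zeta$) in the appropriate sense, making the two halves of \eqref{whimetric} genuinely decoupled on the coordinate frame. Handling the $\varphi_j$-dependence is delicate: moving a pole location contributes to both the residue at $\varphi_j$ and, through the analytic continuation of $\ell$, potentially elsewhere, so I would track the $\varphi_j$-derivatives through the leading coefficients $\hat a_{j,-n_j}^{1/n_j}$ appearing in the expansion of $\chi_j$. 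Once the decoupling and the support properties are established, the remaining computations are routine residue extractions, and flatness follows because the $t_{i,s}$, $h_{0,j}$, $h_{k,r}$ diagonalize $\eta$ into constant (coordinate-independent) entries, which is precisely the criterion for a flat coordinate system.
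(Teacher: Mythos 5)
Your proposal is correct and takes essentially the same route as the paper's own proof, which likewise rests on the decoupling $\partial\ell/\partial t_{i,s}=0$, $\partial\zeta/\partial h_{0,j}=\partial\zeta/\partial h_{k,r}=0$ together with $\partial_{t_{i,s}}\zeta|_{\gamma_{i'}}=-\delta_{ii'}\zeta^{s/d_i}\zeta'$, $\partial_{h_{0,j}}\ell=(\ell'\chi_0^{-j})_{\infty,\ge 0}$, $\partial_{h_{k,r}}\ell=-(\ell'\chi_k^{-r})_{\varphi_k,\le -1}$, followed by case-by-case residue computations in the uniformizing parameters $w_i$, $\chi_j$. The only blemish is your intermediate formula $\partial_{t_{i,s}}\zeta=-d_i w_i^{d_i-1+s}$, which omits the Jacobian factor $w_i'(z)$; this is harmless since your final integrand $d_i w_i^{s+s'+d_i-1}\,dw_i$ is the correct one.
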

\begin{proof}
It is easy to see:
	$$
	\begin{aligned}
		& \left.\frac{\partial \zeta(z)}{\partial t_{i,s}}\right|_{\gamma_{i'}}=-\dt_{i i'}\zeta(z)^{\frac{s}{d_{i}}} \zeta^{\prime}(z), 
\quad
\frac{\partial \ell(z)}{\partial t_{i,s}}=0, \\
		& \frac{\partial \zeta(z)}{\partial h_{0,j}}=0, \quad \frac{\partial \ell(z)}{\partial h_{0,j}}=\left(\ell^{\prime}(z) \chi_{0}(z)^{-j}\right)_{\infty,\,\ge 0}, \\
		& \frac{\partial \zeta(z)}{\partial h_{k,r}}=0, \quad \frac{\partial \ell(z)}{\partial h_{k,r}}=-\left(\ell^{\prime}(z) \chi_{k}(z)^{-r}\right)_{\varphi_{k},\,\le -1}.
	\end{aligned}
	$$
Then the proposition is verified case by case as for Proposition 2.6 in \cite{ma2021infinite}.
\end{proof}

The proof together with \eqref{zetaell2a} lead to the following result.
\begin{cor}\label{flatvect}
The following equalities hold true (recall $\mathbf{1}_{\gamma_j}$ given by \eqref{1fun}):
	\begin{align}
		&	\frac{\p \vec{a}}{\p t_{i,s}}=\left(-(\zeta(z)^{\frac{s}{d_{i}}}\zeta'(z)\mathbf{1}_{\gamma_{i}})_{-}, (\zeta(z)^{\frac{s}{d_{i}}}\zeta'(z)\mathbf{1}_{\gamma_{i}})_{+}\right),\quad 1\le i\le m, ~ s\in\mathbb{Z}; \label{flatcom1} \\
		&	\frac{\p \vec{a}}{\p h_{0,j}}=\left((\ell'(z)\chi_{0}(z)^{-j})_{\infty,\,\ge 0},(\ell'(z)\chi_{0}(z)^{-j})_{\infty,\,\ge 0}\right),\quad 1\le j\le n_{0}-1;\label{flatcom2}\\
		&	\frac{\p \vec{a}}{\p h_{k,r}}=\left(-(\ell'(z)\chi_{k}(z)^{-r})_{\varphi_{k},\,\le -1},-(\ell'(z)\chi_{k}(z)^{-r})_{\varphi_{k},\,\le -1}\right),\quad 1\le k\le m, ~ 0\le r\le n_{k}.\label{flatcom3}
	\end{align}
\end{cor}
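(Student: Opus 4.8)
The plan is to differentiate the two relations in \eqref{zetaell2a}, namely $a(z)=\ell(z)+\zeta(z)_-$ and $\hat a(z)=\ell(z)-\zeta(z)_+$, along each flat coordinate direction in $\mathbf u$, and to insert the derivatives of $\zeta$ and $\ell$ already computed in the proof of Proposition~\ref{etaflat}. The preliminary observation I would make is that the circles $\gamma_j=\p D_j$ are fixed contours, independent of the point $\vec a\in\mathcal M$; hence the Cauchy-type projections $(\,\cdot\,)_\pm$ are linear operators given by integration over fixed paths, and differentiation under the integral sign shows that any coordinate derivative $\p_X$ commutes with $(\,\cdot\,)_\pm$. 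Consequently, for every $u\in\mathbf u$,
\[
\frac{\p a}{\p u}=\frac{\p\ell}{\p u}+\Bigl(\frac{\p\zeta}{\p u}\Bigr)_-,\qquad
\frac{\p\hat a}{\p u}=\frac{\p\ell}{\p u}-\Bigl(\frac{\p\zeta}{\p u}\Bigr)_+.
\]

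For the coordinates $h_{0,j}$ and $h_{k,r}$ this is immediate. The proof of Proposition~\ref{etaflat} gives $\p_{h_{0,j}}\zeta=0$ and $\p_{h_{k,r}}\zeta=0$, so both $\p_u a$ and $\p_u\hat a$ collapse to $\p_u\ell$, equal to $(\ell'\chi_{0}^{-j})_{\infty,\ge 0}$ in the first case and to $-(\ell'\chi_{k}^{-r})_{\varphi_k,\le -1}$ in the second. This reproduces \eqref{flatcom2} and \eqref{flatcom3} directly.

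The delicate case, and where I expect the one genuine obstacle, is $t_{i,s}$. Here $\p_{t_{i,s}}\ell=0$, but the proof of Proposition~\ref{etaflat} supplies $\p_{t_{i,s}}\zeta$ only through its boundary values $\p_{t_{i,s}}\zeta|_{\gamma_{i'}}=-\dt_{ii'}\zeta^{s/d_i}\zeta'$. To apply $(\,\cdot\,)_\pm$ one must first promote these boundary values to a single germ holomorphic near the whole of $\gamma$, which is precisely the subtlety flagged after \eqref{1fun}. Since a member of $\mathcal H$ is determined by its restriction to $\gamma$ (each component $\gamma_{i'}$ is a uniqueness set for functions holomorphic on an annular neighborhood of it), the correct germ is $\p_{t_{i,s}}\zeta=-\zeta^{s/d_i}\zeta'\,\mathbf 1_{\gamma_i}$, which has the prescribed values on every $\gamma_{i'}$ yet, crucially, has projections $(\,\cdot\,)_\pm$ that need not vanish on the circles $\gamma_{i'}$ with $i'\ne i$. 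Substituting this germ into the two identities from the first paragraph yields
\[
\frac{\p a}{\p t_{i,s}}=-\bigl(\zeta^{s/d_i}\zeta'\,\mathbf 1_{\gamma_i}\bigr)_-,\qquad
\frac{\p\hat a}{\p t_{i,s}}=\bigl(\zeta^{s/d_i}\zeta'\,\mathbf 1_{\gamma_i}\bigr)_+,
\]
which is exactly \eqref{flatcom1}. The only point demanding care throughout is the bookkeeping of the factor $\mathbf 1_{\gamma_i}$ under the projections; the remaining manipulations are direct substitutions.
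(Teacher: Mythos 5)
Your proposal is correct and follows essentially the same route as the paper: the paper's proof consists precisely of differentiating the relations \eqref{zetaell2a} along the flat coordinate directions and inserting the derivatives of $\zeta(z)$ and $\ell(z)$ computed in the proof of Proposition~\ref{etaflat}, with the boundary data $\p_{t_{i,s}}\zeta|_{\gamma_{i'}}=-\dt_{ii'}\zeta^{s/d_i}\zeta'$ encoded as the germ $-\zeta^{s/d_i}\zeta'\mathbf{1}_{\gamma_i}\in\mathcal{H}$, exactly as you do. Your explicit justification that coordinate derivatives commute with the fixed-contour projections $(\,\cdot\,)_\pm$, and that boundary values on $\gamma$ determine the germ uniquely, simply makes precise steps the paper leaves implicit.
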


\subsection{Cotangent spaces}
In contrast to the case $m=1$ studied previously in \cite{ma2021infinite}, for general $m\ge2$ we do not have an explicit expression for the cotangent spaces of $\mathcal{M}$. Instead, we will consider the cotangent spaces as certain quotient spaces.

Given $\vec{a}\in\mathcal{M}$, let us start with a linear map  \(\eta\) from \(\mathcal{H} \times \mathcal{H}\) to the tangent space \(T_{\vec{a}}\mathcal{M}\) defined as
	\begin{align}
		\eta(\omega(z), \hat{\omega}(z))=&\big(a'(z)(\omega(z)+\hat{\omega}(z))_{-}-(\omega(z) a'(z)+\hat{\omega}(z)\hat{a}'(z))_{-},\nn\\
		&-\hat{a}'(z)(\omega(z)+\hat{\omega}(z))_{+}+(\omega(z) a'(z)+\hat{\omega}(z)\hat{a}'(z))_{+}\big). \label{etaom} 
	\end{align}
\begin{lem}\label{whiinveta}
The map $\eta$ is surjective. More exactly, for any \( \vec{\xi} = (\xi(z), \hat{\xi}(z)) \in T_{\vec{a}}\mathcal{M} \), there is an element \( \vec{\omega} = (\omega(z), \hat{\omega}(z)) \in \mathcal{H} \times \mathcal{H} \), which is given by
	\begin{align*}
		&\omega(z)_{+} = -\left( \frac{\xi(z) - \hat{\xi}(z)}{\zeta'(z)} \right)_{+}, \\
		&\omega(z)_{-} = \left( \frac{\xi(z)}{a'(z)} - \left( \frac{\xi(z) - \hat{\xi}(z)}{\zeta'(z)} \right)_{-} \right)_{\infty, \, \geq -n_{0} + 1}, \\
		&\hat{\omega}(z)_{-} = \left( \frac{\xi(z) - \hat{\xi}(z)}{\zeta'(z)} \right)_{-}, \\
		&\left.\hat{\omega}(z)_{+}\right|_{D_{j}\setminus\{\varphi_j\}} = \left( -\frac{\hat{\xi}(z)}{\hat{a}'(z)} + \left( \frac{\xi(z) - \hat{\xi}(z)}{\zeta'(z)} \right)_{+} \right)_{\varphi_{j},\, \leq n_{j}},\quad j\in\{1,2,\cdots,m\},
	\end{align*}
satisfying \( \eta (\vec{\omega}) = \vec{\xi} \).
\end{lem}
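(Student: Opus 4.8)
My plan is to verify directly that the four displayed formulas define a pair $\vec\omega$ with $\eta(\vec\omega)=\vec\xi$. The organizing step is an identity that holds for the image of \emph{any} pair under \eqref{etaom}. Substituting $\hat a'(z)=a'(z)-\zeta'(z)$ into \eqref{etaom} and writing $\omega a'+\hat\omega\hat a'=(\omega+\hat\omega)a'-\hat\omega\zeta'$, the two components $(\xi,\hat\xi)=\eta(\omega,\hat\omega)$ combine, after the cancellation $a'(\omega+\hat\omega)_--(\omega+\hat\omega)a'=-a'(\omega+\hat\omega)_+$, into
\[
\frac{\xi(z)-\hat\xi(z)}{\zeta'(z)}=\hat\omega(z)_--\omega(z)_+.
\]
Since $\hat\omega_-$ is of ``$-$''-type and $\omega_+$ of ``$+$''-type, applying the projections $(\cdot)_\pm$ to this relation forces $\omega_+=-\big(\tfrac{\xi-\hat\xi}{\zeta'}\big)_+$ and $\hat\omega_-=\big(\tfrac{\xi-\hat\xi}{\zeta'}\big)_-$, which are precisely the first and third displayed formulas. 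It also shows that the ``$\xi-\hat\xi$'' part of $\eta(\vec\omega)$ will be matched automatically for any completion of $\omega_-$ and $\hat\omega_+$, so it remains only to arrange $\tilde\xi=\xi$ for the $a$-component, whereupon $\tilde{\hat\xi}=\hat\xi$ follows.

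Next I would isolate the conceptual reason the formulas work by first treating the \emph{untruncated} completion $\omega=\tfrac{\xi}{a'}-g$, $\hat\omega=-\tfrac{\hat\xi}{\hat a'}+g$ with $g=\tfrac{\xi-\hat\xi}{\zeta'}$ (so that $\omega_-=\tfrac{\xi}{a'}-g_-$ and $\hat\omega_+=-\tfrac{\hat\xi}{\hat a'}+g_+$). A one-line computation yields the two clean relations
\[
\omega+\hat\omega=\frac{\xi}{a'}-\frac{\hat\xi}{\hat a'},\qquad \omega a'+\hat\omega\hat a'=(\xi-\hat\xi)-g\,\zeta'=0,
\]
and substituting them into \eqref{etaom} collapses it to $\eta(\omega,\hat\omega)=\big(a'(\tfrac{\xi}{a'}-\tfrac{\hat\xi}{\hat a'})_-,\,-\hat a'(\tfrac{\xi}{a'}-\tfrac{\hat\xi}{\hat a'})_+\big)$. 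The tangent-space structure of $\mathcal{M}$ then finishes the identification: since the top two coefficients of $a(z)=z^{n_0}+a_{n_0-2}z^{n_0-2}+\cdots$ are frozen, $\xi=\partial_X a=O(z^{n_0-2})$ while $a'=O(z^{n_0-1})$, so $\tfrac{\xi}{a'}$ vanishes at $\infty$ and is of ``$-$''-type; dually $\hat\xi=\partial_X\hat a$ has a pole of order at most $n_j+1$ at $\varphi_j$ (the extra order coming from the motion of $\varphi_j$) whereas $\hat a'$ has a pole of order exactly $n_j+1$, so $\tfrac{\hat\xi}{\hat a'}$ is regular at each $\varphi_j$ and is of ``$+$''-type. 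Hence $(\tfrac{\xi}{a'}-\tfrac{\hat\xi}{\hat a'})_-=\tfrac{\xi}{a'}$ and $(\tfrac{\xi}{a'}-\tfrac{\hat\xi}{\hat a'})_+=-\tfrac{\hat\xi}{\hat a'}$, giving $\eta(\omega,\hat\omega)=(\xi,\hat\xi)$.

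Finally I would pass from this completion to the stated formulas, which differ from it only by the tails removed by $(\cdot)_{\infty,\ge -n_0+1}$ and $(\cdot)_{\varphi_j,\le n_j}$: a correction to $\omega_-$ of order $\le-n_0$ at $\infty$ and a correction to $\hat\omega_+$ of order $\ge n_j+1$ at each $\varphi_j$. These lie in $\ker\eta$: if $\delta=\delta_-$ vanishes to order $\ge n_0$ at $\infty$ then $\delta a'$ is again of ``$-$''-type, so the entries of $\eta(\delta,0)$ are $a'\delta-(\delta a')_-=a'\delta-\delta a'=0$ and $(\delta a')_+=0$; symmetrically, a ``$+$''-type $\hat\delta$ vanishing to order $\ge n_j+1$ at each $\varphi_j$ makes $\hat\delta\hat a'$ regular and of ``$+$''-type, so $\eta(0,\hat\delta)=0$. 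Thus the truncated $\vec\omega$ still maps to $\vec\xi$, while the truncations have the additional virtue of producing a genuine element of $\mathcal{H}\times\mathcal{H}$. The hard part will be exactly this bookkeeping in the several-circle setting: controlling the ``$\pm$''-projections when $a'$ or $\hat a'$ may vanish away from their poles — so that the global ``$\pm$''-type statements above are to be read through the finite truncations determined by the local expansions at $\infty$ and at the $\varphi_j$ — and checking that the four locally defined pieces $\omega_\pm,\hat\omega_\pm$ indeed glue to germs holomorphic near $\gamma$. Carrying out these analytic verifications, rather than the algebra, is where the real effort goes.
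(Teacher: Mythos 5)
You have the right skeleton at the two ends: your opening identity $\sigma-\hat\sigma=\zeta'(\hat\omega_{-}-\omega_{+})$ is exactly the paper's \eqref{sigma-hsigma}, the forced values $\omega_{+}=-g_{+}$, $\hat\omega_{-}=g_{-}$ (with $g=(\xi-\hat\xi)/\zeta'$) are correct, the local order counts at $\infty$ and at the $\varphi_j$ are correct, and your kernel computation is valid for corrections that genuinely are of ``$-$''- resp.\ ``$+$''-type. The gap is the pivot in the middle. Your untruncated-completion argument needs $\xi/a'$ to be of ``$-$''-type on $\mathbf{D}^{ext}$ and $\hat\xi/\hat a'$ to be of ``$+$''-type on $\mathbf{D}^{int}$, i.e.\ it needs $a'$ to be zero-free on $\mathbf{D}^{ext}$ and $\hat a'$ to be zero-free on each $D_j\setminus\{\varphi_j\}$. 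That fails on a nonempty open subset of $\mathcal{M}$: e.g.\ take $a$ a polynomial $z^{n_0}+\sum_{k\le n_0-2}a_kz^k$ with $n_0\ge2$ whose $n_0-1$ critical points are placed in $\mathbf{D}^{ext}$. At such a point, for generic $\xi\in T_{\vec a}\mathcal{M}$ the quotient $\xi/a'$ has poles in $\mathbf{D}^{ext}$, and its actual ``$-$''-projection is $\xi/a'$ minus its principal parts at those poles (those principal parts are ``$+$''-type, being holomorphic on $\mathbf{D}^{int}$). Hence for $F=\xi/a'-\hat\xi/\hat a'$ one has $F_{-}\ne\xi/a'$, so your untruncated pair satisfies $\eta(\vec\omega)=(a'F_{-},-\hat a'F_{+})\ne(\xi,\hat\xi)$; and, by the same defect, the corrections $(\xi/a'-g_{-})_{\infty,\le -n_0}$ and $(-\hat\xi/\hat a'+g_{+})_{\varphi_j,\ge n_j+1}$ needed to pass to the lemma's truncated formulas carry the same poles, so they are not of pure type and your kernel lemma does not apply to them. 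Both halves of your scaffolding fail simultaneously, precisely in the case you set aside as ``bookkeeping''; that case is generic, and it is the actual content of the lemma.

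The repair is not bookkeeping inside your decomposition but the reorganization that constitutes the paper's proof: never divide by $a'$ or $\hat a'$ except inside Laurent coefficients at $\infty$ or at $\varphi_j$ (where these derivatives cannot vanish), and verify the truncated formulas directly. Concretely: (i) a ``$-$''-type function which is $O(z^{-n_0})$ at $\infty$ stays ``$-$''-type after multiplication by $a'$, so in $\sigma_{+}=\left(a'[\omega+\hat\omega]_{-}\right)_{+}$ only the finitely many coefficients of $[\omega+\hat\omega]_{-}$ of degree $\ge -n_0+1$ matter, and by the stated formulas these coincide with the corresponding coefficients of $\xi/a'$; (ii) the combination $\xi-a'\left(\xi/a'\right)_{\infty,\ge-n_0+1}$ is holomorphic on $\mathbf{D}^{ext}$ --- written as a product, the would-be poles at zeros of $a'$ never appear --- and is $O(z^{-1})$ at $\infty$, hence ``$-$''-type; this yields \eqref{sigmap}, $\sigma_{+}=\xi_{+}$. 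The dual argument at each $\varphi_j$ gives \eqref{hsigmam}, and combined with \eqref{sigma-hsigma} the lemma follows. (Alternatively, your argument does prove the statement on the locus where $a'$ and $\hat a'$ are zero-free in the relevant domains, and one could try to conclude by analytic dependence on $\vec a$; but that continuation step is not in your proposal and would itself require justification, e.g.\ connectedness of $\mathcal{M}$.)
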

\begin{proof}
	Denote \( (\sigma, \hat{\sigma})=\eta (\vec{\omega}) \). By using \eqref{etaom} we have
	\begin{align}
		\sigma_{+} &= \left(a'(z)[\omega(z) + \hat{\omega}(z)]_{-}\right)_{+} \nonumber\\
		&= \left(a'(z)([\omega(z) + \hat{\omega}(z)]_{-})_{\infty,\, \geq -n_{0}+1}\right)_{+} \nonumber\\
		&= \left(a'(z)\left(\frac{\xi(z)}{a'(z)}\right)_{\infty,\, \geq -n_{0}+1}\right)_{+} \nonumber\\
		&= \xi(z)_{+}, \label{sigmap}
\\
		\hat{\sigma}_{-} &= -\sum_{j=1}^{m}(\hat{a}'(z)[\omega(z) + \hat{\omega}(z)]_{+})_{\varphi_{j}, \leq -1} \nonumber\\
		&= -\sum_{j=1}^{m}(\hat{a}'(z)([\omega(z) + \hat{\omega}(z)]_{+})_{\varphi_{j}, \leq n_{j}})_{\varphi_{j}, \,\leq -1} \nonumber\\
		&= \sum_{j=1}^{m}\left(\hat{a}'(z)\left(\frac{\hat{\xi}(z)}{\hat{a}'(z)}\right)_{\varphi_{j}, \leq n_{j}}\right)_{\varphi_{j}, \,\leq -1} \nonumber\\
		&= \sum_{j=1}^{m}\hat{\xi}(z)_{\varphi_{j}, \leq -1}  \nonumber \\
		&= \hat{\xi}(z)_{-}, \label{hsigmam}
\\
		\sigma - \hat{\sigma} &= (a'(z) - \hat{a}'(z))(\hat{\omega}(z)_{-} - \omega(z)_{+}) \\
		&= \zeta'(z)(\hat{\omega}(z)_{-} - \omega(z)_{+}) \\
		&= \xi(z) - \hat{\xi}(z). \label{sigma-hsigma}
	\end{align}
Taking these three equalities together, one has
\[
\sigma_{-} = \xi(z)_{-}, \quad \hat{\sigma}_{+} = \hat{\xi}(z)_{+}.
\]
Thus \( \eta (\vec{\omega}) = \vec{\xi} \), and the lemma is proved.
\end{proof}

Let us consider a pairing between an element \( \vec{\omega} = (\omega(z), \hat{\omega}(z))  \in \mathcal{H} \times \mathcal{H} \) and a vector $X\in T_{\vec{a}}\mathcal{M}$ as
\begin{equation}\label{whipair}
	\langle \vec{\omega}, X \rangle = \frac{1}{2\pi \mathrm{i}} \sum_{s=1}^{m} \oint_{\gamma_s} \left( \omega(z) \partial_X a(z) + \hat{\omega}(z) \partial_X \hat{a}(z) \right) dz.
\end{equation}

\begin{lem} \label{pairtancotan}
For any \( \vec{\omega} \in \mathcal{H} \times \mathcal{H} \) and  $X\in T_{\vec{a}}\mathcal{M}$, it holds that
	\begin{equation}\label{pairmec}
		\langle\vec{\omega},X\rangle=\langle\eta (\vec{\omega}),X\rangle_{\eta},
	\end{equation}
where $\langle~,~\rangle_{\eta}$ is the metric given in \eqref{whimetric}.
\end{lem}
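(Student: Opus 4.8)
The plan is to evaluate the right-hand side $\langle\eta(\vec\omega),X\rangle_\eta$ of \eqref{pairmec} directly from the definition \eqref{whimetric} of the metric and to reduce it to the pairing \eqref{whipair}. Throughout write $\eta(\vec\omega)=(\sigma,\hat\sigma)$, $\partial_Xa=\xi$, $\partial_X\hat a=\hat\xi$, and recall the intermediate identities already obtained in the proof of Lemma~\ref{whiinveta}, namely
\[
\partial_{\eta(\vec\omega)}\zeta=\sigma-\hat\sigma=\zeta'(z)\,(\hat\omega(z)_--\omega(z)_+),\qquad \sigma_+=(a'(z)(\omega+\hat\omega)_-)_+,\quad \hat\sigma_-=-(\hat a'(z)(\omega+\hat\omega)_+)_-.
\]
Since $\ell=a_++\hat a_-$ and the Cauchy projections are linear, one has $\partial_{\eta(\vec\omega)}\ell=\sigma_++\hat\sigma_-$ and $\partial_X\ell=\xi_++\hat\xi_-$, while $\partial_X\zeta=\xi-\hat\xi$.

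First I would treat the $\zeta$-contour term of \eqref{whimetric}. Substituting $\partial_{\eta(\vec\omega)}\zeta=\zeta'(\hat\omega_--\omega_+)$ and $\partial_X\zeta=\xi-\hat\xi$, the factor $\zeta'(z)$ cancels the denominator, so this term collapses to the plain contour integral
\[
-\frac{1}{2\pi\mathrm{i}}\sum_{j=1}^m\oint_{\gamma_j}(\hat\omega(z)_--\omega(z)_+)(\xi(z)-\hat\xi(z))\,dz,
\]
with no remaining denominators.

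The core of the argument is the $\ell$-residue term $-(\res_{z=\infty}+\sum_j\res_{z=\varphi_j})\frac{\partial_{\eta(\vec\omega)}\ell\,\partial_X\ell}{\ell'}\,dz$. Here I would rewrite each residue as a contour integral over $\gamma$: since $\ell$ is meromorphic on $\mathbb{P}^1$ with poles exactly at $\infty$ and the $\varphi_j$, a residue at $\infty$ (resp. at $\varphi_j$) equals, up to sign and the standard orientation of $\gamma$, a $\gamma$-integral of the integrand. The delicate point is that $\frac{\partial_{\eta(\vec\omega)}\ell\,\partial_X\ell}{\ell'}$ is \emph{not} holomorphic across $\mathbf{D}^{int}$ or $\mathbf{D}^{ext}$, because $\ell'$ has zeros there, so a naive contour deformation would pick up spurious residues at those zeros. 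To avoid this I would use the projection structure $\partial_X\ell=\xi_++\hat\xi_-$ (and likewise for $\partial_{\eta(\vec\omega)}\ell$) to split the integrand and deform each piece only into the region where it is genuinely regular, so that the zeros of $\ell'$ never enter; the combination $\partial_X\ell/\ell'$ is then recognised as minus the variation $\partial_X z$ taken at fixed $\ell$, which removes the factor $\ell'$ altogether and makes the residues computable (as in the change of variables to $\chi_0,\chi_k$ used for Proposition~\ref{etaflat}).

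Finally I would add the two contributions and simplify, using the idempotency and mutual orthogonality of the Cauchy projections under $\oint_\gamma$ — namely $(f_\pm)_\pm=f_\pm$, $(f_\pm)_\mp=0$, and the vanishing of the appropriate $\gamma$-integrals of products of like-projected functions — together with the decompositions $a=\ell+\zeta_-$ and $\hat a=\ell-\zeta_+$ from \eqref{zetaell2a} to re-express $\xi$ and $\hat\xi$. All cross terms then cancel and the total collapses to $\frac{1}{2\pi\mathrm{i}}\sum_j\oint_{\gamma_j}(\omega\xi+\hat\omega\hat\xi)\,dz=\langle\vec\omega,X\rangle$. The whole computation must be carried out circle by circle, since, as noted after \eqref{1fun}, the projections do not respect the individual $\gamma_j$ even though the integrand does; keeping track of these separate circles and of the orientations in the residue-to-contour conversion is the main obstacle, exactly as in the $m=1$ case of \cite{ma2021infinite}.
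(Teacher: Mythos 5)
Your handling of the $\zeta$-part is correct, and it is exactly the paper's own computation (the term there called $\mathcal{I}_1$) run from the metric side instead of the pairing side; the identities you quote from Lemma~\ref{whiinveta} are the same ones the paper starts from. The genuine gap is in the $\ell$-part, i.e.\ in relating $-\left(\res_{z=\infty}+\sum_{j}\res_{z=\varphi_{j}}\right)\frac{\partial_1\ell\cdot\partial_2\ell}{\ell'}\,dz$ to contour integrals over $\gamma$. You correctly identify the obstruction (zeros of $\ell'$ in $\mathbf{D}^{int}$ and $\mathbf{D}^{ext}$ would contribute spurious residues under deformation), but neither of your proposed remedies removes it. Splitting the numerator as $\partial_X\ell=\xi_++\hat\xi_-$ does nothing to the denominator: every resulting piece still carries $1/\ell'$, hence still has poles at all critical points of $\ell$, so there is no region ``where the piece is genuinely regular'' into which it can be deformed. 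And the identity $\partial_X\ell/\ell'=-\partial_X z|_{\ell}$ is only a local rewriting, valid away from the critical points of $\ell$: it is the right tool for evaluating the residues at $\infty$ and $\varphi_k$ in the coordinates $\chi_0,\chi_k$ (as in Proposition~\ref{etaflat}), but it does not ``remove the factor $\ell'$'' globally and supplies no bridge between those residues and $\gamma$-integrals.

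What closes this gap in the paper is a pair of truncation identities that follow from the explicit form \eqref{etaom} of $\eta$, namely (with $(\xi_1,\hat\xi_1)=\eta(\vec\omega)$)
\[
\left(\frac{\xi_1}{a'}\right)_{\infty,\,\ge -n_0+1}=\bigl((\omega+\hat\omega)_-\bigr)_{\infty,\,\ge -n_0+1},\qquad
\left(\frac{\hat\xi_1}{\hat a'}\right)_{\varphi_j,\,\le n_j}=-\bigl((\omega+\hat\omega)_+\bigr)_{\varphi_j,\,\le n_j},
\]
combined with the observation that only these finitely many Laurent coefficients can ever matter: $\partial_2 a$ has pole order at most $n_0-2$ at $\infty$ and $\partial_2\hat a$ has pole order at most $n_j+1$ at $\varphi_j$, so the residues at the marked points of $\frac{\partial_1\ell\,\partial_2\ell}{\ell'}$ coincide with those of $\bigl(\frac{\partial_1\ell}{\ell'}\bigr)_{\mathrm{truncated}}\cdot\partial_2\ell$, and the truncated factor equals (after the harmless replacements $a\to\ell$, $a'\to\ell'$, whose errors have vanishing truncation) the corresponding truncation of $(\omega+\hat\omega)_{\mp}$. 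All contour--residue conversions in the paper are performed on these truncated Laurent-polynomial objects, whose poles are explicitly located, so the zeros of $\ell'$ never enter at any stage. Without this device, or an equivalent one, your outline cannot be completed; with it, your computation becomes the paper's proof read backwards.
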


\begin{proof}
Denote $\eta(\vec{\omega})=(\xi_{1}(z),\hat{\xi}_{1}(z))$. From \eqref{sigmap}--\eqref{sigma-hsigma} it follows that
	\begin{align}
& 		\xi_{1}(z)-\hat{\xi}_{1}(z)=-\zeta_{1}'(z)(\omega(z)_{+}-\hat{\omega}(z)_{-}), \label{etalem1}
\\
		&\left(\frac{\xi_{1}(z)}{a'(z)}\right)_{\infty,\,\ge -n_{0}+1}=\left( (\omega(z)+\hat{\omega}(z))_{-} \right)_{\infty,\,\ge-n_{0}+1},\label{etalem2} \\
		&\left(\frac{\hat{\xi}_{1}(z)}{\hat{a}'(z)}\right)_{\varphi_{j},\,\le n_{j}}=-\left( (\omega(z)+\hat{\omega}(z))_{+} \right)_{\varphi_{j},\,\le n_{j}},\quad j\in\{1,2,\cdots,m\}.\label{etalem3}
	\end{align}
On the other hand, writing $ \p_{X}\vec{a}=(\xi_{2}(z),\hat{\xi}_{2}(z))$, we have
\begin{equation}\label{I123}
		\langle \vec{\omega},X\rangle= \frac{1}{2\pi\mathrm{i}}\dsum_{j=1}^{m}\oint_{\gamma_{j}}\left(\omega(z)\xi_{2}(z) +\hat{\omega}(z)\hat{\xi}_{2}(z)\right)dz
=\mathcal{I}_{1}+\mathcal{I}_{2}+\mathcal{I}_{3},
\end{equation}
	where
	$$
	\begin{aligned}
		\mathcal{I}_{1}&=\frac{1}{2\pi\mathrm{i}}\dsum_{j=1}^{m}\oint_{\gamma_{j}}\left(\omega(z)_{+}-\hat{\omega}(z)_{-}\right)\left(\xi_{2}(z)-\hat{\xi}_{2}(z)\right)dz,\\
		\mathcal{I}_{2}&=\frac{1}{2\pi\mathrm{i}}\dsum_{j=1}^{m}\oint_{\gamma_{j}}\left(\omega(z)_{-}+\hat{\omega}(z)_{-}\right)\xi_{2}(z)dz,\\
		\mathcal{I}_{3}&=\frac{1}{2\pi\mathrm{i}}\dsum_{j=1}^{m}\oint_{\gamma_{j}}\left(\omega\left(z\right)_{+}+\hat{\omega}\left(z\right)_{+}\right)\hat{\xi}_{2}\left(z\right)dz.
	\end{aligned}
	$$
One rewrites $(\xi_{\nu}(z),\hat{\xi}_{\nu}(z))=\partial_\nu(a(z),\hat{a}(z))$ with $\nu=1,2$. With the help of \eqref{etalem1}, we have
	$$
\begin{aligned}
		\mathcal{I}_{1}&=-\frac{1}{2\pi\mathrm{i}} \dsum_{j=1}^{m}\oint_{\gamma_{j}}\frac{(\xi_{1}(z)-\hat{\xi}_{1}(z))(\xi_{2}(z)-\hat{\xi}_{2}(z))}{\zeta'(z)}d z
		=-\frac{1}{2\pi\mathrm{i}}\dsum_{j=1}^{m}\oint_{\gamma_{j}}\frac{\p_{1}\zeta(z)\cdot \p_{2}\zeta(z)}{\zeta'(z)}d z.
	\end{aligned}
	$$
Similarly, by using  \eqref{etalem2} and \eqref{etalem3}, we obtain
	\begin{align*}
		\mathcal{I}_{2}&=\frac{1}{2\pi\mathrm{i}}\dsum_{j=1}^{m}\oint_{\gamma_{j}} \left( [\omega(z)+\hat{\omega}(z)]_{-} \right)_{\infty,\,\ge-n_{0}+1}\xi_{2}(z)\,dz \\
		&=\frac{1}{2\pi\mathrm{i}}\dsum_{j=1}^{m}\oint_{\gamma_{j}} \left(\frac{\p_1 a(z)}{a'(z)}\right)_{\infty,\,\ge-n_{0}+1}\p_2 a(z)\,dz \\
		&=\frac{1}{2\pi\mathrm{i}}\dsum_{j=1}^{m}\oint_{\gamma_{j}} \left(\frac{\p_1 \ell(z)}{a'(z)}\right)_{\infty,\,\ge-n_{0}+1}\p_2 \ell(z) \,dz \\
		&=\frac{1}{2\pi\mathrm{i}}\dsum_{j=1}^{m}\oint_{\gamma_{j}} \left(\frac{\p_1 \ell(z)}{\ell'(z)}\right)_{\infty,\,\ge-n_{0}+1}\p_2 \ell(z) \,dz \\
		&=-\res_{z=\infty}\frac{\p_{1}\ell(z)\cdot \p_{2}\ell(z)}{\ell'(z)}\,dz,
\\
		\mathcal{I}_{3}&=\frac{1}{2\pi\mathrm{i}}\sum_{j=1}^{m}\oint_{\gamma_{j}} \left([\omega(z)+\hat{\omega}(z)]_{+}\right)_{\varphi_{j},\,\le n_{j}} \hat{\xi}_{2}(z)\,dz\\
		&=-\frac{1}{2\pi\mathrm{i}}\sum_{j=1}^{m}\oint_{\gamma_{j}} \left(\frac{\p_1\hat{a}(z)}{\hat{a}'(z)}\right)_{\varphi_{j},\,\le n_{j}} \p_2\hat{a}(z)\,dz\\
		&=-\frac{1}{2\pi\mathrm{i}}\sum_{j=1}^{m}\oint_{\gamma_{j}} \left(\frac{\p_1\ell(z)}{\hat{a}'(z)}\right)_{\varphi_{j},\,\le n_{j}} \p_2\ell(z)\,dz\\
		&=-\frac{1}{2\pi\mathrm{i}}\sum_{j=1}^{m}\oint_{\gamma_{j}} \left(\frac{\p_1\ell(z)}{\ell'(z)}\right)_{\varphi_{j},\,\le n_{j}} \p_2\ell(z)\,dz\\
		&=-\sum_{j=1}^{m}\res_{z=\varphi_{j}}\frac{\p_{1}\ell(z)\cdot \p_{2}\ell(z)}{\ell'(z)}\,dz.
	\end{align*}
Substituting them into \eqref{I123} and taking \eqref{whimetric} together, the lemma is proved.
\end{proof}

As an application of Lemmas \ref{whiinveta} and \ref{pairtancotan}, from now on we identify the cotangent space and a quotient space, namely, we take
\begin{equation}\label{cotspace}
 T^{\ast}_{\vec{a}}\mathcal{M}=( \mathcal{H} \times \mathcal{H} )/\mathrm{ker}\,\eta.
\end{equation}
The pairing between a covector and a vector is given by \eqref{whipair}, which is nondegenerate since the metric \eqref{whimetric} is nondegenerate.

%
%

\subsection{Frobenius algebra of the tangent space}
We want to construct a Frobenius algebra structure in each tangent space of $\mathcal{M} $, namely, a multiplication between vectors that is commutative, associative, with unit vector, and invariant with respect to the flat metric \eqref{whimetric}.

Firstly, given \( \vec{a}=(a(z),\hat{a}(z)) \in \mathcal{M} \), let us introduce a multiplication between two elements $\vec{\omega}_\nu=(\omega_\nu(z),\hat{\omega}_\nu(z))\in \mathcal{H} \times \mathcal{H}$ with $\nu\in\{1,2\}$ as
 \begin{align}
		\vec{\omega}_1 \circ \vec{\omega}_{2} =& (\omega_{2}(\omega_{1}a')_{+} - \omega_{1}(\omega_{2}a')_{-} - \omega_{2}(\hat{\omega}_{1}\hat{a}')_{-} - \omega_{1}(\hat{\omega}_{2}\hat{a}')_{-}, \nonumber\\
		& \hat{\omega}_{2}(\hat{\omega}_{1}\hat{a}')_{+} - \hat{\omega}_{1}(\hat{\omega}_{2}\hat{a}')_{-} + \hat{\omega}_{1}(\omega_{2}a')_{+} + \hat{\omega}_{2}(\omega_{1}a')_{+}). \label{multiHH}
	\end{align}
\begin{lem}\label{copro}
	The space \( \mathcal{H} \times \mathcal{H} \) is a commutative associative algebra with respect to the multiplication \eqref{multiHH}.
\end{lem}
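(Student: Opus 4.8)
The plan is to verify directly that the multiplication \eqref{multiHH} is commutative and associative by unwinding the projection identities, rather than by appealing to any abstract structure. Commutativity should be the easier half: inspecting \eqref{multiHH} one sees that interchanging the indices $1\leftrightarrow 2$ permutes the four summands in each of the two components among themselves, so the only genuine check is that the asymmetric-looking terms pair up correctly. Concretely, in the first component the term $\omega_{2}(\omega_{1}a')_{+}$ becomes $\omega_{1}(\omega_{2}a')_{+}$, while $-\omega_{1}(\omega_{2}a')_{-}$ becomes $-\omega_{2}(\omega_{1}a')_{-}$; since $(\omega_\nu a')_{+}+(\omega_\nu a')_{-}=\omega_\nu a'$, the difference between $\vec\omega_1\circ\vec\omega_2$ and $\vec\omega_2\circ\vec\omega_1$ collapses to expressions of the form $\omega_2(\omega_1 a')-\omega_1(\omega_2 a')=0$, and similarly for the hatted terms. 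So commutativity reduces to the bookkeeping that the $+$ and $-$ projections recombine into the full product.

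For associativity I would compute $(\vec\omega_1\circ\vec\omega_2)\circ\vec\omega_3$ and $\vec\omega_1\circ(\vec\omega_2\circ\vec\omega_3)$ and show the two agree. The key algebraic tool is the pair of projection rules: $(\cdot)_{\pm}$ are complementary idempotents ($f=f_{+}+f_{-}$), and crucially the products $(f\,g_{\pm})_{\pm}$ satisfy the standard dispersionless-type identities such as $\big((f_{+}g)_{+}\big)_{+}=(f_{+}g)_{+}$ and $\big(f_{+}(g_{-})\big)_{+}$-type reductions that let one strip redundant projections. The strategy is to expand both triple products fully, then systematically apply these idempotency and "projection-absorption" rules to every doubly-projected term, reducing each side to a sum of terms each carrying at most a single outer projection acting on a product of three factors $\omega$'s and derivatives $a',\hat a'$. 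I expect both sides to reduce to the same symmetric normal form, manifestly symmetric in the three indices, whence associativity follows.

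The main obstacle is the projection-absorption step in the presence of the $m$-circle geometry: here $(\cdot)_{+}$ and $(\cdot)_{-}$ are the Cauchy-type projections onto $\mathbf{D}^{int}$ and $\mathbf{D}^{ext}$ defined via the contour integrals, not the naive truncations of a single Laurent series, and as the excerpt explicitly warns, $(f\,\mathbf{1}_{\gamma_j})_{\pm}$ need not vanish off $\gamma_j$. Consequently the elementary identity $(f_{+}g_{+})_{-}=0$ and its relatives must be re-justified in this setting, i.e.\ one must confirm that $(\cdot)_{+}$ maps $\mathcal H$ into functions holomorphic on $\mathbf{D}^{int}$ and $(\cdot)_{-}$ into functions holomorphic on $\mathbf{D}^{ext}$ vanishing at $\infty$, so that products of $+$ and $-$ parts still obey the absorption rules. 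Once that analytic fact is in place, the combinatorics of associativity proceed exactly as in the $m=1$ case of \cite{ma2021infinite}, and I would indicate that the computation is parallel, spelling out only the representative cancellations involving mixed $+/-$ projections while noting that the remaining terms are handled identically by symmetry.
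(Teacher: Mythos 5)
Your proposal is correct and takes essentially the paper's approach: commutativity via $f=f_{+}+f_{-}$, and associativity by expanding triple products and cancelling with the absorption rules $(f_{+}g_{+})_{-}=0$, $(f_{-}g_{-})_{+}=0$; moreover the analytic fact you flag as needing justification (that the images of $(\cdot)_{\pm}$ consist of germs extending holomorphically to $\mathbf{D}^{int}$, respectively to $\mathbf{D}^{ext}$ with value $0$ at $\infty$, hence are closed under products) is precisely what the paper uses tacitly. The only real difference is economy: the paper reduces by bilinearity to triples in which each factor has a vanishing component and, exploiting the already-established commutativity, expands a single association $(\vec{\omega}_{2}\circ\vec{\omega}_{1})\circ\vec{\omega}_{3}$ only modulo terms symmetric under $\omega_{2}\leftrightarrow\omega_{3}$ (its ``$\sim$'' notation), so that associativity becomes the statement that this expression reduces to $0$ — roughly half of the two-sided expansion you propose.
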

\begin{proof}
	The commutativity of the multiplication is clear. We proceed to show the associativity. It suffices to verify the cases that each vector has a vanishing component. For instance, suppose that \( \vec{\omega}_{\nu} = (\omega_{\nu}, 0) \) with \( \nu \in\{1, 2, 3\} \).
It is straightforward to calculate
	\begin{align*}
		&(\vec{\omega}_{2} \circ \vec{\omega}_{1}) \circ \vec{\omega}_{3} \\
		=& (((\omega_{1}a')_{+}\omega_{2} - (\omega_{2}a')_{-}\omega_{1})a')_{+}\omega_{3} - (\omega_{3}a')_{-}((\omega_{1}a')_{+}\omega_{2} - (\omega_{3}a')_{-}\omega_{1}), 0) \\
		\sim& (((\omega_{1}a')_{+}\omega_{2}a')_{+}\omega_{3} - ((\omega_{2}a')_{-}\omega_{1}a')_{+}\omega_{3} - (\omega_{3}a')_{-}(\omega_{1}a')_{+}\omega_{2} + (\omega_{3}a')_{-}(\omega_{2}a')_{-}\omega_{1}, 0) \\
		=& (((\omega_{1}a')_{+}(\omega_{2}a')_{+})_{+}\omega_{3} - (\omega_{3}a')_{-}(\omega_{1}a')_{+}\omega_{2}, 0) \\
		\sim& (((\omega_{1}a')_{+}(\omega_{2}a')_{+})_{+}\omega_{3} + (\omega_{3}a')_{+}(\omega_{1}a')_{+}\omega_{2}, 0) \\
		\sim& (0, 0),
	\end{align*}
	where the notation ``\( f \sim g \)'' means that \( f - g \) is symmetric with respect to $\omega_{2}$ and $\omega_{3}$. Thus we obtain
\[ (\vec{\omega}_{2} \circ \vec{\omega}_{1}) \circ \vec{\omega}_{3} = \vec{\omega}_{2} \circ (\vec{\omega}_{1} \circ \vec{\omega}_{3}).
\]
%
	The other cases are similar. Therefore the lemma is proved.
\end{proof}

\begin{lem} \label{cotanmult}
The formula \eqref{multiHH} defines a commutative associative multiplication on the cotangent space
$ T^{\ast}_{\vec{a}}\mathcal{M}$ (recall \eqref{cotspace}).
\end{lem}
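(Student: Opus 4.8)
The plan is to deduce the lemma from Lemma \ref{copro} together with the description \eqref{cotspace} of the cotangent space. Since $(\mathcal{H}\times\mathcal{H},\circ)$ is already a commutative associative algebra by Lemma \ref{copro}, the multiplication \eqref{multiHH} descends to the quotient $T^{\ast}_{\vec a}\mathcal{M}=(\mathcal{H}\times\mathcal{H})/\mathrm{ker}\,\eta$, and stays commutative and associative there, precisely when $\mathrm{ker}\,\eta$ is an ideal of this algebra. So the whole task reduces to verifying that $\vec{\omega}_1\circ\vec{\omega}_2\in\mathrm{ker}\,\eta$ whenever $\vec{\omega}_1\in\mathrm{ker}\,\eta$ and $\vec{\omega}_2\in\mathcal{H}\times\mathcal{H}$ is arbitrary.

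To this end I would first extract an explicit description of $\mathrm{ker}\,\eta$ from the computations \eqref{sigmap}--\eqref{sigma-hsigma} in the proof of Lemma \ref{whiinveta}. Setting $\eta(\omega,\hat\omega)=0$ there forces $(a'(\omega+\hat\omega)_-)_+=0$, $\sum_{j}(\hat a'(\omega+\hat\omega)_+)_{\varphi_j,\le-1}=0$ and $\zeta'(\hat\omega_--\omega_+)=0$. Because $\zeta'$ does not vanish near $\gamma$, the last relation gives $\omega_+=\hat\omega_-$ in a neighbourhood of $\gamma$; since $\omega_+$ extends holomorphically to $\mathbf{D}^{int}$ while $\hat\omega_-$ extends holomorphically to $\mathbf{D}^{ext}$ and vanishes at $\infty$, gluing produces a function holomorphic on all of $\mathbb{P}^1$ that vanishes at $\infty$, hence $\omega_+=\hat\omega_-=0$. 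Substituting this back, and observing that the sum of principal parts $\sum_j(\hat a'\hat\omega)_{\varphi_j,\le-1}$ is nothing but the negative projection $(\hat a'\hat\omega)_-$, I obtain the clean characterization
\[
\mathrm{ker}\,\eta=\{(\omega,\hat\omega)\in\mathcal{H}\times\mathcal{H}:\ \omega_+=0,\ \hat\omega_-=0,\ (a'\omega)_+=0,\ (\hat a'\hat\omega)_-=0\}.
\]

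With this in hand the ideal property becomes a short direct calculation. Writing $\vec{\omega}_1\circ\vec{\omega}_2=(R,S)$ via \eqref{multiHH} for $\vec{\omega}_1=(\omega_1,\hat\omega_1)\in\mathrm{ker}\,\eta$, the two kernel conditions $(\omega_1 a')_+=0$ and $(\hat\omega_1\hat a')_-=0$ annihilate two of the four terms in each component and collapse the rest to
\[
R=-\omega_1\big[(\omega_2 a')_-+(\hat\omega_2\hat a')_-\big],\qquad S=\hat\omega_1\big[(\omega_2 a')_++(\hat\omega_2\hat a')_+\big].
\]
Now $\omega_1$ and $a'\omega_1$ are purely negative (holomorphic on $\mathbf{D}^{ext}$ and vanishing at $\infty$), so both $R$ and $a'R$, being products of purely negative factors, are again purely negative; dually $\hat\omega_1$ and $\hat a'\hat\omega_1$ are purely positive (holomorphic on $\mathbf{D}^{int}$), so $S$ and $\hat a'S$ are purely positive. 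Hence $(R,S)$ satisfies the four defining conditions of $\mathrm{ker}\,\eta$, which shows $\mathrm{ker}\,\eta$ is an ideal and completes the proof.

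I expect the only genuine obstacle to be the second step, the identification of $\mathrm{ker}\,\eta$: one must recognize that the residue condition at the poles $\varphi_j$ coincides with the vanishing of the single projection $(\hat a'\hat\omega)_-$, and justify the Liouville-type gluing that kills $\omega_+$ and $\hat\omega_-$. Once the kernel is written purely in terms of the projections $(\cdot)_\pm$, its compatibility with the product \eqref{multiHH} is immediate, because \eqref{multiHH} is built entirely from such projections and multiplication preserves both the class of purely positive and the class of purely negative functions.
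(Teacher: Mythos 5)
Your proof is correct, but it follows a genuinely different route from the paper's. The paper never computes $\mathrm{ker}\,\eta$: it introduces the trace functional $\mathrm{tr}(\vec{\omega})=\frac{1}{2\pi\mathrm{i}}\sum_{j}\oint_{\gamma_j}(\omega+\hat{\omega})\,dz$, proves the identity $\mathrm{tr}(\vec{\omega}_1\circ\vec{\omega}_2)=\langle\vec{\omega}_1,\eta(\vec{\omega}_2)\rangle$, and then gets the ideal property abstractly: for $\eta(\vec{\omega}_2)=0$ and arbitrary $\vec{\omega}$, associativity gives $\langle\vec{\omega},\eta(\vec{\omega}_1\circ\vec{\omega}_2)\rangle=\mathrm{tr}(\vec{\omega}\circ\vec{\omega}_1\circ\vec{\omega}_2)=\langle\vec{\omega}\circ\vec{\omega}_1,\eta(\vec{\omega}_2)\rangle=0$, whence $\eta(\vec{\omega}_1\circ\vec{\omega}_2)=0$ by nondegeneracy of the pairing. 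You instead identify the kernel explicitly — your derivation is sound: the three relations extracted from \eqref{sigmap}--\eqref{sigma-hsigma} are indeed valid for arbitrary $\vec{\omega}$, the condition $\zeta'(\hat{\omega}_--\omega_+)=0$ with $\zeta'\neq0$ near $\gamma$ plus the gluing/Liouville argument kills $\omega_+$ and $\hat{\omega}_-$, and the identification of $\sum_j(\cdot)_{\varphi_j,\le-1}$ with $(\cdot)_-$ is legitimate once $\hat{\omega}=\hat{\omega}_+$ extends into $\mathbf{D}^{int}$ — and then you check the ideal property by hand, using that $\mathcal{H}_+$ and $\mathcal{H}_-$ (purely positive/negative germs) are closed under multiplication. (Your displayed formulas for $R$ and $S$ are right; note that in the second component only one term is literally annihilated while $(\hat{\omega}_1\hat{a}')_+$ is completed to $\hat{\omega}_1\hat{a}'$, which is what produces the factorization.) Each approach has its merits: the paper's trace argument is shorter and the identity \eqref{trprod} is reused immediately afterwards to prove invariance of the metric in Proposition~\ref{tanprod}, but it leans on nondegeneracy of the pairing on an infinite-dimensional space; your argument is purely computational, needs no nondegeneracy, and produces as a byproduct an explicit description of $\mathrm{ker}\,\eta$ — hence a concrete model of $T^{\ast}_{\vec{a}}\mathcal{M}$ — which is of independent interest given that the paper explicitly remarks it lacks an explicit expression for the cotangent spaces when $m\ge2$.
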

\begin{proof}
Thanks to Lemma~\ref{copro}, for any $\vec{\omega}_1, \vec{\omega}_2 \in \mathcal{H} \times \mathcal{H}$, we only need to show $\eta(\vec{\omega}_1\circ \vec{\omega}_2)=0$ whenever $\eta(\vec{\omega}_2)=0$. To this end,  for an arbitrary element $\vec{\omega}=(\omega(z),\hat{\omega}(z))\in \mathcal{H} \times \mathcal{H}$, let us write
\[
\mathrm{tr}(\vec{\omega})=\frac{1}{2\pi\mathrm{i}}\sum_{j=1}^{m} \oint_{\gamma_{j}}(\omega(z)+\hat{\omega}(z))dz.
\]
Writing $\vec{\omega}_\nu=(\omega_\nu(z),\hat{\omega}_\nu(z))$ with $\nu\in\{1,2\}$, one has
\begin{align}
\mathrm{tr}(\vec{\omega}_{1}\circ \vec{\omega}_2)=& \frac{1}{2\pi\mathrm{i}}\sum_{j=1}^{m} \oint_{\gamma_{j}}
\Big(  \omega_{2}(\omega_{1} a')_{+} - \omega_{1} (\omega_{2}a')_{-} - \omega_{2}(\hat{\omega}_{1} \hat{a}')_{-} - \omega_{1} (\hat{\omega}_{2}\hat{a}')_{-} \nonumber\\
		& \qquad\qquad + \hat{\omega}_{2}(\hat{\omega}_{1} \hat{a}')_{+} - \hat{\omega}_{1} (\hat{\omega}_{2}\hat{a}')_{-} + \hat{\omega}_{1} (\omega_{2}a')_{+} + \hat{\omega}_{2}(\omega_{1} a')_{+} \Big) dz
\nonumber
\\
=& \frac{1}{2\pi\mathrm{i}}\sum_{j=1}^{m} \oint_{\gamma_{j}}
\Big( \omega_{1}(-a'(\omega_2+\hat{\omega}_2)_- -(a'\omega_2+\hat{a}'\hat{\omega}_2)_- )  \nonumber\\
		& \qquad\qquad + \hat{\omega}_{1}(\hat{a}'(\omega_2+\hat{\omega}_2)_+ + (a'\omega_2+\hat{a}'\hat{\omega}_2)_+ )   \Big) dz
\nonumber
\\
=& \langle\vec{\omega}_{1},\eta(\vec{\omega}_{2})\rangle. \label{trprod}
\end{align}
Hence
\[
\langle\vec{\omega},\eta(\vec{\omega}_{1}\circ\vec{\omega}_{2})\rangle =\mathrm{tr}(\vec{\omega}\circ\vec{\omega}_{1}\circ\vec{\omega}_{2}) =\langle\vec{\omega}\circ\vec{\omega}_{1},\eta(\vec{\omega}_{2})\rangle=0.
\]
Since $\vec{\omega}$ is arbitrary and the pairing is nondegenerate, then we obtain $\eta(\vec{\omega}_{1}\circ\vec{\omega}_{2})=0$. The lemma is proved.
\end{proof}

Let us introduce a multiplication on the tangent space \( T_{\vec{a}}\mathcal{M} \) as
\begin{equation}\label{whimul}
	\vec{\xi}_{1} \circ \vec{\xi}_{2} = \eta \circ \left(\eta^{-1}(\vec{\xi}_{1}) \circ \eta^{-1}(\vec{\xi}_{2}) \right).
\end{equation}
\begin{prop} \label{tanprod}
The multiplication \eqref{whimul} on \( T_{\vec{a}}\mathcal{M} \) is commutative, associate and invariant with respect to the metric \eqref{whimetric}.
\end{prop}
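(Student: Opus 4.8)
The plan is to treat $\eta$ as the bridge between the cotangent and tangent spaces and to push the algebra structure already established on $T^{\ast}_{\vec a}\mathcal{M}$ forward along it, then reduce the metric-invariance to associativity of the product on $\mathcal{H}\times\mathcal{H}$ via the trace formula. First I would note that, by Lemma~\ref{whiinveta}, $\eta$ is surjective with kernel $\mathrm{ker}\,\eta$, so it descends to a linear \emph{isomorphism} $\bar\eta\colon T^{\ast}_{\vec a}\mathcal{M}=(\mathcal{H}\times\mathcal{H})/\mathrm{ker}\,\eta\to T_{\vec a}\mathcal{M}$. Thus the symbol $\eta^{-1}$ in \eqref{whimul} is unambiguous (it means $\bar\eta^{-1}$), and one has $\eta^{-1}(\vec\xi_1\circ\vec\xi_2)=\eta^{-1}(\vec\xi_1)\circ\eta^{-1}(\vec\xi_2)$ in $T^{\ast}_{\vec a}\mathcal{M}$, where the latter product is the well-defined multiplication on the quotient furnished by Lemma~\ref{cotanmult}.

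With this in hand, commutativity and associativity are purely formal. Lemma~\ref{cotanmult} tells us that $\circ$ on $T^{\ast}_{\vec a}\mathcal{M}$ is commutative and associative, and $\bar\eta$ is by construction an algebra isomorphism onto $(T_{\vec a}\mathcal{M},\circ)$; hence both properties transport. Explicitly, $\vec\xi_1\circ\vec\xi_2=\eta\bigl(\eta^{-1}(\vec\xi_1)\circ\eta^{-1}(\vec\xi_2)\bigr)=\eta\bigl(\eta^{-1}(\vec\xi_2)\circ\eta^{-1}(\vec\xi_1)\bigr)=\vec\xi_2\circ\vec\xi_1$, and associativity follows the same way using the relation displayed above together with Lemma~\ref{copro}.

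For invariance I would write $\vec\xi_i=\eta(\vec\omega_i)$ with representatives $\vec\omega_i\in\mathcal{H}\times\mathcal{H}$, so that $\vec\xi_1\circ\vec\xi_2=\eta(\vec\omega_1\circ\vec\omega_2)$. Applying Lemma~\ref{pairtancotan} (i.e. \eqref{pairmec}) and then the trace identity \eqref{trprod} gives
\[
\langle\vec\xi_1\circ\vec\xi_2,\vec\xi_3\rangle_\eta=\langle\eta(\vec\omega_1\circ\vec\omega_2),\eta(\vec\omega_3)\rangle_\eta=\langle\vec\omega_1\circ\vec\omega_2,\eta(\vec\omega_3)\rangle=\mathrm{tr}\bigl((\vec\omega_1\circ\vec\omega_2)\circ\vec\omega_3\bigr),
\]
and by exactly the same chain
\[
\langle\vec\xi_1,\vec\xi_2\circ\vec\xi_3\rangle_\eta=\langle\vec\omega_1,\eta(\vec\omega_2\circ\vec\omega_3)\rangle=\mathrm{tr}\bigl(\vec\omega_1\circ(\vec\omega_2\circ\vec\omega_3)\bigr).
\]
Since $\circ$ on $\mathcal{H}\times\mathcal{H}$ is associative by Lemma~\ref{copro}, the two right-hand sides coincide, which is precisely the desired invariance $\langle\vec\xi_1\circ\vec\xi_2,\vec\xi_3\rangle_\eta=\langle\vec\xi_1,\vec\xi_2\circ\vec\xi_3\rangle_\eta$.

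I do not expect a genuine obstacle: once Lemmas~\ref{whiinveta}, \ref{pairtancotan}, \ref{copro} and \ref{cotanmult} are available, the argument is entirely formal. The only point deserving care is the well-definedness of $\eta^{-1}$ and of the induced product on the quotient $T^{\ast}_{\vec a}\mathcal{M}$ (in particular that $\mathrm{tr}(\vec\omega_1\circ\,\cdot\,)$ vanishes on $\mathrm{ker}\,\eta$, which follows from \eqref{trprod}); this is exactly what those lemmas secure. The conceptual crux is simply the observation that metric-invariance is equivalent to the equality of the two bracketings of the triple trace $\mathrm{tr}(\vec\omega_1\circ\vec\omega_2\circ\vec\omega_3)$, which in turn rests on the trace formula \eqref{trprod} established in the proof of Lemma~\ref{cotanmult}.
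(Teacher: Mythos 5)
Your proposal is correct and follows essentially the same route as the paper: commutativity and associativity are transported through $\eta$ from the quotient algebra of Lemma~\ref{cotanmult}, and invariance is reduced via \eqref{pairmec} and \eqref{trprod} to the symmetry of the triple trace $\mathrm{tr}\bigl(\vec\omega_1\circ\vec\omega_2\circ\vec\omega_3\bigr)$. Your added remarks on the well-definedness of $\eta^{-1}$ on the quotient merely make explicit what the paper leaves implicit.
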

\begin{proof}
The commutativity and associativity follows from Lemmas~\ref{whiinveta} and \ref{cotanmult}.
For any $\vec{\xi}_{1}, \vec{\xi}_{2}, \vec{\xi}_{3}\in T_{\vec{a}}\mathcal{M}$, by using \eqref{pairmec} and \eqref{trprod} we have
\begin{align*}
\langle \vec{\xi}_{1} \circ \vec{\xi}_{2}, \vec{\xi}_{3}\rangle_\eta =&  \langle \eta^{-1}(\vec{\xi}_{1})\circ \eta^{-1} (\vec{\xi}_{2}), \eta (\eta^{-1}(\vec{\xi}_{3}))\rangle
\\
=& \mathrm{tr}\left(\eta^{-1}(\vec{\xi}_{1})\circ \eta^{-1} (\vec{\xi}_{2})\circ\eta^{-1}(\vec{\xi}_{3})\right)
\\
=&\langle \vec{\xi}_{1},  \vec{\xi}_{2}\circ \vec{\xi}_{3}\rangle_\eta,
\end{align*}
which confirms the invariance. The proposition is proved.
\end{proof}

Given \( \vec{\xi} = (\xi, \hat{\xi}) \in T_{\vec{a}}\mathcal{M} \), we consider a linear
map
\[
C_{\vec{\xi}}: \mathcal{H} \times \mathcal{H} \to  T_{\vec{a}}\mathcal{M}
\]
defined by
	\begin{equation}\label{whiprod}
		C_{\vec{\xi}} (\omega, \hat{\omega}) = \left(a'(\omega \xi + \hat{\omega} \hat{\xi})_{-} - \xi (\omega a' + \hat{\omega} \hat{a}')_{-}, -\hat{a}'(\omega \xi + \hat{\omega} \hat{\xi})_{+} + \hat{\xi} (\omega a' + \hat{\omega} \hat{a}')_{+}\right).
	\end{equation}
\begin{cor}\label{Cxieta}
	The linear map \eqref{whiprod} satisfies
	\begin{equation}\label{Cxiom}
C_{\vec{\xi}} (\vec{\omega})=\vec{\xi}\circ\eta(\vec{\omega}), \quad \vec{\omega}\in \mathcal{H}\times \mathcal{H}.
	\end{equation}
\end{cor}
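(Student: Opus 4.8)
The plan is to establish the identity weakly, by pairing both sides against an arbitrary covector and then appealing to nondegeneracy, rather than by unwinding the map $\eta^{-1}$ explicitly. Since $C_{\vec{\xi}}(\vec{\omega})$ and $\vec{\xi}\circ\eta(\vec{\omega})$ both lie in $T_{\vec{a}}\mathcal{M}$, and the pairing \eqref{whipair} is nondegenerate (as observed after \eqref{cotspace}), it suffices to show
\[
\langle\vec{\mu},C_{\vec{\xi}}(\vec{\omega})\rangle=\langle\vec{\mu},\vec{\xi}\circ\eta(\vec{\omega})\rangle
\qquad\text{for every }\vec{\mu}\in\mathcal{H}\times\mathcal{H}.
\]

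First I would rewrite the right-hand side so that it no longer refers to the tangent-space multiplication. Applying Lemma~\ref{pairtancotan}, the commutativity and invariance of $\circ$ from Proposition~\ref{tanprod}, the definition \eqref{whimul} in the form $\eta(\vec{\mu})\circ\eta(\vec{\omega})=\eta(\vec{\mu}\circ\vec{\omega})$, and then Lemma~\ref{pairtancotan} once more, yields
\begin{align*}
\langle\vec{\mu},\vec{\xi}\circ\eta(\vec{\omega})\rangle
&=\langle\eta(\vec{\mu}),\eta(\vec{\omega})\circ\vec{\xi}\rangle_{\eta}
=\langle\eta(\vec{\mu})\circ\eta(\vec{\omega}),\vec{\xi}\rangle_{\eta} \\
&=\langle\eta(\vec{\mu}\circ\vec{\omega}),\vec{\xi}\rangle_{\eta}
=\langle\vec{\mu}\circ\vec{\omega},\vec{\xi}\rangle,
\end{align*}
where $\vec{\mu}\circ\vec{\omega}$ is the explicit cotangent product \eqref{multiHH}. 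Thus the corollary reduces to the purely residue-theoretic identity $\langle\vec{\mu},C_{\vec{\xi}}(\vec{\omega})\rangle=\langle\vec{\mu}\circ\vec{\omega},\vec{\xi}\rangle$, in which $\vec{\xi}=(\xi,\hat{\xi})$ may now be treated as an arbitrary element of $\mathcal{H}\times\mathcal{H}$.

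To handle this identity I would introduce the symmetric bilinear form $\langle f,g\rangle_{\mathrm{res}}=\frac{1}{2\pi\mathrm{i}}\sum_{j=1}^{m}\oint_{\gamma_{j}}f(z)\,g(z)\,dz$ on $\mathcal{H}$, in terms of which, by \eqref{whipair}, \eqref{whiprod} and \eqref{multiHH}, both sides expand into sums of eight such residues. The two structural facts I would record are that $f_{+}g_{+}$ is holomorphic inside each disk $D_{j}$, while $f_{-}g_{-}$ is holomorphic on $\mathbf{D}^{ext}$ and is $O(z^{-2})$ at $\infty$; hence $\langle f_{+},g_{+}\rangle_{\mathrm{res}}=\langle f_{-},g_{-}\rangle_{\mathrm{res}}=0$, which is equivalent to the adjointness relations $\langle f_{\pm},g\rangle_{\mathrm{res}}=\langle f,g_{\mp}\rangle_{\mathrm{res}}$. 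Moving projections across with these relations, and using $\mu\xi=\xi\mu$ together with the symmetry of $\langle\cdot,\cdot\rangle_{\mathrm{res}}$, matches most of the eight terms on each side off in pairs.

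The main obstacle is this final bookkeeping: the ``fully hatted'' terms, built only from $\hat{\mu},\hat{\xi},\hat{\omega},\hat{a}'$, do not cancel termwise. This block collapses only after recombining a projected and an unprojected contribution through the identity $\langle fg,h\rangle_{\mathrm{res}}-\langle(fg)_{-},h\rangle_{\mathrm{res}}=\langle(fg)_{+},h\rangle_{\mathrm{res}}$ (with $f=\hat{\mu},\,g=\hat{a}',\,h=\hat{\xi}\hat{\omega}$). Once this is done all remaining terms agree, so $\langle\vec{\mu},C_{\vec{\xi}}(\vec{\omega})\rangle=\langle\vec{\mu}\circ\vec{\omega},\vec{\xi}\rangle$ for every $\vec{\mu}$, and nondegeneracy of \eqref{whipair} gives $C_{\vec{\xi}}(\vec{\omega})=\vec{\xi}\circ\eta(\vec{\omega})$.
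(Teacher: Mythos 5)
Your proposal is correct and follows essentially the same route as the paper's proof: both reduce the corollary, via Lemma~\ref{pairtancotan}, the definition \eqref{whimul}, the invariance in Proposition~\ref{tanprod}, and nondegeneracy of the pairing, to the adjointness identity $\langle\vec{\mu}\circ\vec{\omega},\vec{\xi}\rangle=\langle\vec{\mu},C_{\vec{\xi}}(\vec{\omega})\rangle$, which is exactly the paper's \eqref{eq-Cxi}. The only difference is that the paper declares this identity ``straightforward to check,'' whereas you actually carry out the residue bookkeeping --- correctly, including the adjointness relations $\langle f_{\pm},g\rangle=\langle f,g_{\mp}\rangle$ and the recombination of projections needed to collapse the fully hatted block.
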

\begin{proof}
For any $\vec{\omega}_1, \vec{\omega}_2 \in \mathcal{H}\times \mathcal{H}$, it is straightforward to check
\begin{equation}\label{eq-Cxi}
\langle \vec{\omega}_1 \circ \vec{\omega}_2, \vec{\xi} \rangle = \langle \vec{\omega}_1, C_{\vec{\xi}} (\vec{\omega}_2) \rangle.
\end{equation}
By using Lemma \ref{pairtancotan} and Proposition~\ref{tanprod}, we have
\begin{align*}
\langle \vec{\omega}_1 \circ \vec{\omega}_2, \vec{\xi} \rangle=&\langle \eta(\vec{\omega}_1 \circ \vec{\omega}_2), \vec{\xi} \rangle_\eta \\
=&  \langle \eta^{-1}(\vec{\omega}_1) \circ \eta^{-1}(\vec{\omega}_2), \vec{\xi} \rangle_\eta
\\
=&  \langle \eta^{-1}(\vec{\omega}_1) , \vec{\xi}\circ \eta^{-1}(\vec{\omega}_2) \rangle_\eta
\\
=&  \langle \vec{\omega}_1 , \vec{\xi}\circ \eta^{-1}(\vec{\omega}_2) \rangle.
\end{align*}
Taking \eqref{eq-Cxi}  together and noting that $\vec{\omega}_1$ is arbitrary, we conclude the corollary.
\end{proof}


\begin{prop}
In the algebra \( (T_{\vec{a}}\mathcal{M}, \circ) \) there is a unit vector  given by (recall \eqref{flatt} and \eqref{flath})
	\begin{equation}\label{whiid}
		e  =
		\begin{cases}
			\sum_{i=1}^{m}\left(\frac{\partial}{\partial t_{i,0}} + \frac{\partial}{\partial h_{i,0}}\right), & \quad  n_{0} = 1; \\
			\frac{1}{n_{0}} \frac{\partial}{\partial h_{0,n_{0}-1}}, & \quad n_{0} \geq 2.
		\end{cases}
	\end{equation}
\end{prop}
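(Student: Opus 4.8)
The plan is to show that the vector $e$ in \eqref{whiid} acts as the identity for the product $\circ$, i.e.\ that $e\circ\vec\xi=\vec\xi$ for every $\vec\xi\in T_{\vec a}\mathcal M$. Since $\eta$ is surjective (Lemma~\ref{whiinveta}), I would write $\vec\xi=\eta(\vec\omega)$ with $\vec\omega\in\mathcal H\times\mathcal H$, and then invoke Corollary~\ref{Cxieta} to rewrite $e\circ\vec\xi=e\circ\eta(\vec\omega)=C_e(\vec\omega)$. Thus the whole statement collapses to the single operator identity
\[
C_e(\vec\omega)=\eta(\vec\omega)\qquad\text{for all }\vec\omega\in\mathcal H\times\mathcal H,
\]
with $C_e$ and $\eta$ given by \eqref{whiprod} and \eqref{etaom} respectively. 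The first task is therefore to make $e=(\partial_e a,\partial_e\hat a)$ fully explicit through Corollary~\ref{flatvect}.

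For $n_0\ge2$, where $e=\tfrac1{n_0}\,\partial/\partial h_{0,n_0-1}$, I would use \eqref{flatcom2}: near $\infty$ one has $\ell=\chi_0^{\,n_0}$, hence $\ell'\chi_0^{-(n_0-1)}=n_0\chi_0'=n_0+O(z^{-2})$ and therefore $(\ell'\chi_0^{-(n_0-1)})_{\infty,\,\ge 0}=n_0$. This gives $\partial_e a=\partial_e\hat a=1$, i.e.\ $e=(1,1)$. Substituting $\xi=\hat\xi=1$ into \eqref{whiprod} makes $C_e$ coincide term by term with \eqref{etaom}, so the identity $C_e=\eta$ is immediate in this case.

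The case $n_0=1$ is the real work. Here $e=\sum_{i=1}^m\bigl(\partial/\partial t_{i,0}+\partial/\partial h_{i,0}\bigr)$, and using \eqref{flatcom1} and \eqref{flatcom3} together with $\sum_{i=1}^m\mathbf 1_{\gamma_i}=1$ on $\gamma$ (from \eqref{1fun}) I would obtain
\[
\partial_e a=-\zeta'_- - \sum_{i=1}^m(\ell')_{\varphi_i,\,\le -1},\qquad
\partial_e\hat a=\zeta'_+ - \sum_{i=1}^m(\ell')_{\varphi_i,\,\le -1}.
\]
Note that $e\neq(1,1)$ now; nevertheless $\partial_e a-\partial_e\hat a=-\zeta'$, reflecting that $e$ still encodes the infinitesimal $z$-reparametrisation of $\zeta$. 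With $\xi=\partial_e a$ and $\hat\xi=\partial_e\hat a$ in hand, I would substitute into \eqref{whiprod} and check the two component identities of $C_e(\vec\omega)=\eta(\vec\omega)$ separately, splitting $\vec\omega$ into its $\pm$ parts and using that $a'=1+O(z^{-2})$ at $\infty$ (so $a'_+=1$) and that the principal parts of $\ell'$ at the $\varphi_i$ coincide with those of $\hat a'$.

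The main obstacle is precisely this $n_0=1$ verification: unlike the previous case the two sides of $C_e=\eta$ do \emph{not} agree term by term, and one must carefully commute the truncations $(\,\cdot\,)_\pm$, $(\,\cdot\,)_{\infty,\,\ge\cdot}$ and $(\,\cdot\,)_{\varphi_i,\,\le\cdot}$ past multiplication by $a',\hat a',\zeta'$, exploiting that the total principal part $\sum_{i}(\ell')_{\varphi_i,\,\le -1}$ cancels against $\zeta'_-$. The equalities hold because of these compensations, not because $\xi,\hat\xi$ equal $1$ pointwise; once all projections are resolved, both components reduce to \eqref{etaom}. A single-disk check (take $m=1$, $n_0=n_1=d_1=1$, $a(z)=z$, $\hat a(z)=c\,(z-\varphi_1)^{-1}$, so $\ell=z+c(z-\varphi_1)^{-1}$) illustrates the mechanism: there $e=(0,\,1+c(z-\varphi_1)^{-2})$, yet a direct expansion confirms $C_e(\vec\omega)=\eta(\vec\omega)$ identically.
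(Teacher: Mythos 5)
Your overall strategy is exactly the paper's: compute $\partial_e\vec{a}$ from Corollary~\ref{flatvect}, reduce the proposition to the operator identity $C_e(\vec{\omega})=\eta(\vec{\omega})$ on all of $\mathcal{H}\times\mathcal{H}$ via Lemma~\ref{whiinveta} and Corollary~\ref{Cxieta}, and verify that identity; your $n_0\ge2$ case is complete and correct, and your formula for $\partial_e\vec{a}$ when $n_0=1$ is also correct. The gap is that for $n_0=1$ --- the only case with real content --- the verification is never carried out: you assert that ``once all projections are resolved, both components reduce to \eqref{etaom}'', with only a vague plan about commuting the truncations past multiplication by $a',\hat{a}',\zeta'$. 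That assertion is the whole proposition, so as written the proof is incomplete; moreover the plan misdiagnoses the difficulty.

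The missing step closes with a one-line simplification of your expression for $e$, which is precisely what the paper records. For $n_0=1$ the function $\ell$ is rational with a simple pole at $\infty$ (and no constant term there, by the normalization of $a$) and poles at the $\varphi_i$, hence
\[
\ell'(z)=1+\sum_{i=1}^{m}\bigl(\ell'(z)\bigr)_{\varphi_i,\,\le-1};
\]
combining this with $a'=\ell'+\zeta'_-$ and $\hat{a}'=\ell'-\zeta'_+$, obtained by differentiating \eqref{zetaell2a}, your formula becomes
\[
\partial_e a(z)=1-a'(z),\qquad \partial_e\hat{a}(z)=1-\hat{a}'(z).
\]
Now take $\xi=1-a'$, $\hat{\xi}=1-\hat{a}'$ in \eqref{whiprod}. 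By linearity of the projections alone,
\[
\bigl(\omega\xi+\hat{\omega}\hat{\xi}\bigr)_{\pm}=(\omega+\hat{\omega})_{\pm}-(\omega a'+\hat{\omega}\hat{a}')_{\pm},
\]
and after substitution the terms $a'(\omega a'+\hat{\omega}\hat{a}')_-$ and $\hat{a}'(\omega a'+\hat{\omega}\hat{a}')_+$ cancel in the two components respectively, leaving exactly \eqref{etaom}. So $C_e(\vec{\omega})=\eta(\vec{\omega})$ holds term by term, with no need to interchange truncations with multiplication; your remark that the two sides ``do not agree term by term'' and that delicate compensations are required is an artifact of leaving $\partial_e\vec{a}$ unsimplified. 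With this simplification inserted, your argument coincides with the paper's proof.
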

\begin{proof}
By using Corollary~\ref{flatvect}, we have
\begin{equation}
	\p_{e}\vec{a}
	=\left\{
	\begin{aligned}
		&(1-a'(z),1-\hat{a}'(z)),\quad &n_{0}&=1;\\
		&(1,1),\quad &n_{0}&\ge2.\\
	\end{aligned}
	\right.
\end{equation}
It is straightforward to verify that 	
	$$
	C_{e}( \vec{\omega} )=\eta(\vec{\omega}),\quad \vec{\omega}\in\mathcal{H}\times\mathcal{H},
	$$
Hence, with the help of \eqref{Cxiom}, we obtain
		$$
	e\circ\vec{\xi}=C_{e}( \eta^{-1}(\vec{\xi}))=\vec{\xi},\quad \vec{\xi}\in T_{\vec{a}}\mathcal{M}.
	$$
	The proposition is proved.
\end{proof}

\subsection{Symmetric $3$-tensor}
Now let us consider the following symmetric $3$-tensor
 \begin{equation}\label{3tensor}
 c(\vec{\xi}_1, \vec{\xi}_2, \vec{\xi}_3) = \langle \vec{\xi}_1 \circ \vec{\xi}_2, \vec{\xi}_3 \rangle_{\eta}, 
 \quad \vec{\xi}_\nu \in T_{\vec{a}}\mathcal{M}.
 \end{equation}
\begin{prop}\label{3tensorflat}
The symmetric $3$-tensor defined by  \( c \) can be represented in the flat coordinates \eqref{flatt}--\eqref{flath} as follows:
{\small
		\begin{align*}
		c(\p_{t_{i_1,s_{1}}},\p_{t_{i_2,s_2}},\p_{t_{i_3,s_{3}}})=&
		\frac{1}{2\pi \mathrm{i}}\oint_{\gamma_{i_{1}}}\left(-\delta_{i_{1},i_{2}}\delta_{i_{2},i_{3}} \zeta^{\frac{s_{1}+s_{2}+s_{3}}{d_{i_{1}}}}\zeta'\mathbf{1}_{\gamma_{i_{1}}}(\zeta'+\zeta'_{-}+\ell')dz\right)\\
		& +	\frac{1}{2\pi \mathrm{i}}\dsum_{s=1}^{m}\oint_{\gamma_{s}}\bigg(\delta_{i_{1},i_{2}}\zeta^{\frac{s_{1}
				+s_{2}}{d_{i_{1}}}}\zeta'\mathbf{1}_{\gamma_{i_{1}}}(\zeta^{\frac{s_{3}}{d_{i_{3}}}}\zeta'\mathbf{1}_{i_{3}})_{-}
\\
& +\mathrm{c.p.}\{(i_{1},s_{1}),(i_{2},s_{2}),(i_{3},s_{3})\}\bigg)d z,\nn\\
		c(\p_{h_{0,j_1}},\p_{h_{0,j_2}},\p_{h_{0,j_3}})=&-\res_{z=\infty}\left(-\chi_{0}^{-j_{1}-j_{2}-j_{3}}\ell'(2\ell'+\zeta'_{-})\right)dz\\
		&-\res_{z=\infty}\bigg(
		\chi_{0}^{-j_{1}-j_{2}}\ell'(\ell'\chi_{0}^{-j_{3}})_{\infty,\,\ge 0}
+\mathrm{c.p.}\{j_{1},j_{2},j_{3}\}\bigg)dz,\nn\\
		c(\p_{h_{k_1,r_{1}}},\p_{h_{k_2,r_{2}}}, \p_{h_{k_3,r_{3}}})=&\res_{z=\varphi_{k_{1}}}\left(-\delta_{k_{1},k_{2}}\delta_{k_{2},k_{3}}\chi_{k_{1}}^{-r_{1}-r_{2}-r_{3}}\ell'(2\ell'-\zeta'_{+})\right)dz\\
		&+\sum_{s=1}^{m}\res_{z=\varphi_{s}}\bigg( \delta_{k_{1},k_{2}}\chi_{k_{1}}^{-r_{1}-r_{2}}\ell'(\ell'\chi_{k_{3}}^{-r_{3}})_{\varphi_{k_{3},\,\le -1}}
\\
&
+\mathrm{c.p.}\{(k_{1},r_{1}),(k_{2},r_{2}),(k_{3},r_{3})\}\bigg)d z,\nn\\
		c({\p_{t_{i_1,s_{1}}}},{\p_{t_{i_2,s_{2}}}},{\p_{h_{0,j_3}}})=&-\frac{1}{2\pi \mathrm{i}}\oint_{\gamma_{i_{1}}}\delta_{i_{1},i_{2}}\zeta^{\frac{s_{1}+s_{2}}{d_{i_{1}}}}\zeta'(\ell'\chi_{0}^{-j_{3}})_{\infty,\,\ge 0}dz, \nn\\
		c(\p_{t_{i_1,s_{1}}},\p_{t_{i_2,s_{2}}},\p_{h_{k_{3},r_{3}}})=&
		\frac{1}{2\pi \mathrm{i}}\oint_{\gamma_{i_{1}}}\delta_{i_{1},i_{2}}\zeta^{\frac{s_{1}+s_{2}}{d_{i_{1}}}}\zeta'(\ell'\chi_{k_{3}}^{-r_{3}})_{\varphi_{k_{3}},\,\le -1}dz, \nn\\ c(\p_{h_{0,j_1}},\p_{h_{0,j_2}},\p_{t_{i_{3},s_{3}}})=&
		\res_{z=\infty}\chi_{0}^{-j_{1}-j_{2}}\ell'(\zeta^{\frac{s_{3}}{d_{i_{3}}}}\zeta'\mathbf{1}_{\gamma_{i_{3}}})_{-}dz,\nn \\
		c(\p_{h_{0,j_1}},\p_{h_{0,j_2}},\p_{h_{k_{3},r_{3}}})=&
		\res_{z=\infty}\chi_{0}^{-j_{1}-j_{2}}\ell'(\ell'\chi_{k_{3}}^{-r_{3}})_{\varphi_{k_{3}},\,\le -1 }dz,\nn \\ c(\p_{h_{k_1,r_{1}}},\p_{h_{k_2,r_{2}}},\p_{t_{i_{3},s_{3}}})=&
		-\res_{z=\varphi_{k_{1}}}\delta_{k_{1},k_{2}}\chi_{k_{1}}^{-r_{1}-r_{2}}\ell'(\zeta^{\frac{s_{3}}{d_{i_{3}}}}\zeta'\mathbf{1}_{\gamma_{i_{3}}})_{+}dz,\nn\\
		c(\p_{h_{k_1,r_{1}}},\p_{h_{k_2,r_{2}}},\p_{h_{0,j_{3}}})=&
		-\res_{z=\varphi_{k_{1}}}\delta_{k_{1},k_{2}}\chi_{k_{1}}^{-r_{1}-r_{2}}\ell'(\ell'\chi_{0}^{-j_{3}})_{\infty,\,\ge 0}dz,\nn\\
		c(\p_{t_{i_1,s_{1}}},\p_{h_{0,j_2}},\p_{h_{k_{3},r_{3}}})=&0.
\end{align*}}
Here  $\mathrm{c.p.}\{(i_{1},s_{1}),(i_{2},s_{2}),(i_{3},s_{3})\}$ denotes the sum over cyclic permutations of the index pairs.
\end{prop}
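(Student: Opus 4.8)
The plan is to first collapse the definition \eqref{3tensor} into one symmetric residue formula valid for all arguments, and only afterwards substitute the flat coordinate vector fields. Fix representatives $\vec\omega_\nu=(\omega_\nu,\hat\omega_\nu)\in\mathcal H\times\mathcal H$ with $\eta(\vec\omega_\nu)=\vec\xi_\nu$, which exist by Lemma~\ref{whiinveta}. Exactly as in the proof of Proposition~\ref{tanprod}, combining the definition \eqref{whimul} of the tangent multiplication, its invariance, the identity \eqref{pairmec}, and the trace identity \eqref{trprod} yields
$$
c(\vec\xi_1,\vec\xi_2,\vec\xi_3)=\langle\vec\xi_1\circ\vec\xi_2,\vec\xi_3\rangle_\eta=\mathrm{tr}\!\left(\vec\omega_1\circ\vec\omega_2\circ\vec\omega_3\right),
$$
where $\mathrm{tr}(\vec\omega)=\frac{1}{2\pi\mathrm i}\sum_{j=1}^m\oint_{\gamma_j}(\omega+\hat\omega)\,dz$ and the triple product is taken with respect to \eqref{multiHH}. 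By Lemma~\ref{cotanmult} this expression descends to $T^\ast_{\vec a}\mathcal M=(\mathcal H\times\mathcal H)/\ker\eta$, hence is independent of the chosen representatives; this both makes the symmetry in the three slots manifest and lets me pick the most convenient $\vec\omega_\nu$ in each case.

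Next I would produce explicit preimages for the three families of flat vector fields of Corollary~\ref{flatvect}, reading them off from Lemma~\ref{whiinveta}: for $\p_{t_{i,s}}$ a covector assembled from $\zeta^{s/d_i}\mathbf 1_{\gamma_i}$ and its $(\cdot)_\pm$ parts, for $\p_{h_{0,j}}$ one built from $\chi_0^{-j}$ supported near $\infty$, and for $\p_{h_{k,r}}$ one built from $\chi_k^{-r}$ supported near $\varphi_k$; each choice is confirmed by applying $\eta$ through \eqref{etaom} and matching \eqref{flatcom1}--\eqref{flatcom3}. Equivalently, via Corollary~\ref{Cxieta} and \eqref{eq-Cxi} one may write $c(\vec\xi_1,\vec\xi_2,\vec\xi_3)=\langle\vec\omega_3,C_{\vec\xi_1}(\vec\omega_2)\rangle$ with $\eta(\vec\omega_\nu)=\vec\xi_\nu$, which is convenient because $C_{\vec\xi_1}$ is given explicitly by \eqref{whiprod} in terms of the known flat vector $\vec\xi_1$. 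Substituting into \eqref{multiHH} and the trace reduces each of the nine cases to contour integrals and residues. For the diagonal $t$–$t$–$t$, $h_0$–$h_0$–$h_0$, and $h_k$–$h_k$–$h_k$ blocks I expect the Kronecker deltas together with the correction factors $2\ell'+\zeta'_-$ and $2\ell'-\zeta'_+$ to arise from the relations $a=\ell+\zeta_-$, $\hat a=\ell-\zeta_+$, while the cyclic symmetry of the trace furnishes the $\mathrm{c.p.}$ sums. For the mixed blocks the supports of the representatives control the answer: $\mathbf 1_{\gamma_i}$ localizes an integral to $\gamma_i$, and contour deformation turns an integral over $\gamma$ into residues at $\infty$ or at the $\varphi_k$; the vanishing $c(\p_{t_{i_1,s_1}},\p_{h_{0,j_2}},\p_{h_{k_3,r_3}})=0$ then follows because the three supports cannot simultaneously overlap.

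The main obstacle, as flagged after \eqref{1fun}, is that the projections $(\cdot)_\pm$ do \emph{not} commute with multiplication by the characteristic functions $\mathbf 1_{\gamma_i}$, so truncations cannot be moved freely across the $\mathbf 1_{\gamma_i}$. Together with the fact that $a'\neq\zeta'$ in general (one only has $\zeta'=a'-\hat a'$ and $a'=\ell'+\zeta'_-$), this makes the bookkeeping of which circle or puncture each residue is attached to, and of the leftover $\zeta'_\pm$ corrections that distinguish the factors $2\ell'+\zeta'_-$ and $2\ell'-\zeta'_+$ from $2\ell'$, the genuinely delicate part. I would control it by simplifying $\vec\omega_1\circ\vec\omega_2$ modulo $\ker\eta$ (using Lemma~\ref{cotanmult}) before pairing, so that at each stage only the boundary values on the relevant $\gamma_j$ and the principal parts at $\infty$ and at the $\varphi_k$ need to be tracked.
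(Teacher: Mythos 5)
Your plan is, in substance, the paper's own proof: the trace identity $c(\vec\xi_1,\vec\xi_2,\vec\xi_3)=\mathrm{tr}(\vec\omega_1\circ\vec\omega_2\circ\vec\omega_3)$ is exactly the paper's starting formula \eqref{3tensor2} (the two agree by \eqref{trprod}), your explicit preimages are the content of Lemma~\ref{dulbas}, carrying out the multiplication \eqref{multiHH} on those preimages before pairing is Lemma~\ref{dulbaspro}, and the remaining residue computations (with the correction factors produced by $a'=\ell'+\zeta'_-$, $\hat a'=\ell'-\zeta'_+$, just as you predict) are what the paper does in detail for the $t$--$t$--$t$ entry.

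There is, however, one step that fails as stated: justifying $c(\p_{t_{i_1,s_{1}}},\p_{h_{0,j_2}},\p_{h_{k_{3},r_{3}}})=0$ by ``the three supports cannot simultaneously overlap.'' That principle is not valid here, and the table itself refutes it: for $c(\p_{h_{0,j_1}},\p_{h_{0,j_2}},\p_{h_{k_{3},r_{3}}})$ the localization regions are $\infty$, $\infty$, $\varphi_{k_3}$, which likewise have empty common intersection, yet the entry equals $\res_{z=\infty}\chi_{0}^{-j_{1}-j_{2}}\ell'(\ell'\chi_{k_{3}}^{-r_{3}})_{\varphi_{k_{3}},\,\le -1 }dz\neq 0$ in general (similarly $c(\p_{t_{i,s_1}},\p_{t_{i,s_2}},\p_{h_{k_3,r_3}})\neq0$ for $k_3\neq i$). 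The deeper reason a support heuristic cannot work is the one you yourself quote: the cotangent representatives attached to $h_{0,j}$ and $h_{k,r}$ are truncations such as $(\chi_0^{-j})_{\infty,\,\ge -n_0+1}$ and $(\ell'\chi_{k}^{-r})_{\varphi_k,\,\le -1}$, which are globally defined rational functions carrying no support restriction at all. The vanishing entry must therefore be computed like every other one: pairing $\eta^{-1}\p_{h_{0,j_2}}\circ\eta^{-1}\p_{h_{k_3,r_3}}$ from Lemma~\ref{dulbaspro} against $\p_{t_{i_1,s_1}}$ via \eqref{whipair}, the term carrying $\mathbf{1}_{\gamma_{k_3}}$ multiplies $(\zeta^{s_1/d_{i_1}}\zeta'\mathbf{1}_{\gamma_{i_1}})_+$ into a function analytic throughout $D_{k_3}$, so its integral over $\gamma_{k_3}$ vanishes by Cauchy's theorem, while the other term is a product of three functions analytic in $\mathbf{D}^{ext}$ decaying like $z^{-1}$ each, so deforming the contours to infinity leaves a residue of an $O(z^{-3})\,dz$ expression, which is zero. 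With that one repair your proposal coincides with the paper's argument.
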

The proof of this proposition is similar as in \cite{ma2021infinite}.
\begin{lem}\label{dulbas}
Up to the kernel of $\eta$ given in \eqref{etaom}, it holds that:
	\begin{align}
	\eta^{-1} \left(\frac{\partial}{\partial t_{i,s}}\right) = (\zeta(z)^{\frac{s}{d_{i}}}\mathbf{1}_{\gamma_{i}}, -\zeta(z)^{\frac{s}{d_{i}}}\mathbf{1}_{\gamma_{i}}), &\quad s \in \mathbb{Z}; \\
	\eta^{-1}\left(\frac{\partial}{\partial h_{0,j}}\right) = ((\chi_{0}(z)^{-j})_{\infty,\,\ge -n_{0}+1}, 0), &\quad 1 \leq j \leq n_{0}-1; \\
	\eta^{-1}\left(\frac{\partial}{\partial h_{k,r}}\right) = (0, (\chi_{k}(z)^{-r})_{\varphi_{k},\,\le n_{k}}\mathbf{1}_{\gamma_{k}}), &\quad 0 \leq r \leq n_{k}.
	\end{align}
	\end{lem}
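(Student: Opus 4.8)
The plan is to verify each identity by exhibiting, for the flat coordinate vector field on the left, an explicit representative of its $\eta$-preimage in $\mathcal{H}\times\mathcal{H}$; since by \eqref{cotspace} the cotangent space is $(\mathcal{H}\times\mathcal{H})/\mathrm{ker}\,\eta$, this is exactly the content of the phrase ``up to the kernel of $\eta$.'' There are two equivalent routes, and I would use whichever is shorter in each case: (a) substitute the proposed representative into the definition \eqref{etaom} of $\eta$ and check that the output equals the coordinate vector recorded in Corollary~\ref{flatvect}; or (b) feed the coordinate vector $\vec\xi=\p\vec a/\p u^\alpha$ into the inversion formula of Lemma~\ref{whiinveta} and simplify the four truncated expressions. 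Throughout I would exploit the decomposition $a=\ell+\zeta_-$, $\hat a=\ell-\zeta_+$ from \eqref{zetaell2a}, together with the normalizations $\ell=\chi_0^{n_0}$ near $\infty$ and $\ell=\chi_k^{n_k}$ near $\varphi_k$ built into the definition of the $\chi_j$.

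The cleanest case is $t_{i,s}$, handled by route (a). For $\vec\omega=(\zeta^{s/d_i}\mathbf{1}_{\gamma_i},-\zeta^{s/d_i}\mathbf{1}_{\gamma_i})$ one has $\omega+\hat\omega=0$ and $\omega a'+\hat\omega\hat a'=\zeta^{s/d_i}(a'-\hat a')\mathbf{1}_{\gamma_i}=\zeta^{s/d_i}\zeta'\mathbf{1}_{\gamma_i}$, so the first group of terms in each component of \eqref{etaom} drops out and the output collapses to $\big(-(\zeta^{s/d_i}\zeta'\mathbf{1}_{\gamma_i})_-,\,(\zeta^{s/d_i}\zeta'\mathbf{1}_{\gamma_i})_+\big)$, which is precisely $\p\vec a/\p t_{i,s}$ by \eqref{flatcom1}. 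This match is exact, not merely modulo the kernel.

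For the $h$-directions I would use route (b). Take $h_{0,\mu}$ with $1\le\mu\le n_0-1$, so that $\vec\xi=\p\vec a/\p h_{0,\mu}$ has $\xi=\hat\xi=(\ell'\chi_0^{-\mu})_{\infty,\,\ge0}$ by \eqref{flatcom2}. Then $\xi-\hat\xi=0$ forces $\omega_+=\hat\omega_-=0$ in Lemma~\ref{whiinveta}; using $a'=\ell'+\zeta'_-$ with $\zeta'_-=O(z^{-2})$ gives $\ell'/a'=1+O(z^{-n_0-1})$, whence $\xi/a'=\chi_0^{-\mu}+O(z^{-\mu-n_0-1})$ and the truncation yields $\omega_-=(\xi/a')_{\infty,\,\ge -n_0+1}=(\chi_0^{-\mu})_{\infty,\,\ge -n_0+1}$; finally, since $\hat a'$ has a pole of order $n_j+1$ at $\varphi_j$, the quantity $\xi/\hat a'$ has a zero of order at least $n_j+1$ there, so $\hat\omega_+|_{D_j}=(-\xi/\hat a')_{\varphi_j,\,\le n_j}=0$ for every $j$. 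This reproduces the stated representative $((\chi_0^{-\mu})_{\infty,\,\ge -n_0+1},0)$ on the nose. The case $h_{k,r}$ is dual: here $\xi=\hat\xi=-(\ell'\chi_k^{-r})_{\varphi_k,\,\le-1}$ by \eqref{flatcom3}, the roles of $\infty$ and $\varphi_k$ are interchanged, the cutoff $\le n_k$ replaces $\ge -n_0+1$, and the characteristic factor $\mathbf{1}_{\gamma_k}$ appears because the principal part lives at the single point $\varphi_k$; the same pole-order count sends the complementary component to zero and produces $(0,(\chi_k^{-r})_{\varphi_k,\,\le n_k}\mathbf{1}_{\gamma_k})$.

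The step I expect to be the main obstacle is the bookkeeping that relates the Cauchy projections $(\cdot)_\pm$, defined along the whole $m$-circle $\gamma$, to the local Laurent truncations at $\infty$ and $\varphi_k$, all the more so because of the characteristic-function factors $\mathbf{1}_{\gamma_k}$, which localize a germ to one circle while its $\pm$-projections spread over all of $\mathbf{D}^{int}$ or $\mathbf{D}^{ext}$. The mechanism that makes everything close is the calibration of the truncation bounds to the pole orders of $a'$ (order $n_0-1$ at $\infty$) and of $\hat a'$ (order $n_j+1$ at $\varphi_j$): any power of $\chi_0^{-\mu}$ below $z^{-n_0+1}$, or any Laurent tail at $\varphi_k$ beyond the cutoff, is pushed past the relevant projection and hence lies in $\mathrm{ker}\,\eta$. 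Once these identifications are made uniformly in $s,\mu,r$, each case reduces to a finite computation entirely parallel to the corresponding step in \cite{ma2021infinite}, and the three displayed formulas follow.
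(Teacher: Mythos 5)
Your proposal is correct, and for the $h$-coordinates it is exactly the paper's proof: the paper verifies the lemma by substituting \eqref{flatcom1}--\eqref{flatcom3} into the inversion formulae of Lemma~\ref{whiinveta} and invoking precisely the truncation identities you isolate, namely $\left(\omega_-/a'\right)_{\infty,\,\ge -n_0+1}=0$ and $\left(\omega_+/\hat a'\right)_{\varphi_k,\,\le n_k}=0$ for arbitrary $\omega\in\mathcal{H}$ (your pole-order counts at $\infty$ and $\varphi_j$), together with $(\ell'\chi_0^{-j})_{\infty,\,\ge 0}=(a'\chi_0^{-j})_{\infty,\,\ge 0}$ and $(\ell'\chi_k^{-r})_{\varphi_k,\,\le -1}=(\hat a'\chi_k^{-r})_{\varphi_k,\,\le -1}$. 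The only divergence is the $t_{i,s}$-case, where you apply $\eta$ forward to the claimed representative instead of running the inversion formula; this is legitimate and in fact slightly cleaner, since the inversion route returns $\omega_-=\big((\zeta^{s/d_i}\mathbf{1}_{\gamma_i})_-\big)_{\infty,\,\ge -n_0+1}$ rather than $(\zeta^{s/d_i}\mathbf{1}_{\gamma_i})_-$, so one must additionally observe that the discrepancy lies in $\ker\eta$, whereas your computation gives an exact match. One small imprecision to fix: in the $h_{0,\mu}$ case the error in $\xi/a'$ is $O(z^{-n_0})$, not $O(z^{-\mu-n_0-1})$, because $\xi$ is the truncated series $(\ell'\chi_0^{-\mu})_{\infty,\,\ge 0}$ rather than $\ell'\chi_0^{-\mu}$ itself, and the discarded tail contributes $O(z^{-1})/a'=O(z^{-n_0})$; the conclusion is unaffected, since the cutoff $\ge -n_0+1$ annihilates $O(z^{-n_0})$ terms as well.
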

	\begin{proof}
One substitutes \eqref{flatcom1}--\eqref{flatcom3} into the formulae in Lemma~\ref{whiinveta}. Then, by using the following equalities:
		\begin{align*}
			&\left(\frac{\omega(z)_-}{a'(z)}\right)_{\infty,\,\ge-n_{0}+1}=0, \quad \left(\frac{\omega(z)_+}{\hat{a}'(z)}\right)_{\varphi_{k},\,\le n_{k}}=0 \quad \hbox{with arbitrary}
			\quad \omega(z)\in\mathcal{H}; \\
			&(\ell'(z)\chi_{0}(z)^{-j})_{\infty,\,\ge 0}=(a'(z)\chi_{0}(z)^{-j})_{\infty,\,\ge 0}, \\
&(\ell'(z)\hat{\chi}_{k}(z)^{-r})_{\varphi_{k},\,\le -1}=(\hat{a}'(z)\hat{\chi}_{k}(z)^{-r})_{\varphi_{k},\,\le -1},
		\end{align*}
the lemma is proved.
	\end{proof}

\begin{lem}\label{dulbaspro}
	The following equalities hold true:	
		\begin{align}
	\eta^{-1} \frac{\p}{\p t_{i_{1},s_{1}}}\circ \eta^{-1} \frac{\p}{\p t_{i_{2},s_{2}}}=&\bigg(\delta_{i_{1},i_{2}}\zeta^{\frac{s_{1}+s_{2}}{d_{i_{1}}}}a'\mathbf{1}_{\gamma_{i_{1}}}
	-\zeta^{\frac{s_{1}}{d_{i_{1}}}}(\zeta^{\frac{s_{2}}{d_{i_{2}}}} \zeta'\mathbf{1}_{\gamma_{i_{2}}})_{-}\mathbf{1}_{\gamma_{i_{1}}}	
\nn \\
	& -\zeta^{\frac{s_{2}}{d_{i_{2}}}}(\zeta^{\frac{s_{1}}{d_{i_{1}}}}\zeta'\mathbf{1}_{\gamma_{i_{1}}})_{-}\mathbf{1}_{\gamma_{i_{2}}} ,\, -\delta_{i_{1},i_{2}}\zeta^{\frac{s_{1}+s_{2}}{d_{i_{1}}}}\hat{a}'\mathbf{1}_{\gamma_{i_{1}}}
\nn \\
	& 	-\zeta^{\frac{s_{1}}{d_{i_{1}}}}(\zeta^{\frac{s_{2}}{d_{i_{2}}}}\zeta'\mathbf{1}_{\gamma_{i_{2}}})_{+} \mathbf{1}_{\gamma_{i_{1}}}	-\zeta^{\frac{s_{2}}{d_{i_{2}}}}(\zeta^{\frac{s_{1}}{d_{i_{1}}}}\zeta'\mathbf{1}_{\gamma_{i_{1}}})_{+} \mathbf{1}_{\gamma_{i_{2}}}\bigg),\nn
\\
\eta^{-1} \frac{\p}{\p h_{0,j_{1}}}\circ \eta^{-1} \frac{\p}{\p h_{0,j_{2}}}=&(-(\chi_{0}^{-j_{1}})_{\infty,\,\ge -n_{0}+1}(\chi_{0}^{-j_{2}})_{\infty,\,\ge -n_{0}+1}a'
\nn \\
	&
+(\chi_{0}^{-j_{1}})_{\infty,\,\ge -n_{0}+1}(\chi_{0}^{-j_{2}}\ell')_{\infty,\,\ge 0}\nn\\
	&+(\chi_{0}^{-j_{2}})_{\infty,\,\ge -n_{0}+1}(\chi_{0}^{-j_{1}}\ell')_{\infty,\,\ge 0},\,0),
\nn\\
	\eta^{-1} \frac{\p}{\p h_{k_{1},r_{1}}}\circ \eta^{-1} \frac{\p}{\p h_{k_{2},r_{2}}}=&(0,\delta_{k_{1},k_{2}}(\chi_{k_{1}}^{-r_{1}})_{\varphi_{k_{1}},\,\le n_{k_{1}}}(\chi_{k_{2}}^{-r_{2}})_{\varphi_{k_{2}},\,\le n_{k_{2}}}\hat{a}'\mathbf{1}_{\gamma_{k_{1}}}\nn\\
	&-(\chi_{k_{1}}^{-r_{1}})_{\varphi_{k_{1}},\,\le n_{k_{1}}}(\chi_{k_{2}}^{-r_{2}}\ell')_{\varphi_{k_{2}},\,\le -1}\mathbf{1}_{\gamma_{k_{1}}}
\nn \\
	&
-(\chi_{k_{2}}^{-r_{2}})_{\varphi_{k_{2}},\,\le n_{k_{2}}}(\chi_{k_{1}}^{-r_{1}}\ell')_{\varphi_{k_{1}},\,\le -1}\mathbf{1}_{\gamma_{k_{2}}}),\nn\\
	\eta^{-1} \frac{\p}{\p h_{0,j}}\circ \eta^{-1} \frac{\p}{\p h_{k,r}}=&(-(\chi_{0}^{-j})_{\infty,\,\ge -n_{0}+1}(\chi_{k}^{-r}\ell')_{\varphi_{k},\,\le -1},
	(\chi_{k}^{-r})_{\varphi_{k},\,\le n_{k}}(\chi_{0}^{-j}\ell')_{\infty,\,\ge 0}\mathbf{1}_{\gamma_{k}}).\nn
\end{align}
\end{lem}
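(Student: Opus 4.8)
The plan is to prove all four identities by direct substitution: compute each product in $\mathcal{H}\times\mathcal{H}$ via the multiplication \eqref{multiHH}, feeding in the explicit representatives supplied by Lemma~\ref{dulbas}, and then read the result in the quotient $T^{\ast}_{\vec a}\mathcal{M}=(\mathcal{H}\times\mathcal{H})/\ker\eta$ of \eqref{cotspace}. Since Lemma~\ref{cotanmult} guarantees that $\circ$ descends to this quotient, I am free to replace any intermediate expression by a representative differing from it by an element of $\ker\eta$ (see \eqref{etaom}); this freedom is precisely what converts the raw output of \eqref{multiHH} into the compact forms stated in the lemma.

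The algebraic identities I will repeatedly invoke are: the orthogonality $\mathbf{1}_{\gamma_i}\mathbf{1}_{\gamma_j}=\delta_{ij}\mathbf{1}_{\gamma_i}$ of the characteristic functions \eqref{1fun}; the splitting $a'=\hat a'+\zeta'$ coming from \eqref{zetaell}; the decomposition $f=f_++f_-$ valid for every $f\in\mathcal{H}$ near $\gamma$; and the projection identities established inside the proof of Lemma~\ref{dulbas}, namely $(\ell'\chi_0^{-j})_{\infty,\ge0}=(a'\chi_0^{-j})_{\infty,\ge0}$ and $(\ell'\chi_k^{-r})_{\varphi_k,\le-1}=(\hat a'\chi_k^{-r})_{\varphi_k,\le-1}$, together with the correspondence between the circle projections $(\cdot)_\pm$ and the Laurent truncations for functions analytic on the appropriate side. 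For the three products involving an $h$-type factor the work is lighter: the representatives in Lemma~\ref{dulbas} have one vanishing component, so six of the eight terms in \eqref{multiHH} drop out immediately. In the $h_0\circ h_0$ and $h_k\circ h_k$ cases the two survivors are then symmetrized into the three stated summands by writing $f_-=f-f_+$ (resp.\ $f_+=f-f_-$) and applying the projection identities above; in the mixed $h_0\circ h_k$ case only the surviving $\omega_1$ and $\hat\omega_2$ enter, giving the single stated summand in each component.

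The laborious case is $\eta^{-1}\partial_{t_{i_1,s_1}}\circ\eta^{-1}\partial_{t_{i_2,s_2}}$, whose representatives have $\hat\omega_\nu=-\omega_\nu$ with $\omega_\nu=\zeta^{s_\nu/d_{i_\nu}}\mathbf{1}_{\gamma_{i_\nu}}$, so that all eight terms of \eqref{multiHH} contribute. After substituting $\hat\omega_\nu=-\omega_\nu$ the first component collapses, using $a'-\hat a'=\zeta'$ and $f=f_++f_-$, to $\omega_1\omega_2\hat a'+\omega_2(\omega_1\zeta')_+-\omega_1(\omega_2\zeta')_-$. I then rewrite $\omega_2(\omega_1\zeta')_+=\omega_1\omega_2\zeta'-\omega_2(\omega_1\zeta')_-$ and apply $\mathbf{1}_{\gamma_{i_1}}\mathbf{1}_{\gamma_{i_2}}=\delta_{i_1i_2}\mathbf{1}_{\gamma_{i_1}}$, so that $\omega_1\omega_2(\hat a'+\zeta')=\delta_{i_1i_2}\zeta^{(s_1+s_2)/d_{i_1}}a'\mathbf{1}_{\gamma_{i_1}}$ produces exactly the stated diagonal term, while $-\omega_1(\omega_2\zeta')_-$ and $-\omega_2(\omega_1\zeta')_-$ reproduce the two remaining summands. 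An entirely parallel manipulation, with the roles of $(\cdot)_+$ and $(\cdot)_-$ and of $a'$, $\hat a'$ exchanged, handles the second component.

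The step I expect to be the main obstacle is exactly the subtlety flagged after \eqref{1fun}: although $\omega_\nu$ is localized on $\gamma_{i_\nu}$, its projections $(\omega_\nu\zeta')_\pm$ need not vanish on the other circles, so the products $\mathbf{1}_{\gamma_{i_1}}(\cdots\mathbf{1}_{\gamma_{i_2}})_\pm$ do not collapse naively and must be tracked circle by circle. Keeping this interaction straight, while simultaneously using the $\ker\eta$ freedom to trade the Laurent truncations $(\cdot)_{\infty,\ge-n_0+1}$ and $(\cdot)_{\varphi_k,\le n_k}$ appearing in Lemma~\ref{dulbas} against the $(\cdot)_\pm$ projections, is the delicate bookkeeping on which the whole computation turns; once it is organized correctly the remaining simplifications are routine and entirely parallel to the $m=1$ computation in \cite{ma2021infinite}.
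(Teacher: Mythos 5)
Your proposal follows essentially the same route as the paper: substitute the explicit representatives of Lemma~\ref{dulbas} into the multiplication \eqref{multiHH} of Lemma~\ref{copro}, use the splitting $a'=\hat a'+\zeta'$ and $f=f_++f_-$ together with the projection identities $((\chi_{0}^{-j})_{\infty,\,\ge -n_{0}+1}a')_{+}=(\chi_{0}^{-j}\ell')_{\infty,\,\ge 0}$ and $((\chi_{k}^{-r})_{\varphi_{k},\,\le n_{k}}\hat{a}'\mathbf{1}_{\gamma_{k}})_{-}=(\chi_{k}^{-r}\ell')_{\varphi_{k},\,\le -1}$ (which are exactly what your combination of the Lemma~\ref{dulbas} identities with the circle-projection/Laurent-truncation correspondence amounts to), and simplify with $\mathbf{1}_{\gamma_{i_1}}\mathbf{1}_{\gamma_{i_2}}=\delta_{i_1 i_2}\mathbf{1}_{\gamma_{i_1}}$ while correctly refraining from collapsing terms like $(\zeta^{s_1/d_{i_1}}\zeta'\mathbf{1}_{\gamma_{i_1}})_{\pm}\mathbf{1}_{\gamma_{i_2}}$ for $i_1\neq i_2$. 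Your intermediate reduction of the $t$--$t$ case to $\omega_1\omega_2\hat a'+\omega_2(\omega_1\zeta')_+-\omega_1(\omega_2\zeta')_-$ and the handling of the $h$-type cases reproduce the paper's computation, so the proposal is correct and matches the paper's proof.
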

\begin{proof}
	This lemma is verified by taking Lemmas~\ref{copro} and \ref{dulbas} together, with the help of the following equalities:
	$$
	\begin{aligned}
		&((\chi_{0}^{-j})_{\infty,\,\ge -n_{0}+1}a')_{+}=(\chi_{0}^{-j} \ell')_{\infty,\,\ge 0},\quad ((\chi_{k}^{-r})_{\varphi_{k},\,\le n_{k}}\hat{a}'\mathbf{1}_{\gamma_{k}})_{-}=(\chi_{k}^{-r}\ell')_{\varphi_{k},\,\le -1}.
	\end{aligned}
	$$
	For instance, one has
	$$
	\begin{aligned}
		\eta^{-1} \frac{\p}{\p t_{i_{1},s_{1}}}\circ \eta^{-1} \frac{\p}{\p t_{i_{2},s_{2}}}&=\left(\zeta(z)^{\frac{s_{1}}{d_{i_{1}}}}\mathbf{1}_{\gamma_{i_{1}}}, -\zeta(z)^{\frac{s_{1}}{d_{i_{1}}}}\mathbf{1}_{\gamma_{i_{1}}}\right)\circ\left(\zeta(z)^{\frac{s_{2}}{d_{i_{2}}}}\mathbf{1}_{\gamma_{i_{2}}}, -\zeta(z)^{\frac{s_{2}}{d_{i_{2}}}}\mathbf{1}_{\gamma_{i_{2}}}\right)\\
		&=\left(\om,\hat{\om} \right),
	\end{aligned}
	$$
	where
	\begin{align*}
		\om=&\left(\zeta^{\frac{s_{2}}{d_{i_{2}}}}\mathbf{1}_{\gamma_{i_{2}}}\right)\left(\zeta^{\frac{s_{1}}{d_{i_{1}}}}a'\mathbf{1}_{\gamma_{i_{1}}}\right)_{+}-\zeta^{\frac{s_{1}}{d_{i_{1}}}}\mathbf{1}_{\gamma_{i_{1}}}\left(\zeta^{\frac{s_{2}}{d_{i_{2}}}}a'\mathbf{1}_{\gamma_{i_{2}}}\right)_{-} \\
		& +\left(\zeta^{\frac{s_{2}}{d_{i_{2}}}}\mathbf{1}_{\gamma_{i_{2}}}\right)\left(\zeta^{\frac{s_{1}}{d_{i_{1}}}}\hat{a}'\mathbf{1}_{\gamma_{i_{1}}}\right)_{-}+\zeta^{\frac{s_{1}}{d_{i_{1}}}}\mathbf{1}_{\gamma_{i_{1}}}\left(\zeta^{\frac{s_{2}}{d_{i_{2}}}}\hat{a}'\mathbf{1}_{\gamma_{i_{2}}}\right)_{-}\\
		=&\delta_{i_{1},i_{2}}\zeta^{\frac{s_{1}+s_{2}}{d_{i_{1}}}}a'\mathbf{1}_{\gamma_{i_{1}}}-\left(\zeta^{\frac{s_{2}}{d_{i_{2}}}}\mathbf{1}_{\gamma_{i_{2}}}\right)\left(\zeta^{\frac{s_{1}}{d_{i_{1}}}}a'\mathbf{1}_{\gamma_{i_{1}}}\right)_{-}-\zeta^{\frac{s_{1}}{d_{i_{1}}}}\mathbf{1}_{\gamma_{i_{1}}}\left(\zeta^{\frac{s_{2}}{d_{i_{2}}}}a'\mathbf{1}_{\gamma_{i_{2}}}\right)_{-} \\
		& +\left(\zeta^{\frac{s_{2}}{d_{i_{2}}}}\mathbf{1}_{\gamma_{i_{2}}}\right)\left(\zeta^{\frac{s_{1}}{d_{i_{1}}}}\hat{a}'\mathbf{1}_{\gamma_{i_{1}}}\right)_{-}+\zeta^{\frac{s_{1}}{d_{i_{1}}}}\mathbf{1}_{\gamma_{i_{1}}}\left(\zeta^{\frac{s_{2}}{d_{i_{2}}}}\hat{a}'\mathbf{1}_{\gamma_{i_{2}}}\right)_{-}\\
		=&\delta_{i_{1},i_{2}}\zeta^{\frac{s_{1}+s_{2}}{d_{i_{1}}}}a'\mathbf{1}_{\gamma_{i_{1}}}
		-\zeta^{\frac{s_{1}}{d_{i_{1}}}}\left(\zeta^{\frac{s_{2}}{d_{i_{2}}}}\zeta'\mathbf{1}_{\gamma_{i_{2}}}\right)_{-}\mathbf{1}_{\gamma_{i_{1}}}	-\zeta^{\frac{s_{2}}{d_{i_{2}}}}\left(\zeta^{\frac{s_{1}}{d_{i_{1}}}}\zeta'\mathbf{1}_{\gamma_{i_{1}}}\right)_{-}\mathbf{1}_{\gamma_{i_{2}}},
\\ \hat{\om}=&\left(\zeta^{\frac{s_{2}}{d_{i_{2}}}}\mathbf{1}_{\gamma_{i_{1}}}\right)\left(\zeta^{\frac{s_{1}}{d_{i_{1}}}}\hat{a}'\mathbf{1}_{\gamma_{i_{1}}}\right)_{+}-\left(\zeta^{\frac{s_{1}}{d_{i_{1}}}}\mathbf{1}_{\gamma_{i_{1}}}\right)\left(\zeta^{\frac{s_{2}}{d_{i_{2}}}}\hat{a}'\mathbf{1}_{\gamma_{i_{2}}}\right)_{-}\\
		&-\left(\zeta^{\frac{s_{2}}{d_{i_{2}}}}\mathbf{1}_{\gamma_{i_{2}}}\right)\left(\zeta^{\frac{s_{1}}{d_{i_{1}}}}a'\mathbf{1}_{\gamma_{i_{1}}}\right)_{+}-\left(\zeta^{\frac{s_{1}}{d_{i_{1}}}}\mathbf{1}_{\gamma_{i_{1}}}\right)\left(\zeta^{\frac{s_{2}}{d_{i_{2}}}}a'\mathbf{1}_{\gamma_{i_{2}}}\right)_{+} \\
	=&\left(\zeta^{\frac{s_{2}}{d_{i_{2}}}}\mathbf{1}_{\gamma_{i_{1}}}\right)\left(\zeta^{\frac{s_{1}}{d_{i_{1}}}}\hat{a}'\mathbf{1}_{\gamma_{i_{1}}}\right)_{+}-\delta_{i_{1},i_{2}}\zeta^{\frac{s_{1}+s_{2}}{d_{i_{1}}}}\hat{a}'\mathbf{1}_{\gamma_{i_{1}}}+\left(\zeta^{\frac{s_{1}}{d_{i_{1}}}}\mathbf{1}_{\gamma_{i_{1}}}\right)\left(\zeta^{\frac{s_{2}}{d_{i_{2}}}}\hat{a}'\mathbf{1}_{\gamma_{i_{2}}}\right)_{+}\\
	&-\left(\zeta^{\frac{s_{2}}{d_{i_{2}}}}\mathbf{1}_{\gamma_{i_{2}}}\right)\left(\zeta^{\frac{s_{1}}{d_{i_{1}}}}a'\mathbf{1}_{\gamma_{i_{1}}}\right)_{+}-\left(\zeta^{\frac{s_{1}}{d_{i_{1}}}}\mathbf{1}_{\gamma_{i_{1}}}\right)\left(\zeta^{\frac{s_{2}}{d_{i_{2}}}}a'\mathbf{1}_{\gamma_{i_{2}}}\right)_{+} \\
	=&-\delta_{i_{1},i_{2}}\zeta^{\frac{s_{1}+s_{2}}{d_{i_{1}}}}\hat{a}'\mathbf{1}_{\gamma_{i_{1}}}
	-\zeta^{\frac{s_{1}}{d_{i_{1}}}}\left(\zeta^{\frac{s_{2}}{d_{i_{2}}}}\zeta'\mathbf{1}_{\gamma_{i_{2}}}\right)_{+}\mathbf{1}_{\gamma_{i_{1}}}	-\zeta^{\frac{s_{2}}{d_{i_{2}}}}\left(\zeta^{\frac{s_{1}}{d_{i_{1}}}}\zeta'\mathbf{1}_{\gamma_{i_{1}}}\right)_{+}\mathbf{1}_{\gamma_{i_{2}}}.
	\end{align*}
	The other cases are similar. Thus the lemma is proved.
\end{proof}

\begin{proof}[Proof of Proposition~\ref{3tensorflat}]
Since
	\begin{equation}\label{3tensor2}
		c(\p_{u},\p_{v},\p_{w})=\left\la\p_u\circ\p_v, \p_w\right\ra_\eta= \left\langle \eta^{-1}\p_u\circ \eta^{-1}\p_v, \p_w\right\rangle, \quad u, v,
		w\in  \mathbf{u},
	\end{equation}
then the results follow from a straightforward calculation, with the help of Corollary~\ref{flatvect} and Lemma~\ref{dulbaspro}. For instance, we have
	\begin{align*}
		&c(\p_{t_{i_{1},s_{1}}},\p_{t_{i_{2},s_{2}}},\p_{t_{i_{3},s_{3}}})= \left\langle \frac{\p}{\p t_{i_{1},s_{1}}}\circ \frac{\p}{\p t_{i_{2},s_{2}}},
		\frac{\p}{\p t_{i_{3},s_{3}}}\right\rangle_{\eta} \\
		=&\frac{1}{2\pi\mathrm{i}}\sum_{s=1}^{m}\oint_{\gamma_{s}}
		\left([\delta_{i_{1},i_{2}}\zeta^{\frac{s_{1}+s_{2}}{d_{i_{1}}}}a'\mathbf{1}_{\gamma_{i_{1}}}
		-\zeta^{\frac{s_{1}}{d_{i_{1}}}}(\zeta^{\frac{s_{2}}{d_{i_{2}}}}\zeta'\mathbf{1}_{\gamma_{i_{2}}})_{-}\mathbf{1}_{\gamma_{i_{1}}}	-\zeta^{\frac{s_{2}}{d_{i_{2}}}}(\zeta^{\frac{s_{1}}{d_{i_{1}}}}\zeta'\mathbf{1}_{\gamma_{i_{1}}})_{-}\mathbf{1}_{\gamma_{i_{2}}}]
		(-\zeta^{\frac{s_{3}}{d_{i_{3}}}}\zeta'\mathbf{1}_{\gamma_{i_{3}}})_{-} \right.\nn \\
		&\qquad \left. +[-\delta_{i_{1},i_{2}}\zeta^{\frac{s_{1}+s_{2}}{d_{i_{1}}}}\hat{a}'\mathbf{1}_{\gamma_{i_{1}}}
		-\zeta^{\frac{s_{1}}{d_{i_{1}}}}(\zeta^{\frac{s_{2}}{d_{i_{2}}}}\zeta'\mathbf{1}_{\gamma_{i_{2}}})_{+}\mathbf{1}_{\gamma_{i_{1}}}	-\zeta^{\frac{s_{2}}{d_{i_{2}}}}(\zeta^{\frac{s_{1}}{d_{i_{1}}}}\zeta'\mathbf{1}_{\gamma_{i_{1}}})_{+}\mathbf{1}_{\gamma_{i_{2}}}]
		(\zeta^{\frac{s_{3}}{d_{i_{3}}}}\zeta'\mathbf{1}_{\gamma_{i_{3}}})_{+}\right) d z\\
			=&\frac{1}{2\pi\mathrm{i}}\sum_{s=1}^{m}\oint_{\gamma_{s}}
		\left([-\delta_{i_{1},i_{2}}\zeta^{\frac{s_{1}+s_{2}}{d_{i_{1}}}}a'\mathbf{1}_{\gamma_{i_{1}}}
		+\zeta^{\frac{s_{1}}{d_{i_{1}}}}(\zeta^{\frac{s_{2}}{d_{i_{2}}}}\zeta'\mathbf{1}_{\gamma_{i_{2}}})_{-}\mathbf{1}_{\gamma_{i_{1}}}	+\zeta^{\frac{s_{2}}{d_{i_{2}}}}(\zeta^{\frac{s_{1}}{d_{i_{1}}}}\zeta'\mathbf{1}_{\gamma_{i_{1}}})_{-}\mathbf{1}_{\gamma_{i_{2}}}]
		(\zeta^{\frac{s_{3}}{d_{i_{3}}}}\zeta'\mathbf{1}_{\gamma_{i_{3}}}) \right.\nn \\
		&\qquad \left. +[-\delta_{i_{1},i_{2}}\zeta^{\frac{s_{1}+s_{2}}{d_{i_{1}}}}(\hat{a}'-a')\mathbf{1}_{\gamma_{i_{1}}}
		-2\delta_{i_{1},i_{2}}\zeta^{\frac{s_{1}+s_{2}}{d_{i_{1}}}}\zeta'\mathbf{1}_{\gamma_{i_{1}}}	]
		(\zeta^{\frac{s_{3}}{d_{i_{3}}}}\zeta'\mathbf{1}_{\gamma_{i_{3}}})_{+}\right) d z\\
		=&\frac{1}{2\pi\mathrm{i}}\sum_{s=1}^{m}\oint_{\gamma_{s}}
		\left(-\delta_{i_{1},i_{2}}\delta_{i_{2},i_{3}}\zeta^{\frac{s_{1}+s_{2}+s_3}{d_{i_{1}}}}a'\mathbf{1}_{\gamma_{i_{1}}}+\delta_{i_{1},i_{3}}\zeta^{\frac{s_{1}+s_{3}}{d_{i_{1}}}}\zeta'\mathbf{1}_{\gamma_{i_{1}}}(\zeta^{\frac{s_{2}}{d_{i_{2}}}}\zeta'\mathbf{1}_{\gamma_{i_{2}}})_{-}\right.\nn\\
		&\qquad \left.  +\delta_{i_{2},i_{3}}\zeta^{\frac{s_{2}+s_{3}}{d_{i_{2}}}}\zeta'\mathbf{1}_{\gamma_{i_{2}}}(\zeta^{\frac{s_{1}}{d_{i_{1}}}}\zeta'\mathbf{1}_{\gamma_{i_{1}}})_{-}- \delta_{i_{1},i_{2}}\zeta^{\frac{s_{1}+s_{2}}{d_{i_{1}}}}\zeta'\mathbf{1}_{\gamma_{i_{1}}}	
		(\zeta^{\frac{s_{3}}{d_{i_{3}}}}\zeta'\mathbf{1}_{\gamma_{i_{3}}})_{+}\right)dz\\
		=&	\frac{1}{2\pi \mathrm{i}}\oint_{\gamma_{i_{1}}}\left(-\delta_{i_{1},i_{2}}\delta_{i_{2},i_{3}}\zeta^{\frac{s_{1}+s_{2}+s_{3}}{d_{i_{1}}}}\zeta'\mathbf{1}_{\gamma_{i_{1}}}(\zeta'+\zeta'_{-}+\ell')dz\right)\\
		& +	\frac{1}{2\pi \mathrm{i}}\dsum_{s=1}^{m}\oint_{\gamma_{s}}\left(\delta_{i_{1},i_{2}}\zeta^{\frac{s_{1}
				+s_{2}}{d_{i_{1}}}}\zeta'\mathbf{1}_{\gamma_{i_{1}}}(\zeta^{\frac{s_{3}}{d_{i_{3}}}}\zeta'\mathbf{1}_{i_{3}})_{-}+\mathrm{c.p.}\{(i_{1},s_{1}),(i_{2},s_{2}),(i_{3},s_{3})\}\right)d z.
 	\end{align*}
	The other cases are similar. Thus the proposition is proved.
\end{proof}

\subsection{Potential}
Let us proceed to construct a function \( \mathcal{F} \) on the manifold \( \mathcal{M} \) such that
\begin{equation}\label{potentcon}
	\frac{\partial \mathcal{F}}{\partial u \partial v \partial w} = c\left(\frac{\partial}{\partial u}, \frac{\partial}{\partial v}, \frac{\partial}{\partial w}\right), \quad u, v, w \in \mathbf{u}.
\end{equation}

As a preparation, we have the following lemma.
\begin{lem}
There locally exist functions \( U_{i}(\mathbf{t}_{i}) \), \( V_{i}(\mathbf{h}_{0}) \), and \( W(\mathbf{h}) \)
depending analytically on the variables
\[
\mathbf{t}_{i} = \{t_{i,s} \mid s \in \mathbb{Z}\}, \quad \mathbf{h}_{0} = \{h_{0,j} \mid j=1,\ldots,n_0-1\},
\]
and $\mathbf{h}$ given in \eqref{flath}, respectively, such that
	\begin{align}
		&\frac{\partial^2 U_{i}}{\partial t_{i,s_{1}} \partial t_{i,s_{2}}} = \frac{1}{2\pi \mathrm{i}} \oint_{\gamma_{i}} \frac{\zeta'(z)\zeta(z)^{\frac{s_1+s_2}{d_{i}}}}{z-p_{i}} dz, 
\label{V1}
\\
		&\frac{\partial^2 V_{i}}{\partial h_{0,j_1} \partial h_{0,j_2}} = -\res_{z=\infty} \frac{\ell'(z)\chi_{0}(z)^{-j_1-j_2}}{z-p_{i}} dz, 
\label{V2} \\
		&\frac{\partial^3 W}{\partial u \partial v \partial w} = -\left(\res_{z=\infty} + \sum_{s=1}^{m} \res_{z=\varphi_{s}}\right) \frac{\partial_{\mathit{u}}\ell(z) \cdot \partial_{\mathit{v}}\ell(z) \cdot \partial_{\mathit{w}}\ell(z)}{\ell'(z)} dz, \quad u,v,w \in \mathbf{h}, \label{Suvw}
	\end{align}
where \( p_{i} \) denotes the center of the disk \( D_{i} \) for $i=1,\cdots,m$.
\end{lem}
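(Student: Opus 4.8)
The plan is to reduce the existence of each of $U_i$, $V_i$ and $W$ to a single integrability (closedness) condition and then invoke the Poincaré lemma on a simply connected coordinate chart, exactly as in the $m=1$ treatment of \cite{ma2021infinite}. Concretely, a symmetric array $T_{s_1s_2}$ is the Hessian of a local function of $\mathbf{t}_i$ if and only if $\partial_{t_{i,s_3}}T_{s_1s_2}$ is totally symmetric in $(s_1,s_2,s_3)$, and likewise the symmetric array on the right-hand side of \eqref{Suvw} is a third derivative of a local function of $\mathbf{h}$ if and only if its further $\mathbf{h}$-derivative is totally symmetric. (The index set $\{t_{i,s}\}$ is infinite, so the Poincaré lemma is understood in the formal/analytic sense used in \cite{ma2021infinite}, where each expression involves only finitely many coordinates.) Thus everything comes down to three symmetry checks.

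For $U_i$ this check is immediate. Using $\partial_{t_{i,s}}\zeta|_{\gamma_i}=-\zeta^{\frac{s}{d_i}}\zeta'$ from the proof of Proposition~\ref{etaflat}, the two contributions of the product rule combine into a total $z$-derivative,
\[
\partial_{t_{i,s_3}}\!\left(\zeta^{\frac{s_1+s_2}{d_i}}\zeta'\right)=-\,\partial_z\!\left(\zeta^{\frac{s_1+s_2+s_3}{d_i}}\zeta'\right).
\]
Differentiating \eqref{V1} and integrating by parts against $1/(z-p_i)$ on the closed contour $\gamma_i$ (the boundary term drops out) yields
\[
\partial_{t_{i,s_3}}\frac{\partial^2 U_i}{\partial t_{i,s_1}\partial t_{i,s_2}}
=-\frac{1}{2\pi\mathrm{i}}\oint_{\gamma_i}\frac{\zeta^{\frac{s_1+s_2+s_3}{d_i}}\zeta'}{(z-p_i)^2}\,dz,
\]
which is manifestly symmetric in $s_1,s_2,s_3$. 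Hence $U_i$ exists locally.

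For $V_i$ I would argue in the same spirit, but pass to the uniformizer $\chi_0$ at $\infty$. Writing the residue in \eqref{V2} as $-\res_{\chi_0=\infty} n_0\chi_0^{\,n_0-1-j_1-j_2}(z-p_i)^{-1}\,d\chi_0$ (using $\ell'\,dz=d\ell=n_0\chi_0^{\,n_0-1}d\chi_0$), the only remaining $\mathbf{h}$-dependence under the residue sits in $z=z(\chi_0;\mathbf{h})$; differentiating at fixed $\chi_0$ with $\partial_{h_{0,j_3}}z|_{\chi_0}=-\chi_0^{-j_3}$, read off from \eqref{hexp}, gives
\[
\partial_{h_{0,j_3}}\frac{\partial^2 V_i}{\partial h_{0,j_1}\partial h_{0,j_2}}
=-\res_{\chi_0=\infty}\frac{n_0\,\chi_0^{\,n_0-1-j_1-j_2-j_3}}{(z-p_i)^2}\,d\chi_0,
\]
again symmetric in the three indices; here one checks that the undetermined tail of \eqref{hexp} produces powers of $\chi_0$ too negative to affect the residue. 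The function $W$ is treated by the analogous but longer computation carried out simultaneously at all the punctures $\infty,\varphi_1,\dots,\varphi_m$, using $\partial_{h_{0,j}}\ell=(\ell'\chi_0^{-j})_{\infty,\ge0}$ and $\partial_{h_{k,r}}\ell=-(\ell'\chi_k^{-r})_{\varphi_k,\le-1}$ together with the corresponding fixed-$\chi$ identities.

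The main obstacle is the verification for $W$. In the $U_i$ and $V_i$ cases the change of variable isolates a single puncture and turns the integrand into a pure monomial, so symmetry is visible by inspection; for $W$ one must differentiate the triple product $\partial_u\ell\,\partial_v\ell\,\partial_w\ell/\ell'$ and sum residues over all of $\infty,\varphi_1,\dots,\varphi_m$. The terms produced when $\partial_s$ falls on the projection defining $\partial_u\ell$ are not individually symmetric, and one must show that their sum over the $m+1$ punctures reorganizes, via integration by parts and the fixed-$\chi$ identities at each point, into a totally symmetric expression. This is precisely the rational Landau–Ginzburg integrability computation (compare Dubrovin's Hurwitz-space construction); for $m=1$ it is carried out in \cite{ma2021infinite}, and the only genuinely new point for $m\ge2$ is the simultaneous bookkeeping of the cross contributions at the several movable punctures $\varphi_k$. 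Once this cancellation is established, $\partial_s\,c(\partial_u,\partial_v,\partial_w)$ is totally symmetric and $W$ exists locally by the Poincaré lemma.
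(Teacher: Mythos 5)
Your treatment of $U_i$ and $V_i$ is correct and is essentially the paper's own argument. For $U_i$ the paper uses exactly the same total-derivative identity
$\partial_{t_{i,s_3}}\bigl(\zeta^{(s_1+s_2)/d_i}\zeta'\bigr)=-\partial_z\bigl(\zeta^{(s_1+s_2+s_3)/d_i}\zeta'\bigr)$
and concludes symmetry of the third derivatives; for $V_i$ the paper differentiates at fixed $z$ using $\partial_{h_{0,j}}\ell=(\ell'\chi_0^{-j})_{\infty,\ge0}$ and truncation identities, arriving at $-\res_{z=\infty}\bigl(\ell'\chi_0^{-j_1-j_2-j_3}\bigr)'\frac{dz}{z-p_i}$, whereas you differentiate at fixed $\chi_0$ after a change of variables; the two computations are interchangeable, and your check that the undetermined tail of \eqref{hexp} contributes nothing to the residue is the right point to verify and does hold.

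The genuine gap is the $W$ case, which you yourself call ``the main obstacle'' and then do not close: you reduce it to a cancellation (``Once this cancellation is established\dots'') that is never established, and the source you lean on, \cite{ma2021infinite}, covers only $m=1$ by your own account. The paper performs no such computation. Instead it observes that the right-hand side of \eqref{Suvw} is precisely the symmetric $3$-tensor, written in the flat coordinates $\mathbf{h}$, of the \emph{finite-dimensional} Frobenius manifold $M^{cKP}$ with rational superpotential $\ell(z)$ as in \eqref{ell} --- Dubrovin's Hurwitz-space construction (see also Aoyama--Kodama), which holds for arbitrary $m$ and arbitrary pole orders $n_0,\dots,n_m$. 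For any Frobenius manifold the $3$-tensor in flat coordinates is, by the general theory, the third derivative of a local potential, and that potential is $W$; this is why the paper simply cites \cite{dub1998,aoyama1996topological} and the identification made in Subsection~\ref{McKP}. So the ``genuinely new point for $m\ge2$'' you describe is not actually new --- the theorem needed is already in the literature, and invoking it finishes the proof with no bookkeeping of cross contributions at the punctures $\varphi_k$. As written, your proposal leaves the hardest third of the lemma unproved; either carry out the multi-puncture residue computation in full or replace that paragraph by the citation to the Hurwitz-space potential.
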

\begin{proof}
It is easy to see:
	\begin{align*}
		&\frac{1}{2\pi \mathrm{i}} \frac{\partial}{\partial t_{i,s_{3}}} \oint_{\gamma_{i}} \frac{\zeta'\zeta^{\frac{s_1+s_2}{d_{i}}}}{z-p_{i}} dz \\
		&= \frac{1}{2\pi \mathrm{i}} \oint_{\gamma_{i}} \frac{-\left( \zeta'\zeta^{\frac{s_3}{d_{i}}} \right)' \cdot \zeta^{\frac{s_1+s_2}{d_{i}}} - \zeta' \cdot (\frac{s_1+s_2}{d_{i}})\zeta'\zeta^{\frac{s_1+s_2+s_{3}}{d_{i}}-1}}{z-p_{i}} dz \\
		&= \frac{1}{2\pi\mathrm{i}} \oint_{\gamma_{i}} \frac{-\left( \zeta'\zeta^{\frac{s_1+s_2+s_3}{d_{i}}} \right)'}{z-p_{i}} dz;
\\
		&-\frac{\partial}{\partial h_{0,j_3}} \text{Res}_{z=\infty} \frac{\ell'\chi_{0}^{-j_1-j_2}}{z-p_{i}} dz \\
		&= -\text{Res}_{z=\infty} \left( \left(\ell'\chi_{0}^{-j_3} \right)'_{\infty,\,\ge 0} \chi_{0}^{-j_1-j_2} - \ell' \cdot \frac{j_1+j_2}{n_{0}} \chi_{0}^{-j_1-j_2-n_{0}} \left(\ell'\chi_{0}^{-j_3} \right)_{\infty,\,\ge 0} \right) \frac{dz}{z-p_{i}} \\
		&= -\text{Res}_{z=\infty} \left( \left(\ell'\chi_{0}^{-j_3} \right)' \chi_{0}^{-j_1-j_2} + \left(\chi_{0}^{-j_1-j_2} \right)' \ell'\chi_{0}^{-j_3} \right) \frac{dz}{z-p_{i}} \\
	&= -\text{Res}_{z=\infty} \left(\ell'\chi_{0}^{-j_1-j_2-j_3} \right)' \frac{dz}{z-p_{i}},
	\end{align*}
which are symmetric with respect to the indices \( \{s_1, s_2 ,s_3\} \) and \( \{j_1 , j_2 , j_3 \} \) respectively. Hence the existence of \( U_{i}(\mathbf{t}_{i}) \) and \( V_{i}(\mathbf{h}_{0}) \) is verified.
It is known that the function \( W(\mathbf{h}) \) can be identified as the potential for the Frobenius manifold with superpotentials $\ell(z)$ (see for example \cite{dub1998,aoyama1996topological} and Subsection~\ref{McKP} below). The lemma is  proved.
\end{proof}

\begin{thm}\label{potent}
Let  $\mathcal{F}$ be a function on $\mathcal{M}$ given by
	\begin{align}
		\mathcal{F}=&\frac{1}{(2\pi \mathrm{i})^{2}}\dsum_{i=1}^{m}\oint_{\gamma_{i}}\oint_{\gamma_{i}^{\ast}}\left(\frac{1}{2}\zeta(z_{1}) \zeta(z_{2})
		+\zeta(z_{1}) \ell(z_{2}) -\ell(z_{1}) \zeta(z_{2})\right) \log\left(\frac{z_{1}-z_{2}}{z_{1}-p_{i}}\right)dz_{1}dz_{2}
\nn\\
		&+\frac{1}{(2\pi \mathrm{i})^{2}}\dsum_{i=1}^{m}\dsum_{i'\ne i}\oint_{\gamma_{i}}\oint_{\gamma_{i'}}\left(\frac{1}{2}\zeta(z_{1}) \zeta(z_{2})
		+\zeta(z_{1}) \ell(z_{2}) -\ell(z_{1}) \zeta(z_{2})\right) \log(z_{1}-z_{2})dz_{1}dz_{2}
\nn\\
		&\qquad -\dsum_{i=1}^{m}\frac{U_{i}(\mathbf{t}_{i})}{2\pi \mathrm{i} }\oint_{\gamma_{i}} \left(\frac{1}{2}\zeta(z)+\ell(z)\right) d z
		+ \dsum_{i=1}^{m}\frac{V_{i}(\mathbf{h}_{i})}{2\pi \mathrm{i} }\oint_{\gamma_{i}} \zeta(z) d z +W(\mathbf{h}). \label{potential}
	\end{align}
Then this function satisfies the condition \eqref{potentcon}.
	Here the contours \( \gamma_{i}^* \) are small deformations of \( \gamma_{i} \), satisfying $|z_{2}-p_{i}|<|z_{1}-p_{i}|$ for all $z_{2}\in\gamma_{i}$ and $z_{1}\in \gamma_{i}^{\ast}$ with \( i \in\{1,2, \ldots, m\} \).
\end{thm}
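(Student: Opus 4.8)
The plan is to verify the identity \eqref{potentcon} by differentiating $\mathcal{F}$ three times with respect to the flat coordinates in $\mathbf{u}$ and matching the outcome, case by case, against the expressions for $c$ obtained in Proposition~\ref{3tensorflat}. The only inputs needed are the variation formulas for $\zeta(z)$ and $\ell(z)$ recorded in the proof of Proposition~\ref{etaflat}: namely $\partial_{t_{i,s}}\zeta|_{\gamma_{i'}}=-\delta_{ii'}\zeta^{s/d_i}\zeta'$ with $\partial_{t_{i,s}}\ell=0$, $\partial_{h_{0,j}}\ell=(\ell'\chi_0^{-j})_{\infty,\ge0}$ with $\partial_{h_{0,j}}\zeta=0$, and $\partial_{h_{k,r}}\ell=-(\ell'\chi_k^{-r})_{\varphi_k,\le-1}$ with $\partial_{h_{k,r}}\zeta=0$. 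The essential analytic mechanism is that differentiating the logarithmic kernels under the contour integrals, together with an integration by parts in the integration variables, turns them into Cauchy kernels $1/(z_1-z_2)$; evaluating the resulting integrals by residues then reproduces exactly the projections $(\,\cdot\,)_\pm$ and the Laurent truncations appearing in Proposition~\ref{3tensorflat}.

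Concretely, I would first compute the gradient $\partial_w\mathcal{F}$, exploiting that $\zeta$ and $\ell$ enter the double integrals only through the specific combination $\tfrac12\zeta\zeta+\zeta\ell-\ell\zeta$. Differentiating a second time and integrating by parts converts the logarithmic kernels into Cauchy kernels; the deformation of $\gamma_i^\ast$ relative to $\gamma_i$, constrained by $|z_2-p_i|<|z_1-p_i|$, then dictates on which side the pole is picked up, and hence whether a $(\,\cdot\,)_+$ or a $(\,\cdot\,)_-$ projection is produced. The subtracted term $\log(z_1-p_i)$, together with the linear pieces weighted by $U_i$ and $V_i$, regularizes the kernel on each circle $\gamma_i$---where a bare $\log(z_1-z_2)$ would be multivalued---while the plain $\log(z_1-z_2)$ suffices across distinct circles $\gamma_i,\gamma_{i'}$. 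A final differentiation collapses each double integral to a single contour integral or residue, which I would then match against the cases of Proposition~\ref{3tensorflat}.

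The auxiliary functions serve to supply the diagonal and resonant contributions that the regularized double integrals alone do not generate, and I would use this split to organize the bookkeeping. By construction $W(\mathbf{h})$ is the potential of the finite-dimensional Frobenius structure with superpotential $\ell(z)$ via \eqref{Suvw}, and it accounts for the entire pure-$\mathbf{h}$ sector $c(\partial_h,\partial_h,\partial_h)$. The functions $U_i$ and $V_i$, defined by \eqref{V1}--\eqref{V2} and weighted respectively by $\oint_{\gamma_i}(\tfrac12\zeta+\ell)\,dz$ and $\oint_{\gamma_i}\zeta\,dz$ in \eqref{potential}, are tailored so that their contributions combine with the coincident-point limits of the double integrals to reproduce the resonant terms carrying $\delta_{i_1i_2}\delta_{i_2i_3}$ together with the factors $\zeta'+\zeta'_-+\ell'$ and $2\ell'-\zeta'_+$ seen in Proposition~\ref{3tensorflat}. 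Thus the matching separates into the fully resonant diagonal pieces from $U_i,V_i,W$ and all remaining projection-type pieces from the double integrals.

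The main obstacle is the careful handling of the logarithmic kernels and the attendant contour analysis. Three points demand attention: establishing single-valuedness of each integrand around the circles (the reason for the $\log(z_1-p_i)$ regularization on each $\gamma_i$); tracking the deformation $\gamma_i^\ast$ precisely so that each residue evaluation assigns the correct projection $(\,\cdot\,)_\pm$; and respecting the subtle fact emphasized after \eqref{1fun} that $(f\,\mathbf{1}_{\gamma_j})_\pm$ need not vanish off $\gamma_j$, which is exactly what couples distinct circles in the $\mathbf{t}$-sector and produces the cyclic-permutation sums $\mathrm{c.p.}\{(i_1,s_1),(i_2,s_2),(i_3,s_3)\}$. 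Once these are controlled, each remaining case reduces to a routine residue computation paralleling the $m=1$ analysis of \cite{ma2021infinite}.
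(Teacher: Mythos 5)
Your analytic mechanism is the paper's own: differentiate $\mathcal{F}$ three times in the flat coordinates, use the kernel identities expressing the projections $(\,\cdot\,)_{\pm}$ as contour integrals of $\omega'$ against $\log\frac{z_1-z_2}{z_1-p_i}$ and $\log(z_1-z_2)$ (integration by parts turning logarithms into Cauchy kernels), and match case by case against Proposition~\ref{3tensorflat}. The genuine gap is in the split you say you would use to organize the computation. It is false that $W(\mathbf{h})$ ``accounts for the entire pure-$\mathbf{h}$ sector'': by Proposition~\ref{3tensorflat}, the entry $c(\p_{h_{0,j_1}},\p_{h_{0,j_2}},\p_{h_{0,j_3}})$ contains the term $\res_{z=\infty}\chi_0^{-j_1-j_2-j_3}\ell'\,\zeta'_-\,dz$, and the pure $h_k$-entries contain $-\res_{z=\varphi_{k_1}}\delta_{k_1,k_2}\delta_{k_2,k_3}\chi_{k_1}^{-r_1-r_2-r_3}\ell'\,\zeta'_+\,dz$ (these are the $\zeta'_{\pm}$ parts of the factors $2\ell'+\zeta'_-$ and $2\ell'-\zeta'_+$, reflecting $a'=\ell'+\zeta'_-$ and $\hat a'=\ell'-\zeta'_+$ from \eqref{zetaell2a}). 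These pieces depend on $\zeta$, hence on the coordinates $\mathbf{t}$, and are generically nonzero (e.g.\ for $n_0\ge 5$, $j_1=j_2=j_3=1$ the first residue is a nonzero multiple of the leading coefficient of $\zeta_-$ at infinity), whereas $\p^3 W$ is a function of $\mathbf{h}$ alone and can never produce them. They must instead come from the cross terms $\zeta(z_1)\ell(z_2)-\ell(z_1)\zeta(z_2)$ in the double integrals, which survive three $\mathbf{h}$-derivatives because $\p^3_{hhh}\ell\neq 0$, together with the $U_i$- and $V_i$-weighted terms. Organizing the verification under your assumption would therefore leave the $\mathbf{h}$-sectors unmatched.

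Your bookkeeping is also internally inconsistent: you assign the factor $2\ell'-\zeta'_+$ to the ``$U_i$, $V_i$ plus coincident-point limits'' mechanism, yet that factor sits precisely in the pure-$\mathbf{h}$ entries you simultaneously assign to $W$ alone. The correct picture, visible in the paper's displayed pure-$\mathbf{t}$ computation, is that every term of $\mathcal{F}$ contributes in every sector where the relevant derivatives are nonvanishing, and the role of the $U_i$ (and $V_i$) terms is not to supply the resonant factors but to \emph{cancel} the diagonal artifacts $\frac{1}{2\pi\mathrm{i}}\oint_{\gamma_i}\frac{\omega(z)}{z-p_i}\,dz$ produced by the kernel identity for $(\,\cdot\,)_+$ when it is applied inside the double integrals; after that cancellation, the resonant factors such as $\zeta'+\zeta'_-+\ell'$ emerge from the surviving projection terms of the double integrals themselves. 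The remainder of your machinery (single-valuedness via the $\log(z_1-p_i)$ subtraction, the role of $\gamma_i^{\ast}$ in selecting $(\,\cdot\,)_{\pm}$, the coupling of distinct circles through $(f\,\mathbf{1}_{\gamma_j})_{\pm}$) is sound, but the sector-by-sector split must be discarded before the case-by-case verification can close.
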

\begin{proof}
Let us verify the equalities \eqref{potentcon} in a straightforward way. Here in this proof the indices $i,i',i_\nu$ take value in $\{1,2,\dots,m\}$, and $s_\nu$ take value in $\mathbb{Z}$.

Firstly, we have the following equalities:
	\begin{align*}
	\omega(z_{1})_{-}=&\frac{1}{2\pi \mathrm{i}}\oint_{\gamma_{i}}\omega'(z_{2})\log(\frac{z_{1}-z_{2}}{z_{1}-p_{i}})dz_{2}+\frac{1}{2\pi \mathrm{i}}\sum_{i'\ne i}\oint_{\gamma_{i'}}\omega'(z_{2})\log(z_{1}-z_{2})dz_{2},
\\
\omega(z_{2})_{+}=&\frac{1}{2\pi \mathrm{i}}\oint_{\gamma_{i}}\frac{\omega(z)}{z-p_{i}}d z
-\frac{1}{2\pi \mathrm{i}}\oint_{\gamma_{i}^{\ast}}\omega'(z_{1})\log(\frac{z_{1}-z_{2}}{z_{1}-p_{i}})dz_{1} \\
	&-	\frac{1}{2\pi \mathrm{i}}\sum_{i'\ne i}\oint_{\gamma_{i'}}\omega'(z_{1})\log(z_{1}-z_{2})dz_{1}.
\end{align*}
Denote $Z_{i,s}(z)=\zeta(z)^{\frac{s}{d_{i}}}\zeta'(z)\mathbf{1}_{\gamma_{i}}$, then from Corollary \ref{flatvect} it follows that
\begin{align*}
	\frac{\p\zeta(z)}{\p t_{i,s}}=&-Z_{i,s}(z), \\ 
	 \frac{\p^2\zeta(z)}{\p t_{i_1,s_{1}}\p t_{i_2,s_{2}}}=&{\delta_{i_{1},i_{2}}Z_{i_1,s_{1}+s_{2}}}'(z), \\
	 \frac{\p^3\zeta(z)}{\p t_{i_1,s_{1}}\p t_{i_2,s_{2}}\p t_{i_3,s_{3}}}=&-{\delta_{i_{1},i_{2}}\delta_{i_{2},i_{3}}Z_{i_1,s_{1}+s_2+s_3}}''(z).
\end{align*}
By using such equalities, we have
{\footnotesize\begin{align*}
	&\frac{\p^{3} \mathcal{F}}{\p t_{i_{1},s_{1}} \p t_{i_{2},s_{2}}\p t_{i_{3},s_{3}}} \\
	=&\frac{1}{(2\pi \mathrm{i})^{2}}\dsum_{i=1}^{m}\oint_{\gamma_{i}}\oint_{\gamma_{i}^{\ast}}\bigg( -\delta_{i_{1},i_{2}}\delta_{i_{2},i_{3}}\left(Z_{i_{1},s_1+s_2+s_3}''(z_1)\left(\frac{1}{2}\zeta(z_2)+\ell(z_2)\right)\right. \\
&+\left. \left(\frac{1}{2}\zeta(z_1)-\ell(z_1)\right){Z_{i_{1},s_1+s_2+s_3}}''(z_2) \right)
\\
	&  -\frac{1}{2}\Big(\delta_{i_{1},i_{2}}(Z_{i_{1},s_1+s_2}'(z_1)Z_{i_{3},s_3}(z_2) +Z_{i_{1},s_1+s_2}'(z_2)Z_{i_{3},s_3}(z_1)) \\
&+\mathrm{c.p.}\{(i_{1},s_{1}),(i_{2},s_{2}),(i_{3},s_{3})\}\Big) \bigg)
	\log\left(\frac{z_{1}-z_{2}}{z_{1}-p_{i}}\right)  dz_{1}dz_{2}
\\
		&+\frac{1}{(2\pi \mathrm{i})^{2}}\dsum_{i=1}^{m}\dsum_{s'\ne s}\oint_{\gamma_{i}}\oint_{\gamma_{s'}}\bigg( -\delta_{i_{1},i_{2}}\delta_{i_{2},i_{3}}\left(Z_{i_{1},s_1+s_2+s_3}''(z_1)\left(\frac{1}{2}\zeta(z_2)+\ell(z_2)\right)\right.
\\
&+ \left(\frac{1}{2}\zeta(z_1)-\ell(z_1)\right){Z_{i_{1},s_1+s_2+s_3}}''(z_2) \bigg)
\\
	& -\frac{1}{2}\Big(\delta_{i_{1},i_{2}}(Z_{i_{1},s_1+s_2}'(z_1)Z_{i_{3},s_3}(z_2) +Z_{i_{1},s_1+s_2}'(z_2)Z_{i_{3},s_3}(z_1))
\\
&+\mathrm{c.p.}\{(i_{1},s_{1}),(i_{2},s_{2}),(i_{3},s_{3})\}\Big) \bigg)
	\log(z_{1}-z_{2}) dz_{1}dz_{2}\\
	&-\frac{1}{2\pi \mathrm{i} }\dsum_{i=1}^{m}\oint_{\gamma_{i}} \left(\frac{1}{2}\zeta(z)
	+\ell(z)\right) d z \cdot \frac{1}{2\pi \mathrm{i}}\oint_{\gamma_{i}}\frac{-\delta_{i_{1},i_{2}}\delta_{i_{2},i_{3}}Z_{i_{1},s_1+s_2+s_3}'(z)}{z-p_{i}}d z \\
	&-\left\{\frac{1}{2\pi \mathrm{i} }\dsum_{i=1}^{m}\oint_{\gamma_{i}} \left(-\frac{1}{2}Z_{i_3,s_3}(z)\right) d z \cdot \frac{1}{2\pi \mathrm{i}}
	\oint_{\gamma_{i}}\frac{\delta_{i_{1},i_{2}}Z_{i_{1},s_1+s_2}(z)}{z-p_{i}}d z +\mathrm{c.p.}\{(i_{1},s_{1}),(i_{2},s_{2}),(i_{3},s_{3})\}\right\}
	\\
	=&\frac{1}{2\pi \mathrm{i}}\sum_{i=1}^{m}	\oint_{\gamma_{i}}\left(  \delta_{i_{1},i_{2}}\delta_{i_{2},i_{3}}\left((Z_{i_{1},s_1+s_2+s_3}'(z))_{+} \left(\frac{1}{2}\zeta(z)+\ell(z)\right)
	- \left(\frac{1}{2}\zeta(z)-\ell(z)\right)(Z_{i_{1},s_1+s_2+s_3}'(z))_{-}\right) \right. \\
	& -\frac{1}{2}(-\delta_{i_{1},i_{2}}(Z_{i_{1},s_1+s_2}(z))_{+}Z_{i_{3},s_{3}}(z) +\delta_{i_{1},i_{2}}Z_{i_3,s_{3}}(z)(Z_{i_{1},s_1+s_2}(z))_{-}
\\
&+\mathrm{c.p.}\{(i_{1},s_{1}),(i_{2},s_{2}),(i_{3},s_{3})\})\bigg)
	d z \\
	&-\frac{1}{2\pi \mathrm{i} }\dsum_{i=1}^{m}\oint_{\gamma_{i}} \left(\frac{1}{2}\zeta(z)
+\ell(z)\right) d z \cdot \frac{1}{2\pi \mathrm{i}}\oint_{\gamma_{i}}\frac{\delta_{i_{1},i_{2}}\delta_{i_{2},i_{3}}Z_{i_{1},s_1+s_2+s_3}'(z)}{z-p_{i}}d z \\
&+\left\{\frac{1}{2\pi \mathrm{i} }\dsum_{i=1}^{m}\oint_{\gamma_{i}} \left(-\frac{1}{2}Z_{i_3,s_3}(z)\right) d z \cdot \frac{1}{2\pi \mathrm{i}}
\oint_{\gamma_{i}}\frac{\delta_{i_{1},i_{2}}Z_{i_{1},s_1+s_2}(z)}{z-p_{i}}d z +\mathrm{c.p.}\{(i_{1},s_{1}),(i_{2},s_{2}),(i_{3},s_{3})\}\right\}
\\
		&-\frac{1}{2\pi \mathrm{i} }\dsum_{i=1}^{m}\oint_{\gamma_{i}} \left(\frac{1}{2}\zeta(z)
	+\ell(z)\right) d z \cdot \frac{1}{2\pi \mathrm{i}}\oint_{\gamma_{i}}\frac{-\delta_{i_{1},i_{2}}\delta_{i_{2},i_{3}}Z_{i_{1},s_1+s_2+s_3}'(z)}{z-p_{i}}d z \\
	&-\left\{\frac{1}{2\pi \mathrm{i} }\dsum_{i=1}^{m}\oint_{\gamma_{i}} \left(-\frac{1}{2}Z_{i_3,s_3}(z)\right) d z \cdot \frac{1}{2\pi \mathrm{i}}
	\oint_{\gamma_{i}}\frac{\delta_{i_{1},i_{2}}Z_{i_{1},s_1+s_2}(z)}{z-p_{i}}d z +\mathrm{c.p.}\{(i_{1},s_{1}),(i_{2},s_{2}),(i_{3},s_{3})\}\right\}
	\\
	=&\frac{1}{2\pi \mathrm{i}}\dsum_{i=1}^{m}\oint_{\gamma_{i}} \left(  \delta_{i_{1},i_{2}}\delta_{i_{2},i_{3}}Z_{i_{1},s_1+s_2+s_3}(z) \left( -\frac{1}{2}\zeta'(z)_- -\ell'(z)_-
	+ \frac{1}{2}\zeta'(z)_+ -\ell'(z)_+ \right)  \right. \\
	&\left. - \frac{3}{2} Z_{i_3,s_3}(z)\delta_{i_{1},i_{2}}Z_{i_{1},s_1+s_2}(z) + (\delta_{i_{1},i_{2}}(Z_{i_{1},s_1+s_2}(z))_{+}Z_{i_3,s_3}(z)  +\mathrm{c.p.}\{(i_{1},s_{1}),(i_{2},s_{2}),(i_{3},s_{3})\}) \right)
	d z
	\\
	=&\frac{1}{2\pi \mathrm{i}}\dsum_{i=1}^{m}\oint_{\gamma_{i}} \left(  \delta_{i_{1},i_{2}}\delta_{i_{2},i_{3}}Z_{i_{1},s_1+s_2+s_3}(z) \left(-\zeta'(z) - \zeta'(z)_- -\ell'(z) \right)\right.\\
	&\left.
	+ (\delta_{i_{1},i_{2}}Z_{i_{1},s_1+s_2}(z) (Z_{i_3,s_3}(z))_{-} +\mathrm{c.p.}\{(i_{1},s_{1}),(i_{2},s_{2}),(i_{3},s_{3})\} \right)
	d z \\
	=& \left\langle \frac{\p}{\p t_{i_{1},s_{1}}}\circ \frac{\p}{\p t_{i_{2},s_{2}}},
	\frac{\p}{\p t_{i_{3},s_{3}}}\right\rangle_{\eta}.
\end{align*}}
The other cases are similar. Therefore the theorem is proved.
\end{proof}

\subsection{Euler vector field}\label{euler}

Recall the definition of flat coordinates \eqref{flatt} and \eqref{flath}. Let us consider a vector field on \( \mathcal{M} \) as
\begin{align}
E = & \sum_{i=1}^{m} \sum_{s \in \mathbb{Z}} \left( \frac{1}{n_{0}} - \frac{s}{d_{i}} \right) t_{i,s} \frac{\partial}{\partial t_{i,s}} + \sum_{j=1}^{n_{0}-1}   \frac{1 + j}{n_{0}}   h_{0,j} \frac{\partial}{\partial h_{0,j}}
\nn \\
& + \sum_{k=1}^{m} \sum_{r=0}^{n_{k}} \left( \frac{1}{n_{0}} + \frac{r}{n_{k}} \right) h_{k,r} \frac{\partial}{\partial h_{k,r}}. \label{E}
\end{align}
By using Corollary~\ref{flatvect}, we have
$$
\p_{E}\vec{a}=\left(a(z)-\frac{z}{n_{0}}a'(z),\hat{a}(z)-\frac{z}{n_{0}}\hat{a}'(z)\right).
$$
\begin{prop}\label{hom}
	The function $\mathcal{F}$ given in \eqref{potential} satisfies
	\[
	\operatorname{Lie}_{E} \mathcal{F} = \left(2 + \frac{2}{n_{0}}\right) \mathcal{F} + \text{quadratic terms in } \mathbf{u}.
	\]
\end{prop}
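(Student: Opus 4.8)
The plan is to reduce the statement to a quasi-homogeneity property of the three-point functions $c_{uvw}:=c(\partial_u,\partial_v,\partial_w)$ listed in Proposition~\ref{3tensorflat}, and then to read that property off directly from their explicit integral representations. Assign to each flat coordinate the weight appearing in \eqref{E}, namely $\deg t_{i,s}=\frac{1}{n_0}-\frac{s}{d_i}$, $\deg h_{0,j}=\frac{1+j}{n_0}$ and $\deg h_{k,r}=\frac{1}{n_0}+\frac{r}{n_k}$, so that $E=\sum_{u\in\mathbf u}(\deg u)\,u\,\partial_u$. Since $E$ has only linear coefficients, $[\partial_u,E]=(\deg u)\partial_u$, and applying this three times gives
\[
\partial_u\partial_v\partial_w\!\left(\operatorname{Lie}_E\mathcal F\right)=\left(\operatorname{Lie}_E+\deg u+\deg v+\deg w\right)c_{uvw}.
\]
Hence it suffices to prove that each $c_{uvw}$ is quasi-homogeneous with
\[
\operatorname{Lie}_E\,c_{uvw}=\Big(2+\tfrac{2}{n_0}-\deg u-\deg v-\deg w\Big)c_{uvw}.
\]
Granting this, $\partial_u\partial_v\partial_w\big(\operatorname{Lie}_E\mathcal F-(2+\tfrac2{n_0})\mathcal F\big)=0$ for all $u,v,w\in\mathbf u$, so the difference is a polynomial of degree at most two in the flat coordinates, which is exactly the assertion.

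To evaluate $\operatorname{Lie}_E c_{uvw}$ I will introduce the combined operator $\hat E:=\operatorname{Lie}_E+\frac{z}{n_0}\partial_z$ acting on $z$-dependent quantities over $\mathcal M$. Because $E$ involves no $z$-dependence, $[\operatorname{Lie}_E,\partial_z]=0$ and thus $[\hat E,\partial_z]=-\frac{1}{n_0}\partial_z$, so $\hat E$ lowers the weight of a $z$-derivative by $\frac1{n_0}$. Two facts then drive everything. First, $\hat E$ is a derivation with $\hat E a=a$ and $\hat E\hat a=\hat a$, which is immediate from the stated formula for $\p_E\vec a$; consequently $\hat E\zeta=\zeta$, $\hat E\ell=\ell$, $\hat E\chi_0=\frac{1}{n_0}\chi_0$, $\hat E\chi_k=\frac{1}{n_k}\chi_k$, $\hat E\zeta'=(1-\frac1{n_0})\zeta'$, $\hat E\ell'=(1-\frac1{n_0})\ell'$ and $\hat E(\zeta^{s/d_i})=\frac{s}{d_i}\zeta^{s/d_i}$. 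Second, $\hat E$ commutes with all the truncation operators $(\cdot)_\pm$, $(\cdot)_{\infty,\ge 0}$ and $(\cdot)_{\varphi_k,\le -1}$, and kills each $\mathbf 1_{\gamma_i}$. This second item is the structural heart: at $\infty$ the operator $\frac{z}{n_0}\partial_z$ multiplies $z^p$ by $\frac{p}{n_0}$, hence preserves Laurent modes and therefore the $(\cdot)_{\infty,\ge 0}$ and $(\cdot)_\pm$ splittings; near $\varphi_k$ one uses that $\varphi_k=h_{k,0}$ is itself a flat coordinate of weight $\frac1{n_0}$, so that $\operatorname{Lie}_E\varphi_k=\frac{1}{n_0}\varphi_k$ and therefore $\hat E(z-\varphi_k)=\frac{z-\varphi_k}{n_0}$; this makes $\hat E$ act on $(z-\varphi_k)^p$ by the scalar $\frac{p}{n_0}$, so it again preserves modes and commutes with $(\cdot)_{\varphi_k,\le -1}$.

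With these in hand the weight bookkeeping is mechanical. Every integrand in Proposition~\ref{3tensorflat} is a product of the eigenfunctions above and their truncations, hence an $\hat E$-eigenfunction whose weight is the sum of the constituent weights; the outer operation $\frac{1}{2\pi i}\oint_{\gamma}$, $\res_{z=\infty}$ or $\res_{z=\varphi_k}$ then shifts the $\operatorname{Lie}_E$-eigenvalue up by exactly $\frac{1}{n_0}$. Indeed, writing $\operatorname{Lie}_E f=\hat E f-\frac{z}{n_0}f'$ and integrating the second term by parts along the fixed contour, the closed-contour identity $\oint(\frac{z}{n_0}f)'dz=0$ converts $-\oint\frac{z}{n_0}f'\,dz$ into $+\frac1{n_0}\oint f\,dz$. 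For the pure-$t$ term the integrand weight is $\frac{s_1+s_2+s_3}{d_{i_1}}+2(1-\frac1{n_0})$, giving $\operatorname{Lie}_E$-eigenvalue $\frac{s_1+s_2+s_3}{d_{i_1}}+2-\frac1{n_0}=2+\frac2{n_0}-\sum\deg t$; for the pure-$h_0$ term the integrand weight is $-\frac{j_1+j_2+j_3}{n_0}+2(1-\frac1{n_0})$, giving $2-\frac{j_1+j_2+j_3+1}{n_0}=2+\frac2{n_0}-\sum\deg h_0$; the remaining families are entirely analogous and yield the required eigenvalue in each case, completing the reduction of the first paragraph.

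The main obstacle is establishing the commutation property in the resonant, moving-pole situation: unlike the fixed point $\infty$, the punctures $\varphi_k$ depend on the base point of $\mathcal M$, so neither $\operatorname{Lie}_E$ nor $\frac{z}{n_0}\partial_z$ commutes with $(\cdot)_{\varphi_k,\le -1}$ separately; only their specific combination $\hat E$ does, and this hinges on the identification $\varphi_k=h_{k,0}$ and the resulting weight $\operatorname{Lie}_E\varphi_k=\frac1{n_0}\varphi_k$. A secondary point requiring care is the integration by parts that produces the uniform $\frac1{n_0}$-shift: one must confirm single-valuedness of each integrand around the contours (this is where the non-integer powers $\zeta^{s/d_i}$ could in principle cause trouble, but $\zeta^{s/d_i}\zeta'\mathbf 1_{\gamma_i}=d_i\,w_i^{\,s+d_i-1}w_i'$ is single-valued in $w_i$), so that no boundary or monodromy contributions appear. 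Once these are settled, substituting the eigenvalues back finishes the proof.
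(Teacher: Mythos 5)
Your proof is correct, but it takes a genuinely different route from the paper's. The paper proves Proposition~\ref{hom} directly on the formula \eqref{potential}: it assigns $\deg z=\tfrac{1}{n_0}$, declares $\zeta$ and $\ell$ homogeneous of degree $1$, reads off the degrees \eqref{degree}--\eqref{degree3} of the flat coordinates, and then asserts that $U_i$, $V_i$ are homogeneous of degree $1+\tfrac{1}{n_0}$ up to linear terms and $W$ of degree $2+\tfrac{2}{n_0}$ up to quadratic terms, whence $\mathcal{F}$ itself is quasi-homogeneous up to quadratic terms. You instead differentiate three times and verify that each structure constant $c_{uvw}$ of Proposition~\ref{3tensorflat} is an eigenfunction of $\operatorname{Lie}_E$ with eigenvalue $2+\tfrac{2}{n_0}-\deg u-\deg v-\deg w$; this relies on Theorem~\ref{potent} (so that $c_{uvw}=\partial_u\partial_v\partial_w\mathcal{F}$) and on the operator $\hat E=\operatorname{Lie}_E+\tfrac{z}{n_0}\partial_z$. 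Notably, your $\hat E$ is exactly the operator $\mathcal{E}$ that the paper introduces only later, in the proof of Theorem~\ref{mainthm2}, together with the same $\tfrac{1}{n_0}$-shift under contour integration; so your argument anticipates the paper's own technique but applies it to the symmetric $3$-tensor rather than to the Hamiltonian densities. As for what each approach buys: the paper's argument is shorter, but it glosses over the fact that the kernels $\log(z_1-z_2)$ and $\log\frac{z_1-z_2}{z_1-p_i}$ are not homogeneous (under rescaling they shift by constants, producing extra terms that must separately be seen to be quadratic in $\mathbf{u}$); your route bypasses the explicit potential, the logarithmic kernels, and the auxiliary functions $U_i$, $V_i$, $W$ altogether, at the cost of the commutation analysis of $\hat E$ with the truncation operators --- whose delicate point, the moving poles, you correctly resolve via $\varphi_k=h_{k,0}$ and $\operatorname{Lie}_E\varphi_k=\tfrac{1}{n_0}\varphi_k$ --- and of the single-valuedness check ($\zeta^{s/d_i}\zeta'\mathbf{1}_{\gamma_i}=d_i w_i^{s+d_i-1}w_i'$) needed for the integration by parts.
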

\begin{proof}
%
Let us assign $\deg\,z=\frac{1}{n_{0}}$, and assume each of the functions $\zeta(z)$ and $\ell(z)$ to be homogeneous of degree $1$. It is easy to see that the flat coordinates of $\mathcal{M}$ have degrees:
\begin{align}\label{degree}
	&\deg\,t_{i,s}=\frac{1}{n_{0}}-\frac{s}{d_{i}},\quad i\in\Z; \\
	&  \deg\,h_{0,j}=\frac{j+1}{n_{0}},\quad
	1\le j\le n_{0}-1; \\
	&     \deg\,h_{k,r}=\frac{r}{n_{k}}+\frac{1}{n_{0}}, \quad 0\le r\le n_{k}. \label{degree3}
\end{align}
Accordingly, one sees that the functions $U_{i}$ and $V_{i}$ defined by \eqref{V1} and \eqref{V2} are homogeneous of degree $1+\frac{1}{n_{0}}$ (up to a linear terms in the flat coordinates),
while $W$ given by \eqref{Suvw} is homogeneous of degree $2+\frac{2}{n_{0}}$ (up to a quadratic terms in the flat coordinates).
Thus the function $\mathcal{F}$ is homogeneous of degree $2+\frac{2}{n_{0}}$ up to quadratic terms.
Therefore the proposition is proved.
\end{proof}
%
%
%
%
%

\subsection{Intersection form}
In analogue to the case of finite dimension, now we introduce an intersection form  on $\mathcal{M}$, which will be applied to construct the principal hierarchy in the next section.

Recall the map $C_{\vec\xi}$ defined by \eqref{whiprod}. According to Corollary~\ref{Cxieta}, with the help of the Euler vector field, it is well defined a linear map $g$ from $T_{\vec{a}}^\star\mathcal{M}$ to $T_{\vec{a}}\mathcal{M}$ as
\begin{equation}\label{gmap}
g(\vec{\omega})=C_{E}(\vec{\omega}),\quad \vec{\omega}\in T_{\vec{a}}^\star\mathcal{M}.
\end{equation}
Let us introduce a bilinear form on the cotangent space $T_{\vec{a}}^\star\mathcal{M}$ as
\begin{equation}\label{blfg}
  (\vec{\omega}_{1},\vec{\omega}_{2})_{g}=\langle \vec{\omega}_{1},g(\vec{\omega}_{2})\rangle,\quad \vec{\omega}_{1},\vec{\omega}_{2}\in T_{\vec{a}}^\star\mathcal{M}.
\end{equation}
We call this bilinear form the intersection form on $\mathcal{M}$.
\begin{prop}
The intersection form \eqref{blfg} is symmetric. More explicitly,
for $\vec{\omega}_{\nu}=(\omega_\nu(z),\hat{\omega}_\nu(z))\in T_{\vec{a}}^\star\mathcal{M}$ with $\nu\in\{1,2\}$, it holds that
\begin{align} (\vec{\omega}_{1},\vec{\omega}_{2})_{g}=&\frac{1}{2\pi\mathrm{i}}\sum_{j=1}^{m}\oint_{\gamma_{j}}((\omega_{1}a'+\hat{\omega}_{1}\hat{a}')_{+}(a\omega_{2}+\hat{a}\hat{\omega}_{2})_{-}+(\omega_{2}a'+\hat{\omega}_{2}\hat{a}')_{+}(a\omega_{1}+\hat{a}\hat{\omega}_{1})_{-}) dz
\nn\\	&-\frac{1}{2\pi\mathrm{i}}\sum_{j=1}^{m}\oint_{\gamma_{j}}(\omega_{1}\omega_{2}aa'+\omega_{1}\hat{\omega}_{2}a\hat{a}' +\omega_{2}\hat{\omega}_{1}a\hat{a}'+\hat{\omega}_{1}\hat{\omega}_{2}\hat{a}\hat{a}')dz
\nn\\
&+\frac{1}{n_{0}}\frac{1}{(2\pi\mathrm{i})^2}\left(\sum_{j=1}^{m}\oint_{\gamma_{j}}(a'\omega_{1} +\hat{a}'\hat{\omega}_{1})dz\right)\left(\sum_{j=1}^{m}\oint_{\gamma_{j}}(a'\omega_{2}+\hat{a}'\hat{\omega}_{2})dz\right). \label{interom12}
\end{align}
\end{prop}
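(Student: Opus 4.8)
The plan is to compute $g(\vec\omega_2) = C_E(\vec\omega_2)$ explicitly and then pair it against $\vec\omega_1$, verifying that the resulting expression coincides with the symmetric formula \eqref{interom12}. Since the intersection form is defined by $(\vec\omega_1,\vec\omega_2)_g = \langle \vec\omega_1, C_E(\vec\omega_2)\rangle$ via the pairing \eqref{whipair}, the first step is to unpack $C_E(\vec\omega_2)$ using the defining formula \eqref{whiprod}. Here the relevant vector is $\vec\xi = \p_E\vec a = \bigl(a(z)-\tfrac{z}{n_0}a'(z),\,\hat a(z)-\tfrac{z}{n_0}\hat a'(z)\bigr)$, as computed just before Proposition~\ref{hom}. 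Substituting $\xi = a-\tfrac{z}{n_0}a'$ and $\hat\xi = \hat a - \tfrac{z}{n_0}\hat a'$ into \eqref{whiprod} gives the two components of $g(\vec\omega_2)$ as explicit projections involving $a'$, $\hat a'$, $\omega_2$, $\hat\omega_2$ and the factor $z/n_0$.

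Next I would split the computation into the ``naive'' part (the terms not involving $z/n_0$) and the ``Euler correction'' part (the terms carrying the factor $z/n_0$). For the naive part, one substitutes $\xi = a$, $\hat\xi = \hat a$ into \eqref{whiprod}, which produces $C_{(a,\hat a)}(\vec\omega_2)$; pairing this against $\vec\omega_1$ through \eqref{whipair} and moving projections across the contour integral (using that $(f)_+(g)_- + (f)_-(g)_+$ integrates by parts symmetrically, and that projections are adjoint with respect to the residue pairing) should yield the first two lines of \eqref{interom12}, namely the cross terms $(\omega_1 a'+\hat\omega_1\hat a')_+(a\omega_2+\hat a\hat\omega_2)_-$ together with the symmetric partner and the quadratic $aa',a\hat a',\hat a\hat a'$ terms. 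The key manipulation throughout is the standard bilinear-form identity that for functions holomorphic near $\gamma$, the contour integral $\tfrac{1}{2\pi\mathrm i}\sum_j\oint_{\gamma_j} f_+ g\,dz$ can be rewritten by transferring the projection, exploiting $f_+ + f_- = f$ and the vanishing of $\oint f_- g_-\,dz$-type terms when both have appropriate holomorphy.

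For the Euler correction part, the terms carrying $\tfrac{z}{n_0}$ combine into the last line of \eqref{interom12}: the factor $\tfrac{z}{n_0}a'$ paired against $\omega_2$ under projection should collapse, via a residue computation, into the product of two integrals $\bigl(\sum_j\oint (a'\omega_1+\hat a'\hat\omega_1)dz\bigr)\bigl(\sum_j\oint(a'\omega_2+\hat a'\hat\omega_2)dz\bigr)$ divided by $n_0$. The mechanism is that $z$ acting on a truncated Laurent/meromorphic object shifts the expansion by one degree, so $\res$ of $z\cdot(\cdots)'$ picks out precisely the leading coefficient, producing the rank-one term. Once the naive part and the Euler correction are both in hand, adding them gives \eqref{interom12}, and the symmetry of the right-hand side under $\vec\omega_1\leftrightarrow\vec\omega_2$ is then manifest, establishing that $(\,\cdot\,,\cdot\,)_g$ is symmetric.

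The main obstacle I anticipate is bookkeeping the truncation (projection) operators correctly: because of the subtlety flagged in the introduction that $(f\,\mathbf 1_{\gamma_j})_\pm$ need not vanish even when $f\,\mathbf 1_{\gamma_j}$ vanishes on the other circles, one must be careful that the multi-circle projections $(\cdot)_+$, $(\cdot)_-$ behave as expected under the sum $\sum_j\oint_{\gamma_j}$, and that integration-by-parts across the contour does not silently drop boundary contributions at $\infty$ and at the poles $\varphi_k$. I expect the delicate point to be isolating the Euler correction cleanly and confirming it yields exactly the rank-one term with the correct coefficient $\tfrac{1}{n_0}$ rather than spurious extra residues; verifying that the non-rank-one pieces of the correction cancel against parts of the naive term (so that the projections reorganize into the clean form \eqref{interom12}) is where the computation demands the most care.
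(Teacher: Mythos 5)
Your proposal follows essentially the same route as the paper: the paper likewise computes $g(\vec{\omega})=C_E(\vec{\omega})$ from \eqref{whiprod} with $\partial_E\vec{a}=\bigl(a-\tfrac{z}{n_0}a',\,\hat{a}-\tfrac{z}{n_0}\hat{a}'\bigr)$, collapses the $z/n_0$ terms into the rank-one correction via the identity $z\omega_--(z\omega)_-=\tfrac{1}{2\pi\mathrm{i}}\sum_{j=1}^m\oint_{\gamma_j}\omega\,dz$ (the precise form of your ``leading coefficient'' mechanism), and then pairs with $\vec{\omega}_1$ and reorganizes the projections exactly as you describe. The only difference is in how symmetry is obtained: the paper gets it before any computation from $(\vec{\omega}_1,\vec{\omega}_2)_g=\langle\vec{\omega}_1\circ\vec{\omega}_2,E\rangle$ (via \eqref{eq-Cxi} and commutativity of $\circ$), whereas you read it off the manifestly symmetric final formula --- both are valid.
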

\begin{proof}
Firstly, by using \eqref{eq-Cxi} we have
$$
(\vec{\omega}_{1},\vec{\omega}_{2})_{g}=\langle\vec{\omega}_{1}\circ \vec{\omega}_{2}
,E\rangle.
$$
Hence the intersection form is symmetric. By using \eqref{whiprod}, one has
\begin{align*}
	g(\vec{\omega})=&(a'(\omega a+\hat{\omega}\hat{a})_{-}-a(\omega a'+\hat{\omega}\hat{a}')_{-}+\frac{a'}{n_{0}}\frac{1}{2\pi\mathrm{i}}\sum_{j=1}^{m}\oint_{\gamma_{j}}(\omega a'+\hat{\omega}\hat{a}')dz,\\
	&-\hat{a}'(\omega a+\hat{\omega}\hat{a})_{+}+\hat{a}(\omega a'+\hat{\omega}\hat{a}')_{+}+\frac{\hat{a}'}{n_{0}}\frac{1}{2\pi\mathrm{i}}\sum_{j=1}^{m}\oint_{\gamma_{j}}(\omega a'+\hat{\omega}\hat{a}')dz)
\end{align*}
 with the help of
$$
z\omega_{-}-(z\omega)_{-}=\frac{1}{2\pi\mathrm{i}}\sum_{j=1}^{m}\oint_{\gamma_{j}}\omega dz,\quad \omega\in\mathcal{H}.
$$
Then, the equality \eqref{interom12} follows from a straightforward calculation. The proposition is proved.
\end{proof}

\section{Principal hierarchy for $\mathcal{M}$}

In this section, we proceed to construct the principal hierarchy for the infinite-dimensional Frobenius manifold $\mathcal{M}$, and show that one of its subhierarchy coincides with the genus-zero Whitham hierarchy. Inspired by reduction properties of the Whitham hierarchy, 
we will study the relationship between $\mathcal{M}$ and some finite-dimensional Frobenius manifolds with rational superpotentials.

\subsection{General theory of principal hierarchy}
As a preparation, we review the general theory of the principal hierarchy \cite{dub1998} for a Frobenius manifold $(M,\eta,\circ,e,E)$, where $\eta$ is the flat metric, $\circ$ is the product of vector fields, $e$ and $E$ are the unit vector field and the Euler vector field respectively. Let $\nabla$ be the Levi-Civita connection of the metric $\eta$.
The deformed flat connection on \(M\), originally introduced by Dubrovin \cite{dub1998}, is given by
\[
\tilde{\nabla}_X Y = \nabla_X Y + z X \circ Y, \quad X, Y \in \operatorname{Vect}(M).
\]
This connection extends to a flat connection on \(M \times \mathbb{C}^*\) such that:
\[
\begin{aligned}
	&\tilde{\nabla}_X \frac{d}{d z} = 0, \\
	&\tilde{\nabla}_{\frac{d}{d z}} \frac{d}{d z} = 0, \\
	&\tilde{\nabla}_{\frac{d}{d z}} X = \partial_z X + E \circ X - \frac{1}{z} \mathcal{V}(X).
\end{aligned}
\]
Here \(X\) is a vector field on \(M \times \mathbb{C}^*\) that is trivial along the \(\mathbb{C}^*\) component, and ($d$ is the charge of $M$)
\[
\mathcal{V}(X) := \frac{2-d}{2} X - \nabla_X E.
\]

One chooses a system of flat coordinates $t=(t^{1},\cdots,t^{n})$ with respect to the metric $\eta$, such that
\[
\mathcal{V}\left(\frac{\partial}{\partial t^\alpha}\right)=\mu_\alpha \frac{\partial}{\partial t^\alpha}, \quad \alpha = 1, \ldots, n,
\]
with \( \mu_1, \ldots, \mu_n \) being the spectrum of $M$.
Suppose that the linear differential equation  \( \widetilde{\nabla} \sigma = 0 \) of an unknown $1$-form $\sigma$ has the following fundamental solutions:
\[
\sigma_\alpha(t,z) = \frac{\partial \tilde{v}_\alpha(t,z)}{\partial t^\beta} d t^\beta ~~ \hbox{with} ~~ \alpha \in\{ 1, \ldots, n\}, \quad  \sigma_{n+1}(t,z) = d z.
\]
Here the functions \(\tilde{v}_1(t, z), \ldots, \tilde{v}_n(t, z)\) can be normalized as
\[
\left(\tilde{v}_1(t, z), \ldots, \tilde{v}_n(t, z)\right) = \left(\theta_1(t, z), \ldots, \theta_n(t, z)\right) z^\mu z^R,
\]
where \( \theta_\alpha(t, z) \) are analytic functions in a neighborhood of \( z = 0 \) such that
\begin{equation}\label{princon1}
	\theta_{\alpha}(t,0) = \eta_{\alpha \beta} t^\beta,
\end{equation}
 and the constant matrices
 \[  \mu = \operatorname{diag}(\mu_1, \ldots, \mu_n), \quad  R = R_1 + \ldots + R_n
\]
 with $R$ being nilpotent and satisfying
\begin{align*}
	& (R_s)_\beta^\alpha = 0 ~~\text{ if }~~ \mu_\alpha - \mu_\beta \neq s, \\
	& (R_s)_\alpha^\gamma \eta_{\gamma \beta} = (-1)^{s+1} (R_s)_\beta^\gamma \eta_{\gamma \alpha}.
\end{align*}
Here and below, a matrix $R=(R_\beta^\alpha)$ is said to be nilpotent if there is a positive integer $N$ such that the product $R_{\alpha_2}^{\alpha_1}R_{\alpha_3}^{\alpha_2}\dots R_{\alpha_{N}}^{\alpha_{N-1}}$ vanishes
 for any indices $\alpha_1,\alpha_2, \dots, \alpha_N$.

Let $c_{\beta \gamma}^\epsilon(t)$ be the structure coefficients for the basis $\{\partial/\partial t^\alpha\}_{\alpha=1}^{n}$. The coefficients in the expansions (about $z=0$)
\[
\theta_\alpha(t, z) = \sum_{p \geq 0} \theta_{\alpha, p}(t) z^p, \quad \alpha = 1, \ldots, n,
\]
satisfy that
\begin{align}
\label{princon2}
 \theta_{\alpha,0}=& \eta_{\alpha \beta} t^\beta, \\
\label{princon3}
	\frac{\partial^2 \theta_{\alpha, p+1}(t)}{\partial t^\beta \partial t^\gamma} =& c_{\beta \gamma}^\epsilon(t) \frac{\partial \theta_{\alpha, p}(t)}{\partial t^\epsilon},
\\
	\operatorname{Lie}_E\left(\frac{\partial \theta_{\alpha, p}(t)}{\partial t^\beta}\right) =& \left(p + \mu_\alpha + \mu_\beta\right) \frac{\partial \theta_{\alpha, p}(t)}{\partial t^\beta} +\sum_{s=1}^p  \left(R_{s}\right)_\alpha^\epsilon\frac{\partial\theta_{\epsilon, p-s}(t)}{\partial t^\beta}. \label{princon4}
\end{align}

Given a system of functions \(\left\{\theta_{\alpha, p}\right\}\) satisfying the conditions \eqref{princon4}--\eqref{princon2},
the principal hierarchy for \(M\) means the following system of Hamiltonian equations on the loop space \(\left\{S^1 \rightarrow M\right\}\):
\begin{equation}\label{ham1}
\frac{\partial t^\gamma}{\partial T^{\alpha, p}} = \left\{t^\gamma(x), \int \theta_{\alpha, p+1} (t) \, d x\right\}_1 := \eta^{\gamma \beta} \frac{\partial}{\partial x}\left(\frac{\partial \theta_{\alpha, p+1} (t)}{\partial t^\beta}\right), \end{equation}
where $\alpha, \beta \in\{ 1,2, \ldots, n\}$ and $\ p \geq 0$.
In generic case, these flows can be written in the bi-Hamiltonian recursion form via
\begin{align*}
\left\{t^\gamma(x), \int \theta_{\alpha, p} (t) \, d x\right\}_2 =&\left(p + \mu_\alpha + \frac{1}{2}\right) \left\{t^\gamma(x), \int \theta_{\alpha, p+1} (t) \, d x\right\}_1 \\
 &+ \sum_{s=1}^p \left(R_{s}\right)_\alpha^\epsilon \left\{t^\gamma(x), \int \theta_{\epsilon, p-s+1} (t) \, d x\right\}_1.
\end{align*}
Note that every linear combination of $\{\ ,\ \}_1$ and $\{\ ,\ \}_2$ is still a Hamiltonian structure, namely, they compose a bi-Hamiltonian structure.

To summarize, in order to construct the principal hierarchy for the infinite-dimensional Frobenius manifold $\mathcal{M}$ constructed in the previous section, it suffices to calculate the Hamiltonian structure as in \eqref{ham1}, as well as a set of Hamiltonian densities satisfying the conditions like \eqref{princon2}--\eqref{princon4}.

\subsection{Bi-Hamiltonian structure on the loop space of $\mathcal{M}$}

As to be seen, on the loop space $\{S^1\to\mathcal{M}\}$ there are two compatible Hamiltonian structures, which correspond to the flat metric $\langle\ ,\ \rangle_{\eta}$ in \eqref{whimetric} and the intersection form $\langle\ ,\ \rangle_{g}$ in \eqref{blfg} respectively.

\begin{lem}\label{nabla}
For any two vector fields \(\partial_1\) and \(\partial_2\) on \(\mathcal{M}\), 
the covariant derivative of \(\partial_2\) along \(\partial_1\) with respect to the metric $\langle\ ,\ \rangle_{\eta}$ is given by
		\begin{align}
			\left(\nabla_{\partial_1} \partial_2\right) (\vec{a})= & \bigg(\partial_1 \partial_2 a(z)-\left(\frac{\partial_1 \zeta(z) \partial_2 \zeta(z)}{\zeta^{\prime}(z)}\right)_{-}^{\prime} -\left(\frac{\partial_1 \ell(z) \partial_2 \ell(z)}{\ell^{\prime}(z)}\right)_{\infty,\,\ge 0}^{\prime} \nn\\
&-\sum_{j=1}^{m}\left(\frac{\partial_1 \ell(z) \partial_2 \ell(z)}{\ell^{\prime}(z)}\right)_{\varphi_{j},\,\le -1}^{\prime},
			 \   \partial_1 \partial_2 \hat{a}(z)+\left(\frac{\partial_1 \zeta(z) \partial_2 \zeta(z)}{\zeta^{\prime}(z)}\right)_{+}^{\prime}
\nn\\
&-\left(\frac{\partial_1 \ell(z) \partial_2 \ell(z)}{\ell^{\prime}(z)}\right)_{\infty,\,\ge 0}^{\prime}-\sum_{j=1}^{m}\left(\frac{\partial_1 \ell(z) \partial_2 \ell(z)}{\ell^{\prime}(z)}\right)_{\varphi_{j},\,\le -1}^{\prime}\bigg) . \label{conn}
		\end{align}
\end{lem}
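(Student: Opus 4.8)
The plan is to show that the right-hand side of \eqref{conn} defines a torsion-free connection compatible with the metric $\langle\ ,\ \rangle_\eta$ of \eqref{whimetric}, and then to invoke the uniqueness of the Levi-Civita connection. First I would check that the expression on the right of \eqref{conn} actually takes values in $T_{\vec a}\mathcal M$ and defines an affine connection: it is $C^\infty(\mathcal M)$-linear in $\partial_1$ (each summand is built from $\partial_1\zeta$, $\partial_1\ell$, or $\partial_1\partial_2$), and under $\partial_2\mapsto g\,\partial_2$ it satisfies $\nabla_{\partial_1}(g\,\partial_2)=(\partial_1 g)\partial_2+g\,\nabla_{\partial_1}\partial_2$, since $\partial_1(g\,\partial_2 a)=(\partial_1 g)\partial_2 a+g\,\partial_1\partial_2 a$ while the correction terms scale by $g$. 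Torsion-freeness is then immediate: the corrections are symmetric under $\partial_1\leftrightarrow\partial_2$ (they depend only on $\partial_1\zeta\,\partial_2\zeta$ and $\partial_1\ell\,\partial_2\ell$), so $\nabla_{\partial_1}\partial_2-\nabla_{\partial_2}\partial_1=\big([\partial_1,\partial_2]a,\,[\partial_1,\partial_2]\hat a\big)=[\partial_1,\partial_2]$.

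The substance is metric compatibility, $\partial_1\langle\partial_2,\partial_3\rangle_\eta=\langle\nabla_{\partial_1}\partial_2,\partial_3\rangle_\eta+\langle\partial_2,\nabla_{\partial_1}\partial_3\rangle_\eta$ for commuting coordinate fields. To evaluate the right-hand side I would first extract the $\zeta$- and $\ell$-data of the tangent vector $\nabla_{\partial_1}\partial_2=(Y^a,Y^{\hat a})$. Forming the difference $Y^a-Y^{\hat a}$, the two identical $\ell$-corrections cancel while the two $\zeta$-projections recombine via $(\cdot)_++(\cdot)_-=(\cdot)$, giving
\[
Y^a-Y^{\hat a}=\partial_1\partial_2\zeta-\Big(\tfrac{\partial_1\zeta\,\partial_2\zeta}{\zeta'}\Big)'.
\]
For $Y^\ell=(Y^a)_++(Y^{\hat a})_-$ I would use that $\big(\tfrac{\partial_1\zeta\partial_2\zeta}{\zeta'}\big)_-$ and each $\big(\tfrac{\partial_1\ell\partial_2\ell}{\ell'}\big)_{\varphi_j,\le-1}$ are of $(\cdot)_-$-type (holomorphic on $\mathbf D^{ext}$, vanishing at $\infty$, since $\varphi_j\in D_j$), whereas $\big(\tfrac{\partial_1\ell\partial_2\ell}{\ell'}\big)_{\infty,\ge0}$ is a polynomial and hence of $(\cdot)_+$-type; together with $\partial_1\partial_2\ell=(\partial_1\partial_2 a)_++(\partial_1\partial_2\hat a)_-$ this collapses to
\[
Y^\ell=\partial_1\partial_2\ell-\Big[\Big(\tfrac{\partial_1\ell\,\partial_2\ell}{\ell'}\Big)_{\infty,\ge0}+\sum_{j=1}^{m}\Big(\tfrac{\partial_1\ell\,\partial_2\ell}{\ell'}\Big)_{\varphi_j,\le-1}\Big]'.
\]

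With these in hand the $\zeta$-contribution to the compatibility identity follows cleanly: substituting $Y^a-Y^{\hat a}$ into the first term of \eqref{whimetric} and integrating by parts on the closed contours $\gamma_j$, the discrepancy between the two sides reduces to $\tfrac{1}{2\pi\mathrm i}\sum_j\oint_{\gamma_j}\tfrac{d}{dz}\big(\tfrac{\partial_1\zeta\,\partial_2\zeta\,\partial_3\zeta}{(\zeta')^2}\big)\,dz=0$, the integrand being a total $z$-derivative of a function holomorphic near $\gamma_j$. Concretely this rests on the elementary identity $\tfrac{\partial_3\zeta}{\zeta'}\big(\tfrac{\partial_1\zeta\partial_2\zeta}{\zeta'}\big)'+\tfrac{\partial_2\zeta}{\zeta'}\big(\tfrac{\partial_1\zeta\partial_3\zeta}{\zeta'}\big)'=\tfrac{\partial_2\zeta\,\partial_3\zeta\,(\partial_1\zeta)'}{(\zeta')^2}+\tfrac{d}{dz}\big(\tfrac{\partial_1\zeta\partial_2\zeta\partial_3\zeta}{(\zeta')^2}\big)$, whose first summand on the right matches the term produced by differentiating $1/\zeta'$ in $\partial_1\langle\partial_2,\partial_3\rangle_\eta$.

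The $\ell$-contribution is where I expect the main obstacle. Here the pairing is a sum of residues at $\infty$ and $\varphi_1,\dots,\varphi_m$ rather than a contour integral, so the naive total derivative $\tfrac{d}{dz}\big(\tfrac{\partial_1\ell\,\partial_2\ell\,\partial_3\ell}{(\ell')^2}\big)$ cannot simply be discarded: it acquires extra poles at the zeros of $\ell'$, and one must show these produce no net residue. I would argue this by the global residue theorem on $\mathbb P^1$, verifying that the singular-part subtractions in $Y^\ell$ are exactly what removes the spurious critical-point residues, leaving only the residues at $\infty$ and the $\varphi_j$. Equivalently, and more conceptually, the $\ell$-sector of $\langle\ ,\ \rangle_\eta$ together with $W$ in \eqref{Suvw} is precisely the data of the finite-dimensional Frobenius manifold with superpotential $\ell(z)$ (as noted after \eqref{Suvw}), whose Levi-Civita connection is the classical superpotential formula appearing as the $\ell$-part of \eqref{conn}. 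Combining the $\zeta$- and $\ell$-computations yields metric compatibility, and uniqueness of the Levi-Civita connection then identifies \eqref{conn} with $\nabla$.
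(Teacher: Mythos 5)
Your proposal has the same skeleton as the paper's own proof: verify that \eqref{conn} is torsion-free and compatible with $\langle\ ,\ \rangle_\eta$, then appeal to uniqueness of the Levi-Civita connection. Your extraction of the $\zeta$- and $\ell$-components of $\nabla_{\partial_1}\partial_2$ is exactly the paper's \eqref{compfor6}--\eqref{compfor7}, and your total-derivative identity for the $\zeta$-sector is correct and is just a repackaging of the paper's integration by parts \eqref{compfor2}. So far this matches the paper step for step.

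The problem is your diagnosis of the $\ell$-sector, which rests on a false premise. You claim the total derivative $\frac{d}{dz}\bigl(\frac{\partial_1\ell\,\partial_2\ell\,\partial_3\ell}{(\ell')^2}\bigr)$ ``cannot simply be discarded'' because it has poles at the zeros of $\ell'$. But a residue is a purely local quantity: for any function $g$ meromorphic near a point, $\res\, dg=0$ at that point, no matter what $g$ does elsewhere on $\mathbb{P}^1$. Hence integration by parts is just as legitimate for $\res_{z=\infty}$ and $\res_{z=\varphi_j}$ as for contour integrals, and this is precisely how the paper proceeds in \eqref{compfor3}--\eqref{compfor4}; no global residue theorem is needed, and there are no ``spurious critical-point residues'' to cancel in the exact term. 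The step that genuinely requires an argument (and which your remark about ``singular-part subtractions'' only gestures at) is different: the connection \eqref{conn} contains the truncations $\bigl(\frac{\partial_1\ell\,\partial_2\ell}{\ell'}\bigr)_{\infty,\ge 0}+\sum_j\bigl(\frac{\partial_1\ell\,\partial_2\ell}{\ell'}\bigr)_{\varphi_j,\le -1}$ rather than the full ratio $\frac{\partial_1\ell\,\partial_2\ell}{\ell'}$, and these differ by the principal parts of that ratio at the zeros of $\ell'$. One must check that this difference contributes nothing to the residues at $\infty$ and at the $\varphi_j$. This is a local pole count: $\partial_\nu\ell/\ell'$ is holomorphic at each $\varphi_j$ (both numerator and denominator have poles of order $n_j+1$ there) and is $O(z^{-1})$ at $\infty$ (the numerator is $O(z^{n_0-2})$ because the top two coefficients of $\ell$ are frozen), while each discarded principal part has derivative holomorphic at every $\varphi_j$ and $O(z^{-2})$ at $\infty$; so every correction term is holomorphic at $\varphi_j$ and $O(z^{-3})$ at $\infty$, hence residue-free. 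Your global-residue-theorem route can be pushed through, but it ultimately reduces to this same estimate, so it is a detour rather than a remedy; and your fallback of quoting the finite-dimensional superpotential connection is admissible in principle (the metric \eqref{whimetric} is block-diagonal, with the $\zeta$-block depending only on $\zeta$ and the $\ell$-block only on $\ell$, so the connection splits as a product), but it simply outsources the identical truncation issue to the cited result.
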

\begin{proof}
It suffices to verify that the connection $\nabla$ defined by equality \eqref{conn} is torsion-free and compatible with $\langle\ ,\ \rangle_\eta$.

Firstly, from \eqref{conn} it is easy to check 
	$$
	\left(\nabla_{\partial_1} \partial_2-\nabla_{\partial_2} \partial_1\right) ( \vec{a})=\partial_1 \partial_2 \vec{a}-\partial_2 \partial_1 \vec{a},
	$$
which implies that the connection  $\nabla$ is torsion-free.
	
On the other hand, for any vector field $\partial_3$, by using  \eqref{whimetric} we have
		\begin{align}
			& {\partial_1}\left\langle\partial_2, \partial_3\right\rangle_\eta
\nn\\
			= & -\frac{1}{2 \pi \mathrm{i}} \dsum_{i=1}^{m}\oint_{\gamma_{i}}\left( \frac{\partial_1 \partial_2 \zeta(z) \cdot \partial_3 \zeta(z)}{\zeta^{\prime}(z)} + \frac{\partial_2 \zeta(z) \cdot \partial_1 \partial_3 \zeta(z)}{\zeta^{\prime}(z)} - \frac{\partial_2 \zeta(z) \cdot \partial_3 \zeta(z) \cdot \partial_1 \zeta^{\prime}(z)}{\left(\zeta^{\prime}(z)\right)^2}\right) d z
\nn\\
			& -\operatorname{Res}_{z=\infty} \left( \frac{\partial_1 \partial_2 \ell(z) \cdot \partial_3 \ell(z)}{\ell^{\prime}(z)} + \frac{\partial_2 \ell(z) \cdot \partial_1 \partial_3 \ell(z)}{\ell^{\prime}(z)} - \frac{\partial_2 \ell(z) \cdot \partial_3 \ell(z) \cdot \partial_1 \ell^{\prime}(z)}{\left(\ell^{\prime}(z)\right)^2}\right) d z
\nn\\
			& -\sum_{i=1}^{m}\operatorname{Res}_{z=\varphi_{i}}\left( \frac{\partial_1 \partial_2 \ell(z) \cdot \partial_3 \ell(z)}{\ell^{\prime}(z)}  + \frac{\partial_2 \ell(z) \cdot \partial_1 \partial_3 \ell(z)}{\ell^{\prime}(z)} -\frac{\partial_2 \ell(z) \cdot \partial_3 \ell(z) \cdot \partial_1 \ell^{\prime}(z)}{\left(\ell^{\prime}(z)\right)^2} \right)dz. \label{compfor1}
		\end{align}
In particular, by integration by parts, we obtain
	\begin{align}
			& \frac{1}{2 \pi \mathrm{i}} \dsum_{i=1}^{m}\oint_{\gamma_{i}} \frac{\partial_2 \zeta(z) \cdot \partial_3 \zeta(z) \cdot \partial_1 \zeta^{\prime}(z)}{\left(\zeta^{\prime}(z)\right)^2} d z
\nn\\
			= & -\frac{1}{2 \pi \mathrm{i}} \dsum_{i=1}^{m}\oint_{\gamma_{i}}\left(\frac{\partial_2 \zeta(z)}{\zeta^{\prime}(z)}\right)^{\prime} \frac{\partial_3 \zeta(z) \cdot \partial_1 \zeta(z)}{\zeta^{\prime}(z)} d z-\frac{1}{2 \pi \mathrm{i}} \dsum_{i=1}^{m}\oint_{\gamma_{i}}\left(\frac{\partial_3 \zeta(z)}{\zeta^{\prime}(z)}\right)^{\prime} \frac{\partial_2 \zeta(z) \cdot \partial_1 \zeta(z)}{\zeta^{\prime}(z)} d z
\nn\\
			= & \frac{1}{2 \pi \mathrm{i}} \dsum_{i=1}^{m}\oint_{\gamma_{i}} \frac{\partial_2 \zeta(z)}{\zeta^{\prime}(z)}\left(\frac{\partial_3 \zeta(z) \cdot \partial_1 \zeta(z)}{\zeta^{\prime}(z)}\right)^{\prime} d z+\frac{1}{2 \pi \mathrm{i}} \dsum_{i=1}^{m}\oint_{\gamma_{i}} \frac{\partial_3 \zeta(z)}{\zeta^{\prime}(z)}\left(\frac{\partial_2 \zeta(z) \cdot \partial_1 \zeta(z)}{\zeta^{\prime}(z)}\right)^{\prime} d z, \label{compfor2}
\\
			& \operatorname{Res}_{z=\infty} \frac{\partial_2 \ell(z) \cdot \partial_3 \ell(z) \cdot \partial_1 \ell^{\prime}(z)}{\left(\ell^{\prime}(z)\right)^2} d z
\nn\\
			= & \operatorname{Res}_{z=\infty} \frac{\partial_2 \ell(z)}{\ell^{\prime}(z)}\left(\frac{\partial_3 \ell(z) \cdot \partial_1 \ell(z)}{\ell^{\prime}(z)}\right)^{\prime} d z+\operatorname{Res}_{z=\infty} \frac{\partial_3 \ell(z)}{\ell^{\prime}(z)}\left(\frac{\partial_2 \ell(z) \cdot \partial_1 \ell(z)}{\ell^{\prime}(z)}\right)^{\prime} d z, \label{compfor3}
\\
			& \operatorname{Res}_{z=\varphi_{i}} \frac{\partial_2 \ell(z) \cdot \partial_3 \ell(z) \cdot \partial_1 \ell^{\prime}(z)}{\left(\ell^{\prime}(z)\right)^2} d z
\nn\\
			= & \operatorname{Res}_{z=\varphi_{i}} \frac{\partial_2 \ell(z)}{\ell^{\prime}(z)}\left(\frac{\partial_3 \ell(z) \cdot \partial_1 \ell(z)}{\ell^{\prime}(z)}\right)^{\prime} d z+\operatorname{Res}_{z=\varphi_{i}} \frac{\partial_3 \ell(z)}{\ell^{\prime}(z)}\left(\frac{\partial_2 \ell(z) \cdot \partial_1 \ell(z)}{\ell^{\prime}(z)}\right)^{\prime} d z . \label{compfor4}
		\end{align}
Substituting them into \eqref{compfor1}, we arrive at
		\begin{align}
			&  {\partial_1}\left\langle\partial_2, \partial_3\right\rangle_\eta
\nn\\
			=&-\frac{1}{2 \pi \mathrm{i}} \dsum_{i=1}^{m}\oint_{\gamma_{i}} \frac{\partial_1 \partial_2 \zeta(z) \cdot \partial_3 \zeta(z)}{\zeta^{\prime}(z)} d z-\frac{1}{2 \pi \mathrm{i}} \dsum_{i=1}^{m}\oint_{\gamma_{i}} \frac{\partial_2 \zeta(z) \cdot \partial_1 \partial_3 \zeta(z)}{\zeta^{\prime}(z)} d z
\nn\\
			&+\frac{1}{2 \pi \mathrm{i}} \dsum_{i=1}^{m}\oint_{\gamma_{i}} \frac{\partial_2 \zeta(z)}{\zeta^{\prime}(z)}\left(\frac{\partial_3 \zeta(z) \cdot \partial_1 \zeta(z)}{\zeta^{\prime}(z)}\right)^{\prime} d z
\nn\\
&+\frac{1}{2 \pi \mathrm{i}} \dsum_{i=1}^{m}\oint_{\gamma_{i}} \frac{\partial_3 \zeta(z)}{\zeta^{\prime}(z)}\left(\frac{\partial_2 \zeta(z) \cdot \partial_1 \zeta(z)}{\zeta^{\prime}(z)}\right)^{\prime} d z
\nn\\
			&-\operatorname{Res}_{z=\infty} \frac{\partial_1 \partial_2 \ell(z) \cdot \partial_3 \ell(z)}{\ell^{\prime}(z)} d z-\operatorname{Res}_{z=\infty} \frac{\partial_2 \ell(z) \cdot \partial_1 \partial_3 \ell(z)}{\ell^{\prime}(z)} d z
\nn\\
			&+\operatorname{Res}_{z=\infty} \frac{\partial_2 \ell(z)}{\ell^{\prime}(z)}\left(\frac{\partial_3 \ell(z) \cdot \partial_1 \ell(z)}{\ell^{\prime}(z)}\right)^{\prime} d z+\operatorname{Res}_{z=\infty} \frac{\partial_3 \ell(z)}{\ell^{\prime}(z)}\left(\frac{\partial_2 \ell(z) \cdot \partial_1 \ell(z)}{\ell^{\prime}(z)}\right)^{\prime} d z
\nn\\
			&-\sum_{i=1}^{m}\operatorname{Res}_{z=\varphi_{i}}\left(  \frac{\partial_1 \partial_2 \ell(z) \cdot \partial_3 \ell(z)}{\ell^{\prime}(z)} + \frac{\partial_2 \ell(z) \cdot \partial_1 \partial_3 \ell(z)}{\ell^{\prime}(z)}\right) d z
\nn\\
			&+\sum_{i=1}^{m}\operatorname{Res}_{z=\varphi_{i}} \frac{\partial_2 \ell(z)}{\ell^{\prime}(z)}\left(\frac{\partial_3 \ell(z) \cdot \partial_1 \ell(z)}{\ell^{\prime}(z)}\right)^{\prime} d z
\nn\\
&+\sum_{i=1}^{m}\operatorname{Res}_{z=\varphi_{i}} \frac{\partial_3 \ell(z)}{\ell^{\prime}(z)}\left(\frac{\partial_2 \ell(z) \cdot \partial_1 \ell(z)}{\ell^{\prime}(z)}\right)^{\prime} d z . \label{compfor5}
		\end{align}
Furthermore, from \eqref{conn} and \eqref{zetaell} it follows that
	\begin{align}
		&\left(\nabla_{\partial_1} \partial_2\right) ( \zeta(z))=\partial_1 \partial_2 \zeta(z)-\left(\frac{\partial_1 \zeta(z) \partial_2 \zeta(z)}{\zeta^{\prime}(z)}\right)^{\prime},\label{compfor6}\\
		&	\left(\nabla_{\partial_1} \partial_2\right) ( \ell(z) )=\partial_1 \partial_2 \ell(z)-\left(\frac{\partial_1 \ell(z) \partial_2 \ell(z)}{\ell^{\prime}(z)}\right)_{\infty,\,\ge 0}^{\prime}-\sum_{j=1}^{m}\left(\frac{\partial_1 \ell(z) \partial_2 \ell(z)}{\ell^{\prime}(z)}\right)_{\varphi_{j},\,\le -1}^{\prime} .\label{compfor7}
	\end{align}
Thus, taking \eqref{compfor5}--\eqref{compfor7} together, we achieve
	\begin{equation}\label{compa}
		 {\partial_1}\left\langle\partial_2, \partial_3\right\rangle_\eta=\left\langle\nabla_{\partial_1} \partial_2, \partial_3\right\rangle_\eta+\left\langle\partial_2, \nabla_{\partial_1} \partial_3\right\rangle_\eta,
	\end{equation}
which means that $\nabla$ is compatible with the metric $\langle \ ,\ \rangle_\eta$. Therefore the lemma is proved.
\end{proof}

For any covector field  $\vec{\omega}$ on $\mathcal{M}$, its covariant derivative is given by
\begin{equation}\label{nablaom}
\left\langle \nabla_{\partial_1} \vec{\omega}, \partial_2\right\rangle = \partial_1\left\langle\vec{\omega}, \partial_2\right\rangle-\left\langle\vec{\omega}, \nabla_{\partial_1} \partial_2\right\rangle, \quad \forall  \partial_1, \partial_2\in \mathbf{Vect}(\mathcal{M}).
\end{equation}
Note that this covariant derivative is well defined due to Lemma~\ref{pairtancotan}.

Recall the flat coordinates \eqref{flatt}--\eqref{flath} of $\mathcal{M}$. In the loop space of  $\mathcal{M}$ with a loop parameter $x$, let us consider the vector field
\[
\frac{\partial}{\partial x}=\sum_{u\in\mathbf{u}}\frac{\partial u}{\partial x}\frac{\partial}{\partial u}.
\]
Inspired by \eqref{ham1}, let us introduce the following
Hamiltonian operator:
\begin{equation}\label{hamdef}
	\mathcal{P}_1(\vec{\omega}) = \eta \left( \nabla_{\frac{\partial}{\partial x}} \vec{\omega} \right),
\end{equation}
with $\vec{\omega}$ being an arbitrary covector field.

\begin{prop}\label{P1}
	The Hamiltonian operator \eqref{hamdef} satisfy
		\begin{align}
			\mathcal{P}_1(\vec{\omega})=&\Big(\{\omega(z),a(z)\}_{-}+\{\hat{\omega}(z),\hat{a}(z)\}_{-} -\{\omega(z)_{-}+\hat{\omega}(z)_{-},a(z)\},
\nn\\
			&-\{\omega(z),a(z)\}_{+}-\{\hat{\omega}(z),\hat{a}(z)\}_{+}+\{\omega(z)_{+} +\hat{\omega}(z)_{+},\hat{a}(z)\}\Big), \label{whiham}
		\end{align}
	where $\vec{\omega}=(\omega(z),\hat{\omega}(z))$ is a covector field on $\mathcal{M}$, and
\[
\{f,g\}=\frac{\partial f}{\partial z}\frac{\partial g}{\partial x}-\frac{\partial g}{\partial z}\frac{\partial f}{\partial x}.
\]
\end{prop}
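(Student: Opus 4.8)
Recall from \eqref{hamdef} that $\mathcal{P}_1(\vec\omega)=\eta\big(\nabla_{\partial/\partial x}\vec\omega\big)$, where $\nabla$ acts on the covector field $\vec\omega$ through \eqref{nablaom}. The plan is to avoid ever picking a representative of the covector $\nabla_{\partial/\partial x}\vec\omega$ inside $\mathcal{H}\times\mathcal{H}$; instead I exploit that $\nabla$ is the Levi-Civita connection of $\langle\,,\,\rangle_\eta$, so that raising the index and covariant differentiation commute. Concretely, I first claim
\[
\mathcal{P}_1(\vec\omega)=\eta\big(\nabla_{\partial/\partial x}\vec\omega\big)=\nabla_{\partial/\partial x}\big(\eta(\vec\omega)\big).
\]
To see this, set $Y=\eta(\vec\omega)$ and pair $\nabla_{\partial/\partial x}Y$ against an arbitrary tangent vector $X$: by the compatibility identity \eqref{compa} together with the duality \eqref{pairmec} one finds $\langle\nabla_{\partial/\partial x}Y,X\rangle_\eta=\frac{\partial}{\partial x}\langle\vec\omega,X\rangle-\langle\vec\omega,\nabla_{\partial/\partial x}X\rangle$, whose right-hand side is exactly $\langle\nabla_{\partial/\partial x}\vec\omega,X\rangle$ by the definition \eqref{nablaom}. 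Nondegeneracy of the pairing then gives the claim, and the whole problem reduces to evaluating the explicit connection formula \eqref{conn} on the single vector field $Y=\eta(\vec\omega)$ with $\partial_1=\partial/\partial x$ and $\partial_2=Y$.

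Next I would substitute the two components of $Y$ read off from \eqref{etaom}, together with the auxiliary directional derivatives $\partial_Y\zeta$ and $\partial_Y\ell$. The crucial simplification is that \eqref{sigma-hsigma} holds for an arbitrary $\vec\omega$ (it uses only \eqref{etaom}), giving $\partial_Y\zeta=\zeta'(\hat\omega_- -\omega_+)$, so that the first correction term in \eqref{conn} loses its denominator entirely,
\[
\frac{\partial_x\zeta\,\partial_Y\zeta}{\zeta'}=\zeta_x\,(\hat\omega_- -\omega_+),
\]
while $\partial_Y\ell=\big(a'(\omega+\hat\omega)_-\big)_+-\big(\hat a'(\omega+\hat\omega)_+\big)_-$ feeds the two residue-type corrections attached to $\infty$ and to the $\varphi_j$. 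After differentiating in $z$ and applying the projections $(\,\cdot\,)_\pm$, $(\,\cdot\,)_{\infty,\ge0}$ and $(\,\cdot\,)_{\varphi_j,\le-1}$, I would use $a=\ell+\zeta_-$ and $\hat a=\ell-\zeta_+$ from \eqref{zetaell2a} to convert every $\ell$- and $\zeta$-derivative back into $a$- and $\hat a$-derivatives, and then recognize each recurring block $f_z g_x-g_z f_x$ as the Poisson bracket $\{f,g\}$. Collecting terms should produce precisely the two components of \eqref{whiham}.

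The main obstacle is the bookkeeping of projections, which for $m\ge2$ is genuinely delicate: the splittings $(\,\cdot\,)_\pm$ are global (they sum over all the circles $\gamma_s$), whereas the local truncations $(\,\cdot\,)_{\infty,\ge0}$ and $(\,\cdot\,)_{\varphi_j,\le-1}$ are tied to single marked points, and—as emphasised after \eqref{1fun}—multiplication by a characteristic function $\mathbf{1}_{\gamma_j}$ does not commute with $(\,\cdot\,)_\pm$. I therefore expect the delicate step to be showing, via integration by parts in $z$, that the unprojected ``bulk'' contributions assemble into $\{\omega,a\}_\pm+\{\hat\omega,\hat a\}_\pm$ while all the projected pieces assemble into the single counter-terms $-\{\omega_-+\hat\omega_-,a\}$ in the first component and $+\{\omega_++\hat\omega_+,\hat a\}$ in the second, with no residual anomalous terms localized at $\infty$ or at the $\varphi_j$. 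Verifying that the $\ell$-type corrections at $\infty$ and at the poles exactly absorb the difference between $a,\hat a$ and $\ell$ is what completes the identification with \eqref{whiham}.
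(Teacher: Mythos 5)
Your key reduction is correct: setting $Y=\eta(\vec\omega)$, the chain
$\langle\nabla_{\partial_x}Y,X\rangle_\eta=\partial_x\langle Y,X\rangle_\eta-\langle Y,\nabla_{\partial_x}X\rangle_\eta
=\partial_x\langle\vec\omega,X\rangle-\langle\vec\omega,\nabla_{\partial_x}X\rangle
=\langle\nabla_{\partial_x}\vec\omega,X\rangle$
(using \eqref{compa}, \eqref{pairmec}, \eqref{nablaom}) together with nondegeneracy indeed gives $\mathcal{P}_1(\vec\omega)=\nabla_{\partial_x}\big(\eta(\vec\omega)\big)$, and your formulas $\partial_Y\zeta=\zeta'(\hat\omega_--\omega_+)$ and $\partial_Y\ell=\big(a'(\omega+\hat\omega)_-\big)_+-\big(\hat a'(\omega+\hat\omega)_+\big)_-$ are correct consequences of \eqref{etaom} alone. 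This is a genuinely different organization from the paper's: there the index is never raised first; instead $X=\eta(\nabla_{\partial_1}\vec\omega)$ is paired against an arbitrary test vector $\partial_2$, the right-hand side of \eqref{nablaom} is computed explicitly from \eqref{whipair} and \eqref{conn}, and $\partial_X\zeta$, $\partial_X\ell$ are then read off by comparison with the metric \eqref{whimetric}. What the paper's ``weak'' formulation buys is that under the contour integrals and residues one can integrate by parts and invoke \eqref{etalem1}--\eqref{etalem3}, so the denominators $1/\zeta'$ and $1/\ell'$ of \eqref{conn} disappear automatically; what your ``strong'' formulation buys is that the final duality/read-off step is avoided.

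However, the proposal stops exactly where the substance of the proposition lies. The only cancellation you actually exhibit is the easy one ($\partial_Y\zeta$ killing $1/\zeta'$). In your setup the remaining corrections are $\left(\frac{\ell_x\,\partial_Y\ell}{\ell'}\right)'_{\infty,\,\ge 0}$ and $\left(\frac{\ell_x\,\partial_Y\ell}{\ell'}\right)'_{\varphi_j,\,\le -1}$, and here $1/\ell'$ does \emph{not} cancel pointwise; eliminating it requires truncation identities that you never state, for instance
$\left(\frac{\partial_Y\ell}{\ell'}\right)_{\infty,\,\ge -n_0+1}=\left(\frac{\partial_Y a}{a'}\right)_{\infty,\,\ge -n_0+1}=\big((\omega+\hat\omega)_-\big)_{\infty,\,\ge -n_0+1}$
(the two quotients differ by $O(z^{-n_0})$ at $\infty$), combined with the degree bound $\deg_\infty \ell_x\le n_0-2$, which yield
$\left(\frac{\ell_x\,\partial_Y\ell}{\ell'}\right)_{\infty,\,\ge 0}=\big(\ell_x(\omega+\hat\omega)_-\big)_{\infty,\,\ge 0}$,
together with the analogous statement at each $\varphi_j$ obtained from \eqref{etalem3}. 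Likewise, $\partial_x\big(\partial_Y a\big)$ and $\partial_x\big(\partial_Y\hat a\big)$ must be expanded (using that all projections, written with fixed contours, commute with $\partial_x$) and regrouped with these corrections into the brackets of \eqref{whiham}; ``collecting terms should produce'' and ``I expect the delicate step'' are placeholders for precisely the computation that constitutes the proof. So the strategy is sound and, with the identities above, completable, but as submitted it is an incomplete sketch rather than a proof of \eqref{whiham}.
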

\begin{proof}
	Clearly, one has
	$$
	\partial_1\left\langle\vec{\omega}, \partial_2\right\rangle=\dsum_{i=1}^{m}\oint_{\gamma_{i}}\left(\partial_1 \omega(z) \partial_2 a(z)+\omega(z) \partial_1 \partial_2 a(z)+\partial_1 \hat{\omega}(z) \partial_2 \hat{a}(z)+\hat{\omega}(z) \partial_1 \partial_2 \hat{a}(z)\right) d z.
	$$
According to Lemma~\ref{nabla}, we have
		\begin{align*}
			& \left\langle\vec{\omega}, \nabla_{\partial_1} \partial_2\right\rangle\\
			= &\frac{1}{2\pi\mathrm{i}}\dsum_{s=1}^{m}\oint_{\gamma_{s}}\omega(z) \bigg(\partial_1 \partial_2 a(z)-\left(\frac{\partial_1 \zeta(z) \partial_2 \zeta(z)}{\zeta^{\prime}(z)}\right)_{-}^{\prime}-\left(\frac{\partial_1 \ell(z) \partial_2 \ell(z)}{\ell^{\prime}(z)}\right)_{\infty,\,\ge 0}^{\prime}
\\
&-\sum_{i=1}^{m}\left(\frac{\partial_1 \ell(z) \partial_2 \ell(z)}{\ell^{\prime}(z)}\right)_{\varphi_{i},\,\le -1}^{\prime}\bigg) d z \\
			& +\frac{1}{2\pi\mathrm{i}}\dsum_{s=1}^{m}\oint_{\gamma_{s}}\hat{\omega}(z)\bigg( \partial_1 \partial_2 \hat{a}(z)+\left(\frac{\partial_1 \zeta(z) \partial_2 \zeta(z)}{\zeta^{\prime}(z)}\right)_{-}^{\prime}-\left(\frac{\partial_1 \ell(z) \partial_2 \ell(z)}{\ell^{\prime}(z)}\right)_{\infty,\,\ge 0}^{\prime}
\\
&-\sum_{i=1}^{m}\left(\frac{\partial_1 \ell(z) \partial_2 \ell(z)}{\ell^{\prime}(z)}\right)_{\varphi_{i},\,\le -1}^{\prime}\bigg) d z \\
			= & \frac{1}{2\pi\mathrm{i}}\dsum_{s=1}^{m}\oint_{\gamma_{s}}\bigg(\omega(z) \partial_1 \partial_2 a(z)+\hat{\omega}(z) \partial_1 \partial_2 \hat{a}(z)+\omega^{\prime}(z)_{+}\left(\frac{\partial_1 \zeta(z) \partial_2 \zeta(z)}{\zeta^{\prime}(z)}\right)
\\
&-\hat{\omega}^{\prime}(z)_{-}\left(\frac{\partial_1 \zeta(z) \partial_2 \zeta(z)}{\zeta^{\prime}(z)}\right)\bigg) d z \\
			& -\operatorname{Res}_{z=\infty}\left(\omega^{\prime}(z)_{-}\left(\frac{\partial_1 \ell(z) \partial_2 \ell(z)}{a^{\prime}(z)}\right)+\hat{\omega}^{\prime}(z)_{-}\left(\frac{\partial_1 \ell(z) \partial_2 \ell(z)}{a^{\prime}(z)}\right)\right) d z \\
			& +\sum_{i=1}^{m}\operatorname{Res}_{z=\varphi_{i}}\left(\omega^{\prime}(z)_{+}\left(\frac{\partial_1 \ell(z) \partial_2 \ell(z)}{\hat{a}^{\prime}(z)}\right)+\hat{\omega}^{\prime}(z)_{+}\left(\frac{\partial_1 \ell(z) \partial_2 \ell(z)}{\hat{a}^{\prime}(z)}\right)\right) d z .
		\end{align*}
Denote $X=\eta\left(\nabla_{\p_{1}}\vec{\omega}\right)$, then by using \eqref{pairmec} and \eqref{nablaom} we have
	\begin{align*}
	\left\langle X,  \partial_2\right\rangle=&
	\partial_1\left\langle\vec{\omega}, \partial_2\right\rangle-\left\langle\vec{\omega}, \nabla_{\partial_1} \partial_2\right\rangle \\
		= &\frac{1}{2\pi\mathrm{i}}\dsum_{s=1}^{m}\oint_{\gamma_{s}}\left(\partial_1 \omega(z)-\omega^{\prime}(z)_{+} \frac{\partial_1 \zeta(z)}{\zeta^{\prime}(z)}+\hat{\omega}^{\prime}(z)_{-} \frac{\partial_1 \zeta(z)}{\zeta^{\prime}(z)}\right) \partial_2 a(z) dz\\
		&+\frac{1}{2\pi\mathrm{i}}\dsum_{s=1}^{m}\oint_{\gamma_{s}} \left(\partial_1 \hat{\omega}(z)+\omega^{\prime}(z)_{+} \frac{\partial_1 \zeta(z)}{\zeta^{\prime}(z)}-\hat{\omega}^{\prime}(z)_{-} \frac{\partial_1 \zeta(z)}{\zeta^{\prime}(z)}\right) \partial_2 \hat{a}(z) d z\\
		&+\operatorname{Res}_{z=\infty}\left(\left(\omega^{\prime}(z)_{-}+\hat{\omega}^{\prime}(z)_{-}\right) \frac{\partial_1 \ell(z)}{a^{\prime}(z)}\right) \partial_2 a(z) d z \\
		&-\sum_{i=1}^{m}\operatorname{Res}_{z=\varphi_{i}}\left(\left(\omega^{\prime}(z)_{+}+\hat{\omega}^{\prime}(z)_{+}\right) \frac{\partial_1 \ell(z)}{\hat{a}^{\prime}(z)}\right) \partial_2 \hat{a}(z) d z \\
		= & \frac{1}{2\pi\mathrm{i}}\dsum_{s=1}^{m}\oint_{\gamma_{s}}\frac{((\p_{1}\omega_{+}-\p_{1}\hat{\omega}_{-})\zeta'-(\omega_{+}'-\hat{\omega}_{-}')\p_{1}\zeta)\p_{2}\zeta}{\zeta'}dz\\
		&-\res_{z=\infty}\frac{((\p_{1}\omega_{-}+\p_{1}\hat{\omega}_{-})\ell'-(\omega_{-}'+\hat{\omega}_{-}')\p_{1}\ell)\p_{2}\ell}{\ell'}dz\\
		&-\dsum_{i=1}^{m}\res_{z=\varphi_{i}}\frac{((\omega_{-}'+\hat{\omega}_{-}')\p_{1}\ell-(\p_{1}\omega_{-}+ \p_{1}\hat{\omega}_{-})\ell')\p_{2}\ell}{\ell'}dz,
	\end{align*}
which implies that
\begin{align*}
\p_{X}\zeta(z)=&\p_{1}\zeta(z)(\omega'(z)_{+}-\hat{\omega}'(z)_{-})-\zeta'(z)(\p_{1}\omega(z)_{+}-\p_{1}\hat{\omega}(z)_{-}), \\
\p_{X}\ell(z)=&((\p_{1}\omega(z)_{-}+\p_{1}\hat{\omega}(z)_{-})a'(z)-(\omega'(z)_{-}+\hat{\omega}'(z)_{-})\p_{1}a(z))_{\infty,\,\ge 0}\\		&+\dsum_{i=1}^{m}((\omega'(z)_{-}+\hat{\omega}'(z)_{-})\p_{1}\hat{a}(z)-(\p_{1}\omega(z)_{-}+\p_{1}\hat{\omega}(z)_{-})\hat{a}'(z))_{\varphi_{i},\,\le -1}.
\end{align*}
In particular, let us take $\p_{1}=\p/\p{x}$, then we obtain
	\begin{align*}
		&\p_{X}\zeta(z)=\{\omega(z)_{+}-\hat{\omega}(z)_{-},\zeta(z)\},\\
		&\p_{X}\ell(z)=-\{\omega(z)_{-}+\hat{\omega}(z)_{-},a(z)\}_{+}+\{\omega(z)_{+}+\hat{\omega}(z)_{+},\hat{a}(z)\}_{-}.
	\end{align*}
	Hence
	\begin{align*}
		&	\p_{X}a(z)=\{\omega(z),a(z)\}_{-}+\{\hat{\omega}(z),\hat{a}(z)\}_{-}-\{\omega(z)_{-}+\hat{\omega}(z)_{-},a(z)\},\\
		&\p_{X}\hat{a}(z)=-\{\omega(z),a(z)\}_{+}-\{\hat{(z)\omega(z)},\hat{a}(z)\}_{+}+\{\omega(z)_{+}+\hat{\omega}(z)_{+},\hat{a}(z)\}.
	\end{align*}
	The proposition is proved.
\end{proof}

Let $\Sigma$ be the subset of $\mathcal{M}$ on which the map $g$ defined in  \eqref{gmap} is not invertible. We introduce
another Hamiltonian operator $\mathcal{P}_2$ from the space of covector fields to that of vector fields on $\mathcal{M} \setminus \Sigma$ as
\begin{equation}\label{}
\mathcal{P}_2(\vec{\omega})=g\left(\nabla_{\frac{\p}{\p x}}\vec{\omega}\right).
\end{equation}
Similar as above, after a lengthy but straightforward calculation, the following result is obtained.
\begin{prop}
For an arbitrary covector field $\vec{\omega}=(\omega(z),\hat{\omega}(z))$ on  $\mathcal{M} \setminus \Sigma$, the Hamiltonian operator ${\mathcal{P}_2}$ satisfies
	\begin{align}
	{\mathcal{P}_2}(\vec{\omega})=&\Big((\{\omega,a\}_{-}+\{\hat{\omega},\hat{a}\}_{-})a-\{(\omega a+\hat{\omega}\hat{a})_{-},a\}-\rho a',\nn\\
		&-(\{\omega,a\}_{+}+\{\hat{\omega},\hat{a}\}_{+})\hat{a}+\{(\omega a+\hat{\omega}\hat{a})_{+},\hat{a}\} -\rho \hat{a}' \Big), \label{intersec}
	\end{align}
where
\[
\rho= \frac{1}{2 n_{0}\pi\mathrm{i}}\sum_{i=1}^{m}\oint_{\gamma_{i}}(\{\omega(z),a(z)\}+\{\hat{\omega}(z),\hat{a}(z)\})dz.
\]
\end{prop}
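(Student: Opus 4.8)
The plan is to follow the proof of Proposition~\ref{P1}, using the single extra observation that the map $g$ coincides with $C_E$. Indeed, by the definition \eqref{gmap} and Corollary~\ref{Cxieta} one has $g(\vec{\mu})=C_E(\vec{\mu})$ for every covector $\vec{\mu}$, so that
\[
\mathcal{P}_2(\vec{\omega})=g\Big(\nabla_{\frac{\partial}{\partial x}}\vec{\omega}\Big)=C_E\Big(\nabla_{\frac{\partial}{\partial x}}\vec{\omega}\Big).
\]
Thus I would evaluate the explicit formula \eqref{whiprod} for $C_{\vec{\xi}}$ at $\vec{\xi}=\partial_E\vec{a}=\big(a-\tfrac{z}{n_0}a',\,\hat{a}-\tfrac{z}{n_0}\hat{a}'\big)$ on the covector $\nabla_{\frac{\partial}{\partial x}}\vec{\omega}$. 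This is exactly the computation that produced the closed form of $g$ in the proof of the preceding proposition, so the whole task reduces to inserting an explicit representative of $\nabla_{\frac{\partial}{\partial x}}\vec{\omega}$.

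To obtain such a representative I would use that $\nabla_{\frac{\partial}{\partial x}}\vec{\omega}=\eta^{-1}\big(\mathcal{P}_1(\vec{\omega})\big)$: first read off the tangent vector $\mathcal{P}_1(\vec{\omega})$ from \eqref{whiham}, then invert $\eta$ by Lemma~\ref{whiinveta} to recover $(\mu,\hat{\mu})\in\mathcal{H}\times\mathcal{H}$. Feeding this into \eqref{whiprod} and splitting $\xi=a-\tfrac{z}{n_0}a'$, $\hat{\xi}=\hat{a}-\tfrac{z}{n_0}\hat{a}'$ into a part built from $a,\hat{a}$ and a part built from $\tfrac{z}{n_0}a',\tfrac{z}{n_0}\hat{a}'$, I expect the first part to reassemble, after the truncations are resolved, into the local terms $(\{\omega,a\}_{-}+\{\hat{\omega},\hat{a}\}_{-})a-\{(\omega a+\hat{\omega}\hat{a})_{-},a\}$ and its $\hat{a}$-analogue. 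The second part is where the nonlocality enters: applying the identity
\[
z\,\omega_{-}-(z\,\omega)_{-}=\frac{1}{2\pi\mathrm{i}}\sum_{j=1}^{m}\oint_{\gamma_{j}}\omega\,dz,
\]
already used to compute $g$, collapses the factor $z$ into the single scalar $\rho=\frac{1}{2n_{0}\pi\mathrm{i}}\sum_{i=1}^{m}\oint_{\gamma_{i}}(\{\omega,a\}+\{\hat{\omega},\hat{a}\})\,dz$ and yields precisely the terms $-\rho a'$ and $-\rho\hat{a}'$.

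The main obstacle is the bookkeeping of the Laurent truncations rather than any conceptual point. One must carefully track how the $(\,\cdot\,)_{\pm}$ splittings interact with the $(\,\cdot\,)_{\infty,\,\ge 0}$ and $(\,\cdot\,)_{\varphi_{j},\,\le -1}$ projections once the factors $\xi,\hat{\xi}$ — which carry the pole of $a$ at $\infty$ and of $\hat{a}$ at each $\varphi_{j}$ — are inserted into the products of \eqref{whiprod}, and one must justify the conversion of the residues at $\infty$ and at the $\varphi_{j}$ into contour integrals over $\gamma$ exactly as in Proposition~\ref{P1}. The length of the verification comes entirely from checking that all the projected pieces, together with the nonlocal $\rho$-term, recombine into the compact Poisson-bracket form \eqref{intersec}. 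As a consistency check one can pair \eqref{intersec} with an arbitrary covector $\vec{\omega}_{1}$ and compare the result with the symmetric expression \eqref{interom12} for the intersection form evaluated at $\big(\vec{\omega}_{1},\nabla_{\frac{\partial}{\partial x}}\vec{\omega}\big)_{g}$.
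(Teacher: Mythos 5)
Your proposal is correct and is essentially the paper's own argument: the paper, too, obtains \eqref{intersec} by the ``lengthy but straightforward'' computation of applying the closed form of $g=C_E$ (derived from \eqref{whiprod} with $\vec{\xi}=\partial_E\vec{a}=\bigl(a-\tfrac{z}{n_0}a',\,\hat{a}-\tfrac{z}{n_0}\hat{a}'\bigr)$ together with the identity $z\omega_{-}-(z\omega)_{-}=\frac{1}{2\pi\mathrm{i}}\sum_{j}\oint_{\gamma_{j}}\omega\,dz$, exactly as in the symmetry proof for the intersection form) to a representative of $\nabla_{\partial/\partial x}\vec{\omega}$, just as in the proof of Proposition~\ref{P1}. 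Your particular way of producing that representative, namely applying Lemma~\ref{whiinveta} to $\mathcal{P}_1(\vec{\omega})$, is a harmless variant and is legitimate because $C_E$ annihilates $\ker\eta$ by Corollary~\ref{Cxieta} (equivalently, $\mathcal{P}_2(\vec{\omega})=E\circ\mathcal{P}_1(\vec{\omega})$), so the result does not depend on the representative chosen.
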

%

\subsection{Densities of Hamiltonian functionals}
Now let us introduce a class of densities of Hamiltonian functionals, which are supposed to fulfill conditions of the form \eqref{princon4}--\eqref{princon2}.
More exactly, we introduce a class of analytic functions on $\mathcal{M}$ as:
\begin{align}
	\theta_{t_{i,s},p}=&\frac{1}{2 \pi \mathrm{i}} \frac{1}{(p+1) !} \frac{d_{i}}{s+d_{i}} \oint_{\gamma_{i}} \zeta(z)^{\frac{s}{d_{i}}} \left(a(z)^{p+1} -\hat{a}(z)^{p+1}\right) d z,\quad s\ne -d_{i};
\label{theta1}
\\
	\theta_{t_{i,-d_{i}},p}=&\frac{d_{i}}{2 \pi \mathrm{i}} \oint_{\gamma_{i}} \frac{a(z)^p}{p !}\log \frac{\zeta(z)^{\frac{1}{d_{i}}}}{(z-\varphi_{i})}dz+d_{i}\res_{z=\infty} \frac{a(z)^{p}}{p!}\left(\log \frac{a(z)^{\frac{1}{n_{0}}}}{z-\varphi_{i}}-\frac{c_{p}}{n_{0}}\right) dz
\nn\\
& -\frac{d_{i}}{2 \pi \mathrm{i}} \dsum_{i'\ne i}\oint_{\gamma_{i'}}\frac{a(z)^{p}}{p!}\log(z-\varphi_{i})dz;
\\
\theta_{h_{0,j},p}=& -\frac{\Gamma\left(1-\frac{j}{n_{0}}\right)}{\Gamma\left(2+p-\frac{j}{n_{0}}\right)} \res_{z=\infty} a(z)^{1+p-\frac{j}{n_{0}}} d z;
\\
\theta_{h_{k,r},p}=&\frac{\Gamma\left(1-\frac{r}{n_{k}}\right)}{\Gamma\left(2+p-\frac{r}{n_{k}}\right)} \res_{z=\varphi_{k}} \hat{a}(z)^{1+p-\frac{r}{n_{k}}} d z,\quad r\ne n_{k};
\\
\theta_{h_{k,n_{k}},p}=&\frac{n_{k}}{2\pi\mathrm{i}}\oint_{\gamma_{k}} \frac{\hat{a}(z)^{p}}{p!}\log\frac{\zeta(z)^{\frac{1}{d_{k}}}}{z-\varphi_{k}}dz
\nn\\
& +n_{k} \res_{z=\varphi_{k}}\frac{\hat{a}(z)^{p}}{p!}\left(\log \hat{a}(z)^{\frac{1}{n_{k}}}(z-\varphi_{k})-\frac{c_{p}}{n_{k}}\right)dz -\frac{n_{k}}{d_{k}}\theta_{t_{k,-d_{k}},p}, \label{theta5}
\end{align}
where
	$$
c_0=0, \quad c_p=\sum_{s=1}^p \frac{1}{s} .
	$$

\begin{thm}\label{mainthm2}
	The functions \eqref{theta1}--\eqref{theta5} satisfy an analogue of the system of equations  \eqref{princon4}--\eqref{princon2}. Namely, for any $u,v\in \mathbf{u}$ and $p\in\mathbb{Z}_{\ge1}$ it holds that:
\begin{align}\label{princon1u}
 \theta_{u,0}=&\sum_{v\in\mathbf{u}}\left\langle\frac{\p}{\p u},\frac{\p}{\p v}\right\rangle_\eta v, \\
\nabla_{\p /\p v}d \theta_{u,p}=&\eta^{-1}\left( C_{\p/\p v}(d \theta_{u,p-1}) \right),  \label{princon2u}
\\
	\operatorname{Lie}_E \left(\frac{\partial \theta_{u, p} }{\partial v}\right) =& \left(p + \mu_u + \mu_v\right) \frac{\partial \theta_{u, p} }{\partial v} +\sum_{w\in\mathbf{u}} \left(R\right)_u^w \frac{\partial \theta_{w, p-1}}{\partial v} . \label{princon3u}
\end{align}
where the spectrum $\mu$ and the nilpotent matrix \(R\) of \(\mathcal{M}\) are given by
	\begin{equation*}
		\mu_u = \begin{cases}
			\frac{s}{d_{i}} + \frac{1}{2}, &  u = t_{i,s}; \\
			\frac{1}{2} - \frac{j}{n_{0}}, &  u = h_{0,j}; \\
			\frac{1}{2} - \frac{r}{n_{k}}, &  u = h_{k,r},
		\end{cases}
	\end{equation*}
	\begin{equation*}
		(R)_{v}^{u}=\left(R_1\right)_v^u= \begin{cases}\delta_{i,j}-\frac{d_{i}}{n_{0}}, & u\in\cup_{j=1}^{m}\{t_{j,0},h_{j,0}\},\  v=t_{i,-d_{i}};\\
			\frac{n_{i}}{n_{0}}+1, & u=h_{i,0},\  v=h_{i,n_{i}};	 \\
			\frac{n_{i}}{n_{0}}-\frac{n_{i}}{d_{i}}, & u=t_{i,0},\  v=h_{i,n_{i}};	 \\
			\frac{n_{i}}{n_{0}}, & u\in\cup_{j\ne i}\{t_{j,0},h_{j,0}\},\  v=h_{i,n_{i}};	\\
			0, & \text { else. }\end{cases}
	\end{equation*}
\end{thm}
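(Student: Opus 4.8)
The plan is to verify the three displayed identities \eqref{princon1u}, \eqref{princon2u} and \eqref{princon3u} separately, and in each case to treat the non-resonant densities \eqref{theta1}, $\theta_{h_{0,j},p}$, and $\theta_{h_{k,r},p}$ with $r\ne n_k$ by direct residue manipulations, while isolating the two resonant families $\theta_{t_{i,-d_i},p}$ and $\theta_{h_{k,n_k},p}$ (those carrying logarithms) for separate, more delicate treatment, exactly as the $m=1$ computation of \cite{ma2021infinite} is adapted to the multi-circle setting here.

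For the normalization \eqref{princon1u} I would evaluate each $\theta_{u,0}$ in closed form. For $u=t_{i,s}$ with $s\ne-d_i$ one has $\theta_{t_{i,s},0}=\frac{1}{2\pi\mathrm{i}}\frac{d_i}{s+d_i}\oint_{\gamma_i}\zeta^{1+s/d_i}\,dz$; writing $\zeta|_{\gamma_i}=w_i^{d_i}$, substituting the expansion \eqref{texp}, and integrating by parts collapse this to the single coordinate $-d_i\,t_{i,-d_i-s}$, which is precisely $\sum_{v}\langle\partial_{t_{i,s}},\partial_v\rangle_\eta v$ by \eqref{etaflat1}. The cases $u=h_{0,j}$ and $u=h_{k,r}$ are handled identically using \eqref{hexp}, \eqref{hhexp} and the remaining pairings of Proposition \ref{etaflat}, and the logarithmic densities still reduce to a single flat coordinate after taking residues.

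The core is the recursion \eqref{princon2u}. First I would compute the gradient covector $d\theta_{u,p}\in(\mathcal{H}\times\mathcal{H})/\ker\eta$ by varying $a,\hat a$ in the defining integrals and reading off the two components against the pairing \eqref{whipair}; for instance $d\theta_{t_{i,s},p}$ has components supported on $\gamma_i$ built from $\zeta^{s/d_i-1}(a^{p+1}-\hat a^{p+1})$ together with $\zeta^{s/d_i}a^p$ and $\zeta^{s/d_i}\hat a^p$. Both sides of \eqref{princon2u} are then covectors: the left side is evaluated through the covector covariant derivative \eqref{nablaom} and the Levi-Civita formula of Lemma \ref{nabla}, while the right side equals $\eta^{-1}(\partial_v\circ\mathrm{grad}\,\theta_{u,p-1})$ by Corollary \ref{Cxieta} and is computed from the explicit product \eqref{whiprod}. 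After splitting via $f=f_++f_-$ and integrating by parts, both sides reduce to the same residue expression; the $\Gamma$-function ratios in \eqref{theta1}--\eqref{theta5} are precisely the normalizations that make the power of $a$ (respectively $\hat a$) drop by one consistently, so the recursion closes. This step is the most laborious, and for the resonant densities it is also the most delicate, because there the gradients carry logarithmic pieces and the characteristic functions $\mathbf{1}_{\gamma_j}$ must be transported through the projections — recall the subtle fact that $(f\mathbf{1}_{\gamma_j})_\pm$ need not vanish off $\gamma_j$.

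Finally, for the homogeneity relation \eqref{princon3u} I would use the grading from the proof of Proposition \ref{hom}, namely $\deg z=1/n_0$ with $\zeta,\ell$ homogeneous of degree $1$, under which $E=\sum_{u\in\mathbf u}(\deg u)\,u\,\partial_u$ by \eqref{E} and \eqref{degree}--\eqref{degree3}, and $\mu_u=\frac12+\frac1{n_0}-\deg u$ for every $u$. For a quasi-homogeneous density $\theta_{u,p}$ of degree $p+\mu_u+\frac12+\frac1{n_0}$ one then obtains $\operatorname{Lie}_E(\partial_v\theta_{u,p})=(p+\mu_u+\mu_v)\partial_v\theta_{u,p}$ automatically, with no correction, which settles all non-resonant $u$ by a degree count of the integrands. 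The $R$-matrix contributions arise only for $u\in\{t_{i,-d_i},h_{i,n_i}\}$: there the logarithms are not homogeneous, and computing $\partial_E$ of $\log a^{1/n_0}$, $\log(z-\varphi_i)$ and $\log\hat a^{1/n_k}$ produces additive anomalies which, after matching against the lower densities, assemble into $\sum_{w}(R)^w_u\,\partial_v\theta_{w,p-1}$ with the stated entries. Tracking these logarithmic anomalies and confirming the precise coefficients $\delta_{i,j}-d_i/n_0$, $n_i/n_0+1$, and so on, is the main obstacle of the proof.
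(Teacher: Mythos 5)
Your handling of \eqref{princon1u} and \eqref{princon3u} coincides in substance with the paper's proof: the normalization is read off from Proposition~\ref{etaflat}, and the quasi-homogeneity is obtained by extending the Euler field to $\mathcal{E}=E+\frac{1}{n_{0}}z\frac{\p}{\p z}$, under which $\zeta$, $a$, $\hat a$ are homogeneous of degree $1$ and the $R$-matrix entries arise exactly as the additive anomalies of the logarithmic densities; your bookkeeping $\mu_u=\frac12+\frac1{n_0}-\deg u$ is correct. Likewise, your direct verification of \eqref{princon2u} for the non-resonant densities is sound: it amounts to re-deriving, case by case, the paper's key lemma, which states that any functional $F_{i,p}=\frac{1}{2\pi\mathrm{i}}\oint_{\gamma_i}Q_{p+1}(a(z),\hat a(z))\,dz$ with $(\p_y+\p_{\hat y})Q_{p+1}=Q_p$ automatically satisfies $\eta(\nabla_{\p}dF_{i,p})=C_{\p}(dF_{i,p-1})$ for every vector field $\p$.

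The genuine gap is the resonant case of \eqref{princon2u}, which you flag as ``delicate'' but never resolve. The densities $\theta_{t_{i,-d_i},p}$ and $\theta_{h_{k,n_k},p}$ are not of the above form as written: they mix contour integrals over several circles with residues at $\infty$ or $\varphi_k$, and integrands such as $\log(z-\varphi_i)$ are not functions of $(a,\hat a)$, so neither your splitting $f=f_++f_-$ nor the $\Gamma$-ratio bookkeeping (which only governs the non-resonant powers) applies there. The paper's mechanism, absent from your plan, has two ingredients. First, the constants $c_p$ are tuned precisely so that the logarithmic integrands obey the same derivative recursion, e.g.
\begin{equation*}
(\p_y+\p_{\hat y})\left[\frac{y^{p+1}}{(p+1)!}\left(\frac{1}{n_0}\log y-\frac{c_{p+1}}{n_0}\right)\right]
=\frac{y^{p}}{p!}\left(\frac{1}{n_0}\log y-\frac{c_{p}}{n_0}\right),
\end{equation*}
while $\log(y-\hat y)$ is annihilated by $\p_y+\p_{\hat y}$. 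Second, one restricts to the sublocus $\mathcal{M}'\subset\mathcal{M}$ on which $a^{1/n_0}$ continues analytically to all of $\gamma_1\cup\cdots\cup\gamma_m$ with winding number $1$ around $\gamma_i$ and $0$ around the other circles; on $\mathcal{M}'$ the residue at $\infty$ folds into the contour integrals and the resonant density becomes a finite linear combination of single-valued $Q$-type integrals, so the lemma applies, and the identity \eqref{princon2u} then propagates from $\mathcal{M}'$ to all of $\mathcal{M}$ by analyticity. Without these two ideas (or a substitute for them), your assertion that ``both sides reduce to the same residue expression'' is unsupported exactly at the point where the multi-circle setting $m\ge2$ departs from the $m=1$ computation of \cite{ma2021infinite}.
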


To prove this theorem, the following lemma is useful.
\begin{lem}
Assume \( Q_{p}(y, \hat{y})\) with $ p \in \mathbb{Z}_{\ge0} $ to be a series of smooth functions of \( y \) and \( \hat{y} \), satisfying
\[
\left(\frac{\partial }{\partial y} + \frac{\partial}{\partial \hat{y}}\right) Q_{p}(y, \hat{y}) = Q_{p-1}(y, \hat{y}).
\]
Let
\begin{equation}\label{Fdef}
	F_{i,p} = \frac{1}{2\pi\mathrm{i}} \oint_{\gamma_{i}} Q_{p+1}(a(z), \hat{a}(z)) dz,
\end{equation}
then for any vector field $\p$ on $\mathcal{M}$ it holds that
\begin{equation}\label{Fcond}
	\eta \left( \nabla_{\partial} d F_{i,p} \right) = C_{\partial}(d F_{i,p-1}), \quad p \in \mathbb{Z}_{\ge0}.
\end{equation}
\end{lem}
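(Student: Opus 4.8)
The plan is to make the covector $dF_{i,p}$ explicit, reduce the desired identity to a Hessian-type recursion, and then verify that recursion by a residue computation powered by the hypothesis on the $Q_p$. First I would identify $dF_{i,p}$ as an element of $(\mathcal H\times\mathcal H)/\ker\eta$. Writing $Q_{p+1,y}$ and $Q_{p+1,\hat y}$ for the partial derivatives of $Q_{p+1}$ evaluated at $(a(z),\hat a(z))$, the chain rule gives, for any vector field $X$,
\[
\partial_X F_{i,p}=\frac{1}{2\pi\mathrm i}\oint_{\gamma_i}\bigl(Q_{p+1,y}\,\partial_X a+Q_{p+1,\hat y}\,\partial_X\hat a\bigr)\,dz .
\]
Comparing with the pairing \eqref{whipair} and using $\mathbf 1_{\gamma_i}|_{\gamma_j}=\delta_{ij}$ to promote the single contour $\gamma_i$ to the full sum over $\gamma_1,\dots,\gamma_m$, I read off the representative $dF_{i,p}=\bigl(Q_{p+1,y}\mathbf 1_{\gamma_i},\,Q_{p+1,\hat y}\mathbf 1_{\gamma_i}\bigr)$, and likewise $dF_{i,p-1}=\bigl(Q_{p,y}\mathbf 1_{\gamma_i},\,Q_{p,\hat y}\mathbf 1_{\gamma_i}\bigr)$.

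The algebraic engine of the proof is the set of consequences of the hypothesis. Evaluating $(\partial_y+\partial_{\hat y})Q_{p+1}=Q_p$ at $(a,\hat a)$ yields $Q_{p+1,y}+Q_{p+1,\hat y}=Q_p$, and differentiating once more gives $Q_{p+1,yy}+Q_{p+1,y\hat y}=Q_{p,y}$ and $Q_{p+1,y\hat y}+Q_{p+1,\hat y\hat y}=Q_{p,\hat y}$; the chain rule supplies $\partial_z\bigl(Q_{p+1}(a,\hat a)\bigr)=Q_{p+1,y}a'+Q_{p+1,\hat y}\hat a'$ together with $\partial_z\bigl(Q_{p+1,y}(a,\hat a)\bigr)=Q_{p+1,yy}a'+Q_{p+1,y\hat y}\hat a'$. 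In particular the sum of the two components of $dF_{i,p}$ equals $Q_p\mathbf 1_{\gamma_i}$, which is exactly the data that reappears at order $p-1$; this is the reason the differential constraint on the $Q_p$ is imposed.

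I would then reduce \eqref{Fcond} to a single scalar identity by pairing both tangent vectors against an arbitrary $\partial_2$ and invoking nondegeneracy of $\eta$. For the left-hand side, Lemma \ref{pairtancotan} together with the covector covariant derivative \eqref{nablaom} gives $\langle\eta(\nabla_{\partial}dF_{i,p}),\partial_2\rangle_\eta=\partial\,\partial_2F_{i,p}-(\nabla_{\partial}\partial_2)F_{i,p}$. For the right-hand side, Corollary \ref{Cxieta}, the invariance of the product (Proposition \ref{tanprod}) and Lemma \ref{pairtancotan} give $\langle C_{\partial}(dF_{i,p-1}),\partial_2\rangle_\eta=(\partial\circ\partial_2)F_{i,p-1}$. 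Thus, setting $\partial_1=\partial$, it suffices to prove
\[
\partial_1\partial_2F_{i,p}-(\nabla_{\partial_1}\partial_2)F_{i,p}=(\partial_1\circ\partial_2)F_{i,p-1}
\]
for all $\partial_1,\partial_2$, an expression manifestly symmetric in $\partial_1,\partial_2$, in agreement with commutativity of $\circ$.

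Finally I would evaluate both sides of this identity. On the left, expanding $\partial_1\partial_2F_{i,p}$ by the chain rule and subtracting the connection term via \eqref{compfor6}--\eqref{compfor7} cancels the genuine second derivatives $\partial_1\partial_2 a,\partial_1\partial_2\hat a$; the relations of the second paragraph then turn the Hessian of $Q_{p+1}$ into $Q_{p,y}\,\partial_1a\,\partial_2a+Q_{p,\hat y}\,\partial_1\hat a\,\partial_2\hat a-Q_{p+1,y\hat y}\,\partial_1\zeta\,\partial_2\zeta$, leaving in addition the projection terms $Q_{p+1,y}(\partial_1\zeta\,\partial_2\zeta/\zeta')'_{-}-Q_{p+1,\hat y}(\partial_1\zeta\,\partial_2\zeta/\zeta')'_{+}$ and the $\ell$-projections at $\infty$ and at the $\varphi_j$. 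On the right, I would compute $\partial_1\circ\partial_2$ from \eqref{whimul}--\eqref{whiprod} and use the collapses $Q_{p,y}\mathbf 1_{\gamma_i}a'+Q_{p,\hat y}\mathbf 1_{\gamma_i}\hat a'=\partial_z(Q_p)\mathbf 1_{\gamma_i}$ and $Q_{p,y}\mathbf 1_{\gamma_i}\partial_\nu a+Q_{p,\hat y}\mathbf 1_{\gamma_i}\partial_\nu\hat a=(\partial_\nu Q_p)\mathbf 1_{\gamma_i}$, after which the two sides are reconciled by integrating by parts over $\gamma_i$. I expect the main obstacle to be matching the $\zeta$-projection terms: the $\ell$-contributions form a standard Landau--Ginzburg structure at $\infty$ and the $\varphi_j$ and cause no trouble, but the $\zeta$-part involves the genuinely two-sided projections $(\cdot)_{\pm}$ supported near $\gamma_i$, and the mixed derivative $Q_{p+1,y\hat y}$ — which is not itself a first derivative of $Q_p$ — must be absorbed through integration by parts on $\gamma_i$. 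Throughout I must respect the subtlety noted after \eqref{1fun}, namely that $(f\mathbf 1_{\gamma_i})_{\pm}$ need not vanish on the circles $\gamma_j$ with $j\neq i$, so none of the off-diagonal contributions may be discarded prematurely.
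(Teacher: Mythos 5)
Your proposal is correct in outline but takes a genuinely different, and substantially longer, route than the paper, and it stops just short of the decisive computation. You share with the paper the starting point: the representative $dF_{i,p}=\bigl(Q_{p+1,y}\mathbf{1}_{\gamma_{i}},\,Q_{p+1,\hat y}\mathbf{1}_{\gamma_{i}}\bigr)$ and the collapse $Q_{p+1,y}+Q_{p+1,\hat y}=Q_{p}$. From there the paper never scalarizes: it evaluates both sides of \eqref{Fcond} directly as tangent vectors. The left side is obtained by repeating the computation of Proposition~\ref{P1} with $\p/\p x$ replaced by a general vector field $\p$ (so the bracket becomes $\{f,g\}_{\p}:=f'\,\p g-g'\,\p f$); the key point is that the sum $\{Q_{p+1,y}\mathbf{1}_{\gamma_{i}},a\}_{\p}+\{Q_{p+1,\hat y}\mathbf{1}_{\gamma_{i}},\hat a\}_{\p}$ vanishes identically (this is the content of \eqref{eq-Qaa}), so the mixed derivative $Q_{p+1,y\hat y}$ --- your declared ``main obstacle'' --- cancels before it ever appears, giving $\eta(\nabla_{\p}dF_{i,p})=\bigl(-\{(Q_{p}\mathbf{1}_{\gamma_{i}})_{-},a\}_{\p},\,\{(Q_{p}\mathbf{1}_{\gamma_{i}})_{+},\hat a\}_{\p}\bigr)$; a one-step computation from \eqref{whiprod}, using the same chain-rule collapses you list, shows $C_{\p}(dF_{i,p-1})$ is exactly the same pair, and the lemma is finished. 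Your plan instead pairs with a second field $\p_{2}$ and reduces \eqref{Fcond} --- correctly, via Lemma~\ref{pairtancotan}, \eqref{nablaom}, Corollary~\ref{Cxieta} and invariance of $\circ$ --- to the Hessian recursion $\p_{1}\p_{2}F_{i,p}-(\nabla_{\p_{1}}\p_{2})F_{i,p}=(\p_{1}\circ\p_{2})F_{i,p-1}$, the invariant form of \eqref{princon3}. This equivalent formulation is legitimate, but finishing it costs more than the paper's proof: the right-hand side needs an explicit $\p_{1}\circ\p_{2}$, hence the formula for $\eta^{-1}(\p_{2})$ from Lemma~\ref{whiinveta}, machinery the paper's argument never touches; and the reconciliation you only assert (``reconciled by integrating by parts over $\gamma_{i}$'') must actually be carried out.

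That assertion is, in fact, true, so your route would not fail --- it is merely unexecuted at its hardest step. Concretely, writing $G=\p_{1}\zeta\,\p_{2}\zeta/\zeta'$, your left-hand side contains $-Q_{p+1,y\hat y}\p_{1}\zeta\,\p_{2}\zeta$ together with the projection terms, and one checks
\begin{equation*}
\oint_{\gamma_{i}}\Bigl[Q_{p+1,y}\,(G')_{-}-Q_{p+1,\hat y}\,(G')_{+}\Bigr]dz
=\oint_{\gamma_{i}}\Bigl[Q_{p+1,y}\,G'-Q_{p}\,(G')_{+}\Bigr]dz ;
\end{equation*}
integrating $Q_{p+1,y}G'$ by parts and using $\zeta'=a'-\hat a'$ converts the sum of this with $-\oint_{\gamma_{i}}Q_{p+1,y\hat y}\p_{1}\zeta\,\p_{2}\zeta\,dz$ into $-\oint_{\gamma_{i}}\bigl[Q_{p,y}\,a'G+Q_{p}(G')_{+}\bigr]dz$, so the mixed derivative is absorbed exactly as you predicted, leaving an expression in $Q_{p,y},Q_{p,\hat y},Q_{p}$ alone to be matched against the expansion of $(\p_{1}\circ\p_{2})F_{i,p-1}$. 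In short: your reduction is valid and your anticipated obstacle is surmountable, but the paper's direct vector-valued comparison dissolves that obstacle at the outset; if you pursue your version, the remaining matching is a page of computation rather than the two short identities the paper needs.
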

\begin{proof}
For convenience, let us use notations like
\[
\frac{\p Q_{p}}{\p a}=\frac{\p Q_{p}}{\p y}(a(z),\hat{a}(z)), \quad  \frac{\p Q_{p}}{\p \hat{a}}=\frac{\p Q_{p}}{\p \hat{y}}(a(z),\hat{a}(z)), \quad \{f,g\}_{\p}=f' \p g-g' \p f.
\]
It is easy to see
\begin{equation}\label{eq-Qaa}
	\left\{\frac{\p Q_{p}}{\p a},a\right\}_{\p}=\left\{\frac{\p Q_{p}}{\p \hat{a}},\hat{a}\right\}_{\p}=0.
\end{equation}
One has
	$$
	\left.dF_{i,p}\right|_{\vec{a}}=\left.\left(\frac{\p Q_{p+1}}{\p a}\mathbf{1}_{\gamma_{i}},\frac{\p Q_{p+1}}{\p \hat{a}}\mathbf{1}_{\gamma_{i}}\right)\right|_{\vec{a}}, \quad \vec{a}=(a(z),\hat{a}(z))\in\mathcal{M}.
	$$
Similar as in Proposition~\ref{P1} (with the bracket $\{\ ,\ \}$ replaced by  $\{\ ,\ \}_\p$), and by using the equalities \eqref{eq-Qaa}, we obtain
	$$
\eta\left(\nabla_{\p}dF_{i,p}\right)=\left(-\left\{(Q_{p}\mathbf{1}_{\gamma_{i}})_{-},a\right\}_{\p}, \left\{(Q_{p}\mathbf{1}_{\gamma_{i}})_{+},\hat{a}\right\}_{\p}\right).
	$$
On the other hand, thanks to \eqref{whiprod} we have
	\begin{align*}
		C_{\p}(dF_{i,p-1})=&\left( a'\left(\frac{\p Q_{p}}{\p a}\p a\mathbf{1}_{\gamma_{i}}+\frac{\p Q_{p}}{\p \hat{a}}\p \hat{a}\mathbf{1}_{\gamma_{i}}\right)_{-}-\p a\left(\frac{\p Q_{p}}{\p a}a'\mathbf{1}_{\gamma_{i}}+\frac{\p Q_{p}}{\p \hat{a}} \hat{a}'\mathbf{1}_{\gamma_{i}}\right)_{-} \right.,\\
		&\left. -\hat{a}'\left(\frac{\p Q_{p}}{\p a}\p a\mathbf{1}_{\gamma_{i}}+\frac{\p Q_{p}}{\p \hat{a}}\p \hat{a}\mathbf{1}_{\gamma_{i}}\right)_{+}+\p \hat{a}\left(\frac{\p Q_{p}}{\p a}a'\mathbf{1}_{\gamma_{i}}+\frac{\p Q_{p}}{\p \hat{a}} \hat{a}'\mathbf{1}_{\gamma_{i}}\right)_{+}\right)\\
		=&\left(-\{\left(Q_{p}\mathbf{1}_{\gamma_{i}}\right)_{-},a\}_{\p}, \{\left(Q_{p}\mathbf{1}_{\gamma_{i}}\right)_{+},\hat{a}\}_{\p}\right).
	\end{align*}
Thus the equality \eqref{Fcond} is verified. Therefore, the lemma is proved.
\end{proof}

\begin{proof}[Proof of Theorem \ref{mainthm2}]
Firstly, the equalities \eqref{princon1u} are verified with the help of Proposition~\ref{etaflat}.

Secondly, the Hamiltonian density \( \theta_{t_{i,s},p} \), with \( s \neq -d_{i} \), takes the form given in \eqref{Fdef} and therefore satisfies equation \eqref{Fcond}, which is equivalent to \eqref{princon2u}. For \( \theta_{t_{i,-d_{i}},p} \), consider a subset \( \mathcal{M}' \) of \( \mathcal{M} \) such that \( a^{\frac{1}{n_{0}}}(z) \) is analytically continued to \( \cup_{s=1}^{m} \gamma_{s} \), with winding number 1 around \( \gamma_{i} \) but 0 around \( \gamma_{s} \) (\( s \neq i \)). Consequently, on the subset $\mathcal{M}'$ one has
	\begin{align*}
\theta_{t_{i,-d_{i}},p}=&-\frac{d_{i}c_{p}}{n_{0}}\res_{z=\infty}\frac{a^{p}}{p!}dz-\frac{d_{i}}{2\pi\mathrm{i}} \oint_{\gamma_{i}}\frac{a^{p}}{p!}\log \frac{a^{ {1}/{n_{0}}}}{ \zeta^{{1}/{d_{i}}} } dz-\frac{d_{i}}{2\pi\mathrm{i}}\dsum_{s\ne i}\oint_{\gamma_{s}}\frac{a^{p}}{p!}\log a^{\frac{1}{n_{0}}}dz\\
	=&-\frac{d_{i}}{2\pi\mathrm{i}}\oint_{\gamma_{i}}\frac{a^{p}}{p!}\left(\log \frac{a^{ {1}/{n_{0}}}}{ \zeta^{{1}/{d_{i}}} } -\frac{c_{p}}{n_{0}}\right)dz -\frac{d_{i}}{2\pi\mathrm{i}}\dsum_{s\ne i}\oint_{\gamma_{s}}\frac{a^{p}}{p!}\left(\log a^{\frac{1}{n_{0}}}-\frac{c_{p}}{n_{0}}\right)dz,
\end{align*}
which is a linear combination of functions of the form \eqref{Fdef}, hence it satisfies \eqref{princon1} and \eqref{princon2u}. Moreover, since \( \theta_{t_{i,-d_{i}},p} \) is analytic on $\mathcal{M}$, then the equality  \eqref{princon2u} in fact holds on $\mathcal{M}$ by property of analytic functions. The other cases of \eqref{princon2u} can be verified with a similar method.
%

In order to show \eqref{princon3u}, based on $E$ given in \eqref{E}, let us introduce
\[
\mathcal{E} = E + \frac{1}{n_{0}} z \frac{\partial}{\partial z}.
\]
It is easy to see:
\[
\operatorname{Lie}_{\mathcal{E}} \zeta(z) = \zeta(z), \quad \operatorname{Lie}_{\mathcal{E}} a(z) = a(z), \quad \operatorname{Lie}_{\mathcal{E}} \hat{a}(z) = \hat{a}(z), \label{homo}
\]
and that, for any analytic function \( f(y, \hat{y}) \),
$$
\operatorname{Lie}_E \oint_{\gamma_{s}} f(a(z), \hat{a}(z)) d z=\oint_{\gamma_{s}} \operatorname{Lie}_{\mathcal{E}} f(a(z), \hat{a}(z)) d z + \frac{1}{n_{0}} \oint_{\gamma_{s}} f(a(z), \hat{a}(z)) d z,
$$
where $ s=1,\cdots,m$.
Using these equalities, it is straightforward to compute:
{\small
\begin{align*}
	\operatorname{Lie}_{E}\theta_{t_{i,s},p}=&\left(p+1+\frac{s}{d_{i}}+\frac{1}{n_{0}}\right)\theta_{t_{i,s},p},\quad s\ne -d_{i};
\\
\operatorname{Lie}_{E}\theta_{t_{i,-d_{i}},p}=&\left(p+\frac{1}{n_{0}}\right)\theta_{t_{i,-d_{i}},p} +\left(1-\frac{d_{i}}{n_{0}}\right)\left(\theta_{t_{i,0},p-1}+\theta_{h_{i,0},p-1}\right)
\\
&-\dsum_{i'\ne i}\frac{d_{i}}{n_{0}}\left(\theta_{t_{i',0},p-1}+\theta_{h_{i',0},p-1}\right);
\\
	\operatorname{Lie}_{E}\theta_{h_{0,j},p}=&\left(p+1-\frac{j}{n_{0}}+\frac{1}{n_{0}}\right)\theta_{h_{0,j},p};
\\
	\operatorname{Lie}_{E}\theta_{h_{k,r},p}=&\left(p+1-\frac{r}{n_{k}}+\frac{1}{n_{0}}\right)\theta_{h_{k,r},p},\quad r\ne n_{k};
\\
\operatorname{Lie}_{E}\theta_{h_{k,n_{k}},p}=&\left(p+\frac{1}{n_{0}}\right)\theta_{h_{k,n_{k}},p} +\left(\frac{n_{k}}{n_{0}}+1\right)\theta_{h_{k,0},p-1} +\left(\frac{n_{k}}{n_{0}}-\frac{n_{k}}{d_{k}}\right)\theta_{t_{k,0},p-1} \\
&+\dsum_{k'\ne k}\frac{n_{k}}{n_{0}}\left(\theta_{t_{k',0},p-1}+\theta_{h_{k',0},p-1}\right).
\end{align*}}
Thus the equalities \eqref{princon3u} are confirmed.
The theorem is proved.
\end{proof}

\subsection{Principal hierarchy and the genus-zero Whitham hierarchy}

In combination of the data of the previous two subsections, now we are ready to compute the principal hierarchy for $\mathcal{M}$:
 \[
 \frac{\partial\vec{a}}{\partial T^{u,p}} = \mathcal{P}_1 (d\theta_{u,p+1}), \quad u\in\mathbf{u}, ~~ p\ge0,
 \]
and then study its relationship with the genus-zero Whitham hierarchy. In other words, let us proceed to prove Theorem~\ref{mainthm3} and Corollary~\ref{prinwhi}.

 \begin{proof}[Proof of Theorem \ref{mainthm3}]
According to the definition of the Hamiltonian densities \( \theta_{h_{0,j},p+1} \), one has, for $\vec{a}=(a,\hat{a})\in\mathcal{M}$, 
 	$$
 	d\theta_{h_{0,j},p+1}=(K_{h_{0,j},p},0)|_{\vec{a}}, \quad K_{h_{0,j},p}=\frac{\Gamma\left(1-\frac{j}{n_{0}}\right)}{\Gamma\left(2+p-\frac{j}{n_{0}}\right)} a^{1+p-\frac{j}{n_{0}}}.
 	$$
Hence, by using \eqref{whiham} we obtain
 	\begin{align*}
 		\mathcal{P}_1( d\theta_{h_{0,j},p+1})=&\left(\{(K_{h_{0,j},p})_{+},a\},\{(K_{h_{0,j},p})_{+},\hat{a}\}\right)\\
 =& \frac{\Gamma\left(1-\frac{j}{n_{0}}\right)}{\Gamma\left(2+p-\frac{j}{n_{0}}\right)} \left(\{(a^{1+p-\frac{j}{n_{0}}})_{\infty,\,\ge 0},a\},\{(a^{1+p-\frac{j}{n_{0}}})_{\infty,\,\ge 0},\hat{a}\}\right).
 	\end{align*}
%
%
With the same method, for the Hamiltonian density \( \theta_{h_{k,r},p+1} \) with \( r \neq n_{k} \), one has
 	\[
 	d\theta_{h_{k,r},p+1} = (0, K_{h_{k,r},p}\mathbf{1}_{\gamma_{k}})|_{\vec{a}}, \quad K_{h_{k,r},p} = \frac{\Gamma\left(1 - \frac{r}{n_{k}}\right)}{\Gamma\left(2 + p - \frac{r}{n_{k}}\right)} \hat{a}^{1 + p - \frac{r}{n_{k}}}.
 	\]
Hence
 	\begin{align*}
 		\frac{\Gamma\left(2 + p - \frac{r}{n_{k}}\right)}{\Gamma\left(1 - \frac{r}{n_{k}}\right)} \mathcal{P}_1(d\theta_{h_{k,r},p+1}) =& \left( \{-( \hat{a}^{1 + p - \frac{r}{n_{k}}} \mathbf{1}_{\gamma_{k}})_{-}, a\}, \{-( \hat{a}^{1 + p - \frac{r}{n_{k}}}\mathbf{1}_{\gamma_{k}})_{-}, \hat{a}\} \right) \\
 		=& \left( \{-( \hat{a}^{1 + p - \frac{r}{n_{k}}})_{\varphi_{k}, \leq -1}, a\}, \{-( \hat{a}^{1 + p - \frac{r}{n_{k}}})_{\varphi_{k}, \leq -1}, \hat{a}\} \right).
 	\end{align*}
 The other cases are similar. Thus the theorem is proved.
\end{proof}

\begin{proof}[Proof of Corollary~\ref{prinwhi}]

Note that under the assumption \eqref{assum}, the genus-zero Whitham hierarchy  \eqref{whi1}-\eqref{whi3} is recast to:
 \begin{align*}
 	&\frac{\partial a(z)}{\partial s^{0,p}} = \left\{\left(a(z)^{\frac{p}{n_{0}}}\right)_{\infty,\,\ge 0}, a(z)\right\}, \quad \frac{\partial \hat{a}(z)}{\partial s^{0,p}} = \left\{\left(a(z)^{\frac{p}{n_{0}}}\right)_{\infty,\,\ge 0}, \hat{a}(z)\right\}, \\
 	&\frac{\partial a(z)}{\partial s^{j,p}} = \left\{-\left(\hat{a}(z)^{\frac{p}{n_{j}}}\right)_{\varphi_{j},\,\le -1}, a(z)\right\}, \quad \frac{\partial \hat{a}(z)}{\partial s^{j,p}} = \left\{-\left(\hat{a}(z)^{\frac{p}{n_{j}}}\right)_{\varphi_{j},\,\le -1}, \hat{a}(z)\right\}, \\
 	&\frac{\partial a(z)}{\partial s^{j,0}} = \left\{\log(z - \varphi_{j}), a(z)\right\}, \quad \frac{\partial \hat{a}(z)}{\partial s^{j,0}} = \left\{\log(z - \varphi_{j}), \hat{a}(z)\right\},
 \end{align*}
 where  $j\in\{1,\cdots,m\}$ and $p\in\mathbb{Z}_{>0}$. Comparing these evolutionary equations with those in Theorem~\ref{mainthm3}, we obtain
\begin{align*}
	&\frac{\p  }{\p s^{0,(p+1)n_{0}-j}}=\frac{\Gamma(2+p-\frac{j}{n_{0}})}{\Gamma(1-\frac{j}{n_{0}})}\frac{\p}{\p T^{h_{0,j},p}}, \quad j=1,2,\dots,n_0-1; \\
	&\frac{\p}{\p s^{i,(p+1)n_{i}-r}}=\frac{\Gamma(2+p-\frac{r}{n_{i}})} {\Gamma(1-\frac{r}{n_{i}})}\frac{\p}{\p T^{h_{i,r},p}}, \quad r=0,1,\dots,n_i-1; \\
	&\frac{\p}{\p s^{0,(p+1)n_{0}}}=(p+1)!\dsum_{k=1}^{m}\left(\frac{\p}{\p T^{t_{k,0},p}}+\frac{\p}{\p T^{h_{k,0},p}}\right),
\\
	&\frac{\p}{\p s^{i,0}}=\frac{1}{n_{i}}\frac{\p}{\p T^{h_{i,0},0}},
\end{align*}
where $i\in\{1,2,\dots,m\}$ and $p\in\mathbb{Z}_{\ge0}$. Therefore the corollary is proved.
 \end{proof}

\subsection{Reduction to finite-dimensional cases} \label{McKP}

Recall $\zeta(z)$ and $\ell(z)$ given in \eqref{zetaell}. Now suppose that $\zeta(z)\to 0$, and that
\begin{equation}\label{ell}
\ell(z)= 
z^{n_0} + a_{n_0-2}z^{n_0-2} + \cdots + a_1 z + a_{0} + \sum_{i=1}^{m}\sum_{j=1}^{n_i}b_{i,j}(z-\varphi_{i})^{-j}.
\end{equation}
Rational functions of this form compose a space denoted by \(  M^{cKP} \), for which a system of coordinates reads
\[
\left\{ a_k,\, b_{i,j_i}, \, \varphi_i \mid 0\le k\le n_0-2, \, 1\le i\le m, \, 1\le j_i\le n_i\right\},
\]
and another system of coordinates is given by $\mathbf{h}$ in \eqref{flath}.

Letting $\zeta(z)\to 0$, the infinite-dimensional Frobenius manifold $\mathcal{M}$ is reduced to a Frobenius manifold \(  M^{cKP} \) of finite dimension. More exactly, on \(  M^{cKP} \)  the flat metric reduced from \eqref{whimetric} is
 \[
\langle \p', \p'' \rangle_{\eta} = \sum_{|\lambda|<\infty} \res_{d\lambda=0} \frac{\p'(\lambda(z)dz) \p''(\lambda(z)dz)}{d\lambda(z)},
\]
and the symmetric $3$-tensor induced by that in Proposition~\ref{3tensorflat} reads
\[
c(\p', \p'', \p''') := \sum_{|\lambda|<\infty} \res_{d\lambda=0} \frac{\p'(\lambda(z)dz) \p''(\lambda(z)dz) \p'''(\lambda(z)dz)}{d\lambda(z)dz}.
\]
where \( \p', \p'', \p''' \in  \operatorname{Vect}(M^{cKP}) \).
Accordingly, the product between two vector field can be defined by
\[
\eta(\p' \circ \p'', \p''')=c(\p', \p'', \p''').
\]
Moreover, the unit vector field and the Euler vector field on $M^{cKP}$ are given by \eqref{whiid} and by \eqref{E}, respectively, with the $\p/\p t_{i,s}$-terms omitted.

Note that the Frobenius manifold $M^{cKP}$ appears   in \cite{dub1998} under a general setting for Frobenius manifolds on Hurwitz spaces. In particular, for the case $m=1$, a more concrete construction was given by Aoyama and Kodama \cite{aoyama1996topological}, who also clarified its relationship with the dispersionless limit of the constrained KP (cKP) hierarchy.

For the Frobenius manifold \( M^{cKP} \) with $m\ge2$, its principal hierarchy takes the form \cite{dub1998,ma2023principal2} (up to scaling of the flows):
{\small	\begin{align*}
	\frac{\partial \ell(z)}{\partial \tilde{T}^{{0,j};p-1}} =& \{(c_{0,j;p-1} w_{0}(z)^{p n_{0} - j})_{\infty, \geq 0}, \ell(z)\}, \quad j \in\{ 1, \ldots, n_{0} - 1\};
\\
	\frac{\partial \ell(z)}{\partial \tilde{T}{{i,j};p-1}} =& -\{(c_{i,j;p-1} w_{i}(z)^{p n_{i} - j})_{\varphi_i, \leq -1}, \ell(z)\}, \quad i \in\{ 1, \ldots, m\}, \ j \in\{ 0, \ldots, n_{i} - 1\};
\\
	\frac{\partial \ell(z)}{\partial \tilde{T}^{{i,n_{i}};p-1}} =& \bigg\{\left(\frac{\ell(z)^{p}}{p!} (\log \frac{w_{0}(z)}{z - \varphi_i} - \frac{c_{p}}{n_{0}})\right)_{\infty, \geq 0}
-\left(\frac{\ell(z)^{p}}{p!} (\log(z - \varphi_i) w_{i}(z) - \frac{c_{p}}{n_{i}})\right)_{\varphi_i, \leq -1}, \ell(z) \bigg\}\\
	& - \sum_{s \neq i} \left\{\left(\frac{\ell(z)^{p}}{p!} \log(z - \varphi_i)\right)_{a_{s,0}, \leq -1}, \ell(z)\right\} + \left\{\frac{\ell(z)^{p}}{p!} \log(z - \varphi_i), \ell(z)\right\},
\end{align*}
}
	where
\[
w_{i}(z) =
\begin{cases}
	(\ell(z))^{\frac{1}{n_0}} = z + w_{0,1}z^{-1} + \cdots ~~ \text{as} ~~ z \to \infty,\quad &i = 0; \\
	(\ell(z))^{\frac{1}{n_i}} = w_{i,1}(z-\varphi_i)^{-1} + \cdots ~~ \text{as} ~~ z \to \varphi_i,\quad &i = 1, \cdots, m,
\end{cases}
\]
with constants 
\[
c_{i,j;p} = \frac{n_{i}}{n_{i} - j} \frac{n_{i}}{2n_{i} - j} \cdots \frac{n_{i}}{(p + 1)n_{i} - j}, \quad c_{p} = \sum_{s = 1}^{p} \frac{1}{s}.\]
One observes that the flows $\frac{\p}{\p \tilde{T}^{{i,j;p}}}$ are just reductions of  $\frac{\p}{\p T^{h_{i,j;p}}}$ in the principal hierarchy for $\mathcal{\mathcal{M}}$.


\section{Conclusion}

In this paper, we have constructed a class of infinite-dimensional Frobenius manifolds on the spaces of pairs of meromorphic functions defined on certain regions of the Riemann sphere. This is a rather nontrivial generalization of the previous results obtained in \cite{ma2021infinite}, from the case of two marked points to that of $m+1$ marked points on the Riemann sphere. Moreover, for each of the infinite-dimensional Frobenius manifolds, we have derived its principal hierarchy, which is shown to be an extension of the genus-zero Whitham hierarchy. We hope that these results will be helpful not only to understand infinite-dimensional Frobenius manifolds, but also to study the dispersive Whitham hierarchies.
%
%

\medskip
{\small
\noindent{\bf Acknowledgements}.
The author C.-Z.Wu is partially supported by National Key R\&D Program of China 2023YFA1009801 and NSFC (No.12471243), and the author D.Zuo is
 partially supported by NSFC (No.12471241).
}
\bibliographystyle{unsrt}

\end{document}